\newcommand{\hide}[1]{}
\newcommand{\pallavi}[1]{{\color{blue}{Pallavi says: #1}}}
\newcommand{\shubham}[1]{{\color{red}{Shubham says: #1}}}
\newcommand{\mefem}{\textsc{MEFE}\xspace}
\newcommand{\dmatching}{\textsc{$3$-dimensional Perfect Matching}\xspace}
 \newcommand{\dmatchingshort}{\textsc{$3$D-PM}\xspace}
  \newcommand{\tcm}{\textsc{(3-3)-com smti}\xspace}
\newcommand{\ssmtiffull}{\textsc{Weighted Strongly Stable Matching with Ties and Incomplete Lists}\xspace}
\newcommand{\ssmtif}{\textsc{WSSMTI}\xspace}
\newcommand{\ssmtifull}{\textsc{Strongly Stable Matching with Ties and Incomplete Lists}\xspace}
\newcommand{\ssmti}{\textsc{SSMTI}\xspace}
\newcommand{\mefe}{\textsc{MEFE-Matching}\xspace}
\newcommand{\hrpfull}{\textsc{Hospitals Residents problem}\xspace}
\newcommand{\hrp}{\textsc{HR}\xspace}
\newtheorem{corollary}{Corollary}
\newtheorem{lemma}{Lemma}
 \newtheorem{reduction rule}{Reduction Rule}
\newcommand{\nph}{\textsf{NP-hard}\xspace}
\newcommand{\npc}{\textsf{NP-complete}\xspace}
\newcommand{\fpt}{\textsf{FPT}\xspace}
\newcommand{\OO}{\mathcal{O}}
\newtheorem{theorem}{Theorem}
\title{Fairness and Efficiency in Two-Sided Matching Markets}
\author{
Pallavi Jain, Palash Jha, and Shubham Solanki\\
Indian Institute of Technology Jodhpur\\
\{pallavi,jha.16,solanki.4\}@iitj.ac.in
}
\date{\vspace{-2em}}
\begin{document}

\maketitle

\begin{abstract}
    We propose a new fairness notion, motivated by the practical challenge of allocating teaching assistants (TAs) to courses in a department. Each course requires a certain number of TAs and each TA has preferences over the courses they want to assist. Similarly, each course instructor has preferences over the TAs who applied for their course. We demand fairness and efficiency for both sides separately, giving rise to the following criteria: (i) every course gets the required number of TAs and the average utility of the assigned TAs meets a threshold; (ii) the allocation of courses to TAs is envy-free, where a TA envies another TA if the former prefers the latter's course and has a higher or equal grade in that course. Note that the definition of envy-freeness here differs from the one in the literature, and we call it merit-based envy-freeness.
    We show that the problem of finding a merit-based envy-free and efficient matching is NP-hard even for very restricted settings, such as two courses and uniform valuations; constant degree, constant capacity of TAs for every course, valuations in the range $\{0,1,2,3\}$, identical valuations from TAs, and even more. To find tractable results, we consider some restricted instances, such as, strict valuation of TAs for courses, the difference between the number of positively valued TAs for a course and the capacity, the number of positively valued TAs/courses, types of valuation functions, and obtained some polynomial-time solvable cases, showing the contrast with intractable results. We further studied the problem in the paradigm of parameterized algorithms and designed some exact and approximation algorithms. 
\end{abstract}

\section{Introduction}

Assigning teaching assistants (TAs) to courses is a practical challenge faced by many academic departments. The allocation should be fair for both instructors and TAs,  taking into account their preferences and qualifications.  
For example, if a TA is willing to assist in a systems course, it is not desirable to assign them a theory course. On the contrary, it may also happen that a TA might like to assist in a course, however he does not have enough knowledge in that course. Considering some of these critical requirements of instructors and TAs, our department\footnote{we skip the name in this version in the interest of anonymity.} follows the following protocol: we ask TAs to submit their preferences over the courses they would like to assist with and their grades in those courses, and then instructors are asked to submit their preferences over the TAs who are interested in their courses. We try to ensure that every course gets the required number of TAs (at least some of the choice of instructor) and prioritise the choice of the TAs with higher grades. However, finding a satisfactory allocation that meets the requirements of both sides is not trivial. In this paper, we address this problem from a theoretical perspective, using tools from algorithms and computational social choice theory.

\paragraph{\bf Mathematical Formulation.}
We have a set of courses $X=\{x_1,\ldots,x_n\}$ and a set of TAs $T=\{t_1,\ldots,t_m\}$. Every TA has a utility for every course, i.e., for every $i\in [m]$, we have a utility function $u_i \colon X \rightarrow \mathbb{Z}_{\geq 0}$. Furthermore, every TA $t_i$ has a grade function $g_i\colon X \rightarrow \mathbb{Q}_{\geq 0}$. For simplicity, we assumed that every TA has done all the courses (grade can be considered $0$ if a TA has not credited a course). 
Every course $x_i\in X$ has a capacity $c_i$ (that captures the required number of TAs), and a utility function $v_i\colon T\rightarrow \mathbb{Z}_{\geq 0}$. If a TA $t$ has utility $0$ for the course $x_i$, then  $v_i(t)=0$, otherwise $v_i(t)>0$. This is a valid assumption, as we would like to respect the choices of a TA. A matching $\mu \colon T \rightarrow X \cup \{\emptyset\}$ is \emph{feasible} if for every course $x_i\in X$, $|\mu^{-1}(x_i)|=c_i$, i.e., $c_i$ TAs are assigned to the course $x_i$, and no TA is assigned to a zero-valued course. We call that a matching $\mu \colon T \rightarrow X \cup \{\emptyset\}$  is \emph{fair} if it is feasible and for a given $k\in \mathbb{Q}_{\geq 0}$, it satisfies the following two conditions.
\begin{itemize}
    \item {\bf Satisfaction of Courses.}\footnote{This is efficiency criteria. But, we include it in the definition of fair matching for brevity.} For every course $x_i\in X$, ${\sf AvgUtil}(x_i)=\frac{\sum_{t\in \mu^{-1}(x_i)}v_i(t)}{c_i}$ is at least $k$.
    \item {\bf Merit-based envy-freeness between TAs.} 
    A TA $t_i$ envies another TA $t_j$ on merit basis if $g_{i}(\mu(t_j))\geq g_{j}(\mu(t_j))$ and $u_i(\mu(t_j))>u_i(\mu(t_i))$, i.e., grade of $t_i$ in the course allocated to $t_j$ is at least the grade of $t_j$ in that course and $t_i$ values the course allocated to $t_j$ more than their own allocated course. In the matching $\mu$, there should not be any pair of merit-based envious TAs. 
\end{itemize}

We call this fair matching as \emph{merit-based envy-free egalitarian} (\mefem) matching, and the problem of finding such a matching as \mefe.  

\emph{Highest grade vs highest utility in a course.} It seems desirable to assume that the highest grader in a course will have the highest utility. However, this need not be the case. %
 
There might be
 other responsible factors for course utilities. For example, if a TA applicant has previously done a relevant course or project under the course instructor who trusts the TA’s expertise in the area and soft skills, the instructor would probably assign a higher utility for that TA. A TA applicant with prior experience at TAship or a recommendation from other instructors might also be assigned higher utility due to their
 experience. While TAs might have done the course previously, they may have done it in different universities. In such a case, it is hard to gauge a TA accurately just based on their grades. 
Thus, we demand utility functions from the courses separately.

\paragraph{\bf Significance of our Fairness Notions.} The deferred-acceptance algorithm (commonly-used algorithm in two-sided markets) is biased towards one side of the agents\sloppy~\cite{GS62college}, which is not desirable for our application. Thus, we move our attention to some of the well-studied fairness criterion used in the field of \emph{fair allocation}.

On the course side, we considered criteria similar to \emph{egalitarian welfare}, but we take the average for the following reason: suppose that there is a course $x$ with capacity $10$ and a course $y$ with capacity $1$. Now, if we require a threshold for satisfaction, say $10$, then the course $y$ gets a very good TA; however, course $x$ may get TAs each with utility $1$. To avoid such a predicament, we considered average. Since we require good TAs in all the courses, we think \emph{egalitarian type} criteria is a better choice over other welfare functions such as utilitarian or Nash. We believe that envy-freeness (the most popular fairness criteria in fair allocation) is not suitable here for the following reason: what if everybody gets a bad set of TAs? Nobody may envy anybody, but this is a disastrous situation for everyone. We also believe that instructors care more about their TAs instead of comparing them with the TAs of other courses.

On the TAs side, we considered \emph{envy-freeness type} fairness. As discussed earlier, it is important to consider grades as well. Our example in Figure~\ref{fig:example2} demonstrates that utility from the course side alone is not enough to prioritise the preferences of meritorious TAs. Indeed if a matching is envy-free, then it is also merit-based envy-free. However, it is possible that there exists a merit-based envy-free matching even if envy-free matching does not exist. Consider a classical example of non-existence of envy-free solution, when there is only one course $x$ with capacity $1$, and two TAs $t,t'$, both value $x$ positively. This does not have envy-free solution (irrespective of actual valuations). But, if $x$ values $t$ (say $\geq k$) more than $t'$ and the grade of $t$ is more than $t'$, then we match $t$ with $x$, and it is a merit-based envy-free solution.  As discussed earlier, conceptually also, it is crucial to consider grades.
\begin{figure}[]
    \centering

\begin{tikzpicture}

\node[draw] at (-1.5,4.7) {
($t_1,t_2,t_3$)};

\node[draw] at (3.4,4.7) {
($c_1,c_2,c_3$)};

\Vertex[x=0,y=4,size=0.2,label = ${(9,8,7)} \:  c_1$,fontscale =1.5,position = left]{l1};
\Vertex[x=0,y=3,size=0.2,label = ${(8,7,9)} \: c_2$,fontscale =1.5,position = left]{l2};
\Vertex[x=0,y=2,size=0.2,label = ${(7,7,7)} \: c_3$,fontscale =1.5,position = left]{l3};

\Vertex[x=2,y=4,size=0.2,label = $t_1 \: {(9,8,8)}$,fontscale =1.5,position = right]{r1};
\Vertex[x=2,y=3,size=0.2,label = $t_2 \: {(8,8,8)}$,fontscale =1.5,position = right]{r2};
\Vertex[x=2,y=2,size=0.2,label = $t_3 \: {(8,8,8)}$,fontscale =1.5,position = right]{r3};

\Edge(l3)(r3)
\Edge(l1)(r2)
\Edge(l2)(r1)

\end{tikzpicture}

    \caption{An instance to show unfairness on the TA's side if we consider grades as utilities from the course side and our satisfaction criteria of courses alone for a fair matching. Here the tuple $(p,q,r)$ for a course $c_i$ (or TA $t_i$), denotes the valuation for TAs $t_1,t_2,t_3$ (courses $c_1,c_2,c_3$), respectively. Edges denote a matching for $k=7$. TA $t_1$ has a higher grade in $c_1$ than TA $t_2$ and also values $c_1$ more than $c_2$. Thus, this matching is unfair for $t_1$.}
    \label{fig:example2}
\end{figure}
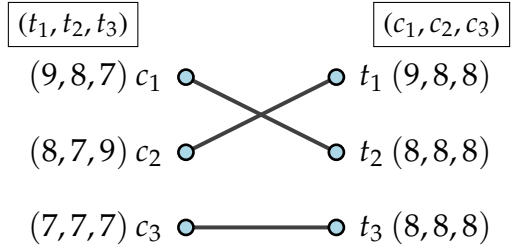

Since grades need to be considered, egalitarian criteria is not interesting for TAs.

Interestingly, \mefem matching is also a weakly stable matching when course satisfaction is not under consideration or utility of course $x$ for TA $t$ is same as the grade of $t$ in the course $x$, however the converse need not be true. 
For contradiction, we assume that the \mefe solution is not weakly stable. Let the blocking pair be $(x_i , t_j)$. Then $t_j$ prefers $x_i$ over $\mu(t_j)$ and has a higher grade in $x_i$ then $\mu^{-1}(x_i)$. Hence $t_j$ envies $\mu^{-1}(x_i)$, a contradiction. The converse may not be true. For example, consider an instance with 1 course and 2 TAs both having the same grade in the course, with the course having capacity 1. Any assignment will be weakly stable as the course is indifferent between the two TAs but the unassigned TA will always envy the assigned TA. Additionally, if the course values both TAs less than $k$, no assignment would lead to satisfaction of the course.

Indeed, there are several notions of fairness in the fair allocation literature, which will be interesting to consider, and some are already considered (see Related Work).  

\paragraph{\bf Our Contribution.} 

Our model considers a fairness criterion for TAs and an efficiency criterion for courses. This is motivated by the fact that the requirements of both sides in matching under two-sided preferences need not be the same. This is our first conceptual contribution. 

Next, we move to our technical contribution. Table~\ref{table:our results} summarises our complexity results. Some of the technical highlights of our work are discussed
below. Throughout the paper, by \emph{degree} of a course, we mean the number of TAs positively valued by the course. Similarly, we define the degree of a TA. 

\paragraph{\bf Existence of Solution.} In Section~\ref{sec:existence}, we identify two yes-instances and some no-instances of the problem.

\paragraph{\bf{Hardness Results}.} Unsurprisingly, similar to many problems in Fair Allocation, the problem is \npc for two courses ({\bf Theorem ~\ref{thm:nph-2courses}}), due to a reduction from the {\sc Equal-Cardinality Partition} problem. In this reduction, the degree of courses is “large”—that is, each course positively values many TAs—while each TA positively values only two courses (has degree 2). Furthermore both the courses have same valuation functions and all the TAs equally value all the courses, in fact, their grade is also same in all the courses. We next ask the question: \emph{does the large degree of courses leads to intractability?} Surprisingly, the problem is \npc even when the degree of courses is three ({\bf Theorem~\ref{thm:nph-cap1}}). In the same reduction, the degree of TAs is also constant, in particular, three. In fact, the capacity of all the courses is one and each course values all positively valued TAs equally (refer to the theorem statement for more restrictions). In light of Theorem~\ref{thm:nph-cap1}, we next ask: \emph{do the different valuations of TAs lead to intractability}? {\bf Theorem~\ref{thm:nph-cap2}} answers this question negatively. The next set of results, highlights more about the inherent nature of instance that leads to intractability. 

\paragraph{\bf{Polynomial Time Algorithms}} We observe that the large degree of courses is not responsible for the intractability; rather, the difference between degree and capacity is one of the responsible factors. In Theorem~\ref{thm:nph-cap1}, the difference between degree and capacity is two. In {\bf Theorem~\ref{degree-cap1}}, we show that the problem is polynomial-time solvable when this difference is at most one (the actual value of degree and capacity does not matter). This result also resolves the complexity when the degree of courses is two ({\bf Corollary~\ref{cor:deg2}}), contrasting Theorem~\ref{thm:nph-cap1}. We further observed that when the capacity of each course is one, then the same valuations of a TA for multiple courses lead to intractability, which is easy to avoid, but if all TAs value positively valued courses differently, then it leads to tractability ({\bf Theorem~\ref{poly-cap1}}). Due to Theorem~\ref{thm:nph-2courses}, we have hardness when the number of courses is two and the degree of TAs is two. When we relax one of these constraints, i.e., the degree of TAs is one or there is only one course, it leads to tractability ({\bf Theorem \ref{thm:poly-degree1} and Corollary~\ref{cor:deg1}}). In all our hardness results, either the capacity of courses is constant, or the number of courses is constant. When both are constant, then it leads to a tractability result ({\bf Theorem~\ref{poly-constant-course-cap}}). Contrasting Theorem~\ref{thm:nph-cap2}, {\bf Theorem~\ref{thm:2val}} shows that the problem is polynomial-time solvable when every course has at most two distinct positive valuations for all TAs, but, here we require that for a course no two TAs have the same grade (which is not a very strict restriction if we consider actual marks). 

\paragraph{{\bf Parameterized (Approximation) Algorithms.}} \mefe has some natural parameters to consider: the number of TAs ($m$), then number of courses ($n$), the maximum degree of courses/TAs (${\sf d_{course}, d_{TA}}$), maximum capacity of a course (${\sf cap}$), the number of distinct valuation/grade functions (${\sf type_{val}}, {\sf type_{grade}}$), maximum value that a valuation/grade functions can take (${\sf max_{val}}, {\sf max_{grade}}$). {\bf Theorem~\ref{thm:FPTm}} shows positive results with respect to $m$. Due to Theorem~\ref{thm:nph-cap1}, we have {\sf paraNP-hardness} with respect to ${\sf d_{course}}+d_{TA}+{\sf type_{val}}+{\sf type_{grade}}+{\sf max_{val}}+{\sf max_{grade}}+{\sf cap}$. Thus, we cannot hope for a \emph{fixed-parameter tractable}(\fpt) algorithm with respect to the combination of these parameters. We design an \fpt algorithm with respect to $n$ when ${\sf type_{val}}$ and ${\sf cap}$ are constant for courses, no TA values two courses equally and no course has same grade for two TAs ({\bf Theorem~\ref{thm:FPTn}}). When we relax the constraint of constant ${\sf type_{val}}$ (i.e., it is no longer constant) and ${\sf max_{val}}$ is a function of $n$, then we obtain a $(1-\epsilon)$-approximation algorithm that approximates satisfaction of every course and runs in $\fpt(n,\epsilon)$ ({\bf Theorem~\ref{thm:fpt-apxn}}). Note that due to Theorem~\ref{thm:nph-2courses},  we cannot hope for an \fpt algorithm with respect to $n+{\sf degree_{TA}+ type_{val}+type_{grade}}$

\begin{table*}
\begin{center}
\resizebox{\textwidth}{!}{%
\begin{tabular}{ |c|c|c|c|c|c| } 
 \hline
  $\#$ courses & Capacity & Degree & Types of valuations& Result & Theorem \\ 
\hline

\rowcolor{green!10} 2 & - & of TA = 2 & by TA = 1 & NP-hard&[\ref{thm:nph-2courses}]\\

 \hline
\rowcolor{green!10} - & 1 & of TA / course $\leq 3$ & by course = 1  & NP-hard& [\ref{thm:nph-cap1}]\\

 \hline
\rowcolor{red!10} - & 1 & - & -  & polynomial& [\ref{poly-cap1}]\\

\hline
\rowcolor{green!10} - & 2 & of TA $\leq 3$, course $\leq 6$ & by course = 3,by TA = 1  & NP-hard&[\ref{thm:nph-cap2}] \\ 
  
  \hline
\rowcolor{green!10} - & c & of course $\leq c+1$ & - & polynomial& [\ref{degree-cap1}]\\

\hline
\rowcolor{green!10} - & - & of TA = 1 & - & polynomial& [\ref{thm:poly-degree1}]\\ 

 \hline
\rowcolor{red!10} - & - & - &  by course = 2 & \textcolor{blue}{polynomial}& [\ref{thm:2val}]\\ 

  \hline
\rowcolor{green!10}  constant & constant & - & - &  polynomial&[\ref{poly-constant-course-cap}]\\ 
 
  \hline
\rowcolor{red!10} - & constant & - & by course = constant & \textcolor{blue}{$\fpt(n)$}&[\ref{thm:FPTn}]\\ 
  \hline
\rowcolor{green!10} - & - & - & - & $\fpt(m)$ &[\ref{thm:FPTm}]\\ 
  \hline
\rowcolor{red!10} -& constant & - & - & \textcolor{blue}{$(1-\epsilon)$- approximation in $\fpt(n,\epsilon)$}&[\ref{thm:fpt-apxn}]\\ 
  \hline

\end{tabular}%
}
\caption{Summary of results where color denotes the ties in TA's preference list {\color{green!70} green} means ties are allowed, \textcolor{red!70}{red} means ties are not allowed. Text written in \textcolor{blue}{blue} denotes a condition that in a course no two TAs should have the same grade. The number of courses and TAs are denoted by $n$ and $m$, respectively. ``-'' means the values can be arbitrary.   
{\bf In the approximation result (last row), ${\sf max_{val}}$ is a function of $n$}} 
\label{table:our results}
\end{center}
\end{table*}

\paragraph{\bf Related Work.}
The literature on matching under two-sided preferences is vast.
It originated from the seminal work of \cite{GS62college} on the stable matching problem. The unfairness of the deferred-acceptance algorithm, together with the immense practical applicability of stable matchings, has generated considerable interest in developing algorithms for finding \emph{fair} stable matchings. Several fairness concepts have been studied in conjunction with stability, including \emph{minimum regret}~\cite{K97stable}, \emph{egalitarian}~\cite{MW71stable,ILG87efficient},\footnote{In the stable matching literature, the term \emph{egalitarian matching} has been used for matchings that maximize the total satisfaction of the agents by minimizing the \emph{sum} of ranks of the matched partners~\cite{MW71stable,ILG87efficient}. This objective is different from \emph{egalitarian welfare}, which maximizes the utility of the least-happy agent, which we considered.} \emph{median}~\cite{TS98geometry,STQ06many}, \emph{sex-equal}~\cite{GI89stable}, \emph{balanced}~\cite{F95stable,GRS+21balanced},  \emph{leximin}~\cite{NBN22achieving}, and Nash~\cite{jain2024maximizing}. 

For the related work on matching on one-sided preferences (a field of Fair Allocation), we refer the reader to recent surveys~\cite{DBLP:journals/ai/AmanatidisABFLMVW23,DBLP:conf/aaai/LiuLSW24,DBLP:journals/corr/abs-2307-10985}. 

Freeman et al.~\cite{DBLP:conf/ijcai/FreemanM021} consider envy-freeness up to one good and maximin share guarantee under two sided-preferences. But, in their model, the fairness criteria are the same on both sides. However, this is the first work (to the best of our knowledge) that borrows fairness notions from the fair division literature to the two-sided preferences, conceptually closer to our work. Bu et al.~\cite{bu2023fair} extended this work and considered the fairness criterion envy-freenes up to $c$-good and proportionality up to $c$-good.  They also considered the same fairness criteria on both the sides. Patro et al.~\cite{patro2020fairrec} considered envy-freeness up to one individual on one side and maximin share on the other side. Gollapudi et al.~\cite{gollapudi2020almost} also considered envy-freeness up to one individual on both the sides along with the maximum weight of the matching in case of repeated matchings.    Igarashi et al.~\cite{DBLP:conf/ijcai/0001KSS23} considered stability condition along with envy-free up to one individual for one side of the agents. Recently, Sung-Ho Cho~\cite{DBLP:conf/atal/ChoKLLL0YY24} studied the problem from a mechanism design perspective and considered envy-freeness up to $k$-peers. 
There are several other papers that study fairness in two-sided preferences (not directly related to our work)~\cite{DBLP:conf/innovations/KimKRY20,DBLP:conf/innovations/DworkHPRZ12,DBLP:conf/innovations/KarniRY22,bandyapadhyay2023proportionally,gupta2023towards}. Indeed the list is not exhaustive.

Bredereck et al.~\cite{DBLP:conf/atal/Bredereck0N18} also studied local envy-freeness in house allocation problems where each agent receives only one item. Our model generalizes local envy-freeness as follows. We create a directed graph for each course, where TA $t$ has an arc to TA $t’$ if $t$ has a grade at least as high as $t'$ in that course. We seek an envy-free allocation from the TAs' perspective that meets this condition: if course $x$ is assigned to $t$, let $Z$ be the set of courses $t$ values more than $x$. We then check the local envy-free condition for $t$ in all graphs corresponding to the courses in $Z$.

\section{Preliminaries}

Throughout the paper, $(X,T,\{v_i\}_{i\in X}, \{u_i\}_{i\in T}, \{g_i\}_{i\in T},$ $\{c_i\}_{i\in X},k)$ denotes an instance of \mefe, where $X=\{x_1,\ldots,x_n\}$ and $T=\{t_1,\ldots,t_m\}$. When referring to an element $t \in T$, without a subscript, we may use grade function $g_t : X \rightarrow \mathbb{Q}_{\geq 0}$, and utility function $u_t : X \rightarrow \mathbb{Z}_{\geq 0}$, both with subscript $t$. Similarly, when referring to a course $x \in X$, without a subscript, we may use utility function $v_x : T \rightarrow \mathbb{Z}_{\geq 0}$. By \emph{degree} of a course, we mean the number of TAs positively valued by the course. Similarly, we define the degree of a TA. We denote the degree of course/TA $z$ by $d(z)$. For a course $x\in X$, $N(x)$ is the set of all TAs positively valued by $x$. Similarly, for a TA $t\in T$, $N(t)$ is the set of all courses positively valued by $t$. We define a similar notion for $Z\subseteq X$ (or $Z\subseteq T$) as follows: $N(Z)=\cup_{z\in Z}N(z)$. 
By \emph{types of valuations} of a course/TA $z$, we mean the number of distinct positive values assigned by $z$ to TAs/courses. The functions $f\colon P\rightarrow R$ and $g\colon P\rightarrow R$ are \emph{identical}, if for all $z\in P$, $f(z)=g(z)$. A function whose range is  $\{0,1\}$ is called \emph{binary}. For $z\in X\uplus T$, we  also use the notation $u_z$ for the utility or grade function of $z$. 

Let $\mu$ be a matching. If a course (TA) is not matched to a TA (course) in $T$ ($X$), then we call it \emph{unassigned} or \emph{unsaturated} with respect to $\mu$; otherwise it is called  \emph{assigned} or \emph{saturated}. 

For an instance ${\cal I}$ of \mefe, let $G_{\cal I}=(X,T)$ be a bipartite graph such that $E(G_{\cal I})=\{(x,t)\colon u_t(x)\neq 0\}$. Note that $G_{\cal I}$ can be a disconnected graph. We skip the subscript ${\cal I}$, if the instance is clear from the context.

In parameterized algorithms, given an instance $I$ of the problem $\Pi$, and an integer $k$ (called a parameter), the goal is to design an algorithm that runs in $f(k)\cdot |I|^{\OO(1)}$ time, where $f$ is an
arbitrary computable function depending on the parameter $k$. Such algorithms are known as \fpt algorithms. If there exists an \fpt algorithm with respect to the parameter $k$ for the problem $\Pi$, we say that ``$\Pi$ can be solved in $\fpt(k)$''.
For more details on the subject, 
we refer to the textbooks~\cite{ParamAlgorithms15b,fg,downey}.

\section{Hardness Results}
In this section, we prove the intractability of the problem even for very restricted cases. 
Our first intractability result is for two courses with identical valuations, where degree of TAs is two, contrasting Theorem~\ref{thm:poly-degree1} and Corollary~\ref{cor:deg1}.
The result is due to the polynomial-time reduction from the {\sc Equal-Cardinality Partition} problem, in which given a multiset ${\cal S}$ of positive integers, the goal is to partition ${\cal S}$ into two subsets of the same
size that have the same sum. This problem is known to be \npc~\cite{DBLP:books/fm/GareyJ79}. 

\begin{restatable}{theorem}{thmnphcourses}($\clubsuit$\footnote{Proof of all the theorems/lemmas/claims marked with $\clubsuit$ can be found in the supplementary.})
    \mefe  is \npc even when there are two courses with identical valuation functions; grades and valuations from TA side are $1$.
\label{thm:nph-2courses}\end{restatable}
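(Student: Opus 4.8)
The plan is to give a polynomial-time many-one reduction from \ecp, which is \npc. Recall that an instance of \ecp is a multiset $\mathcal{S}=\{s_1,\dots,s_p\}$ of positive integers, and the question is whether $\mathcal{S}$ can be split into two sub-multisets of equal cardinality and equal sum. First I would dispose of the degenerate inputs: if $p$ is odd, or if $S:=\sum_{i=1}^{p}s_i$ is odd, the answer is trivially ``no'', so we output some fixed no-instance of \mefe; otherwise write $p=2\ell$. For membership in NP, note that a matching $\mu$ is a certificate of polynomial size, and feasibility, the inequality ${\sf AvgUtil}(x_i)\ge k$ for every course, and the non-existence of a merit-based envious pair of TAs are all checkable in polynomial time.

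Now for the construction. Create two courses $x_1,x_2$ with capacities $c_1=c_2=\ell$, and one TA $t_i$ for each $s_i\in\mathcal{S}$. Give the two courses the \emph{same} valuation function, namely $v_{x_1}(t_i)=v_{x_2}(t_i)=s_i$ for every $i\in[2\ell]$. Make every TA value, and be graded in, both courses uniformly by setting $u_{t_i}\equiv 1$ and $g_{t_i}\equiv 1$. Finally set the threshold $k=\tfrac{S}{2\ell}$; this lies in $\mathbb{Q}_{\ge 0}$ as the definition permits (and if an integral threshold is desired one may first scale every $s_i$ by $2\ell$, which keeps all course valuations positive integers and makes $k=S$). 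The construction is clearly polynomial time and respects every restriction in the statement: two courses, identical course valuations, unit grades and valuations on the TA side (so in particular each TA has degree exactly $2$), and since all valuations are positive, the ``no TA on a zero-valued course'' clause of feasibility is moot.

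For correctness I would argue that in \emph{every} matching $\mu$ of this instance no TA merit-based-envies another: since all grades equal $1$, the condition $g_{t_i}(\mu(t_j))\ge g_{t_j}(\mu(t_j))$ holds for every ordered pair, while since all TA utilities equal $1$, the condition $u_{t_i}(\mu(t_j))>u_{t_i}(\mu(t_i))$ holds for no pair. Hence merit-based envy-freeness is vacuous, and $\mu$ is a \mefem matching iff it is feasible and satisfies every course. A feasible $\mu$ assigns exactly $\ell$ TAs to each course and leaves no TA unassigned (as $c_1+c_2=2\ell=m$), so it induces a partition of $\mathcal{S}$ into blocks $B_1=\mu^{-1}(x_1)$ and $B_2=\mu^{-1}(x_2)$ of size $\ell$ each; and ${\sf AvgUtil}(x_j)\ge k$ unfolds to $\sum_{t_i\in B_j}s_i\ge\tfrac{S}{2}$ for $j\in\{1,2\}$. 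Since these two sums add to $S$, both inequalities hold iff each equals exactly $\tfrac{S}{2}$. Therefore a \mefem matching exists iff $\mathcal{S}$ admits a partition into two subsets of size $\ell$ each of sum $\tfrac{S}{2}$, i.e. iff the \ecp instance is a yes-instance.

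I do not expect a genuine obstacle here: the construction is short precisely because unit grades and unit TA-side utilities make the envy constraint collapse entirely, so the only substantive requirement left — course satisfaction at threshold $\tfrac{S}{2\ell}$ on a feasible (hence balanced and full) assignment — encodes equal-cardinality, equal-sum partition verbatim. The only points needing care are that the problem definition allows a rational (or, after scaling, integral) threshold, and that one should rule out odd $p$ and odd $S$ up front so that ``exactly $\tfrac{S}{2}$'' is meaningful.
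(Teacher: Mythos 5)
Your proposal is correct and follows essentially the same route as the paper: a reduction from \ecp with two courses of capacity $m/2$ sharing the valuation function $v(t_j)=s_j$, unit grades and unit TA-side utilities so that merit-based envy is vacuous, and a threshold forcing each side of the induced partition to sum to exactly half the total. If anything, your version is slightly more careful than the paper's, which sets $k=\frac{\sum_{s\in\mathcal{S}}s}{2}$ without dividing by the capacity as the definition of ${\sf AvgUtil}$ requires, whereas your $k=\tfrac{S}{2\ell}$ is the correctly normalized threshold.
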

\hide{
\begin{proof}
    Let ${\cal S}=\{s_1,\ldots,s_m\}$ be an instance of the {\sc Equal-Cardinality Partition} problem. We construct an instance ${\cal I}$ of \mefe as follows. We have two courses, i.e., $X= \{x_1,x_2\}$, and $m$ TAs, $T=\{t_1,\ldots,t_m\}$. We define course valuations as follows: $v_i(t_j)=s_j$, where $i\in [2],j\in [m]$. The capacity of both the courses is $\frac{m}{2}$. All the TAs give $0$ value to all the courses and they also have grade $0$ in all the courses (values do not matter in our reduction, all the valuations (grades) of TAs need to be same). Let $k=\frac{\sum_{s\in {\cal S}}s}{2}$. Next, we prove the correctness. In particular, we prove the following.

\begin{restatable}{lemma}{lem:nph-2courses-correct}
        ${\cal S}$ is a yes-instance of {\sc Equal-Cardinality Partition} if and only if ${\cal I}$ is a yes-instance of \mefe.
\label{lem:nph-2courses-correct}\end{restatable}

\begin{proof}
    In the forward direction, let $S_1,S_2$ be a solution to ${\cal S}$. We construct a matching $\mu$ for ${\cal I}$ as follows: for $t\in S_i$, where $i\in [2]$, $\mu(t)=x_i$. Since $|S_1|=|S_2|=\frac{m}{2}$, $\mu$ is a feasible matching. Since $\sum_{s\in S_1}s=\sum_{s\in S_1}s = \frac{\sum_{s\in {\cal S}s}}{2}$, due to the construction of the instance ${\cal I}$ and the matching $\mu$, ${\sf AvgUtil}(x_i)=\frac{\sum_{s\in {\cal S}}s}{2}$, where $i\in [2]$. Since all the TAs are matched and everyone has same grade for both the courses, i.e., $0$, there is no envy between any pair of TAs. 

    In the reverse direction, let $\mu$ be a solution to ${\cal I}$. Let $S_i=\mu^{-1}(x_i)$. Due to the capacity constraint, every course is matched to $\frac{m}{2}$ TAs, thus, $|S_1|=|S_2|=\frac{m}{2}$. Since $k=\frac{\sum_{s\in {\cal S}}s}{2}$, $\sum_{s\in {\cal S}_1}s= \sum_{s\in {\cal S}_2}s = \frac{\sum_{s\in {\cal S}}s}{2}$. 
\end{proof}
This completes the proof.
\end{proof}
}
Next, we show that the problem is \npc even when the degree of  courses and TAs is three, and capacity of each course is one. 
The result is due to a polynomial-time reduction from \tcm. 

\begin{restatable}{theorem}{thmnphcapone}($\clubsuit$)
    \mefe is \npc even when $v_j(t) = g_t(x_j)$ for all $t \in T, x_j \in X$, i.e., utility function is derived from grades and the capacity of each course is one; each TA has degree at most three, valuations and grades are in $\{0,1,2,3\}$, and no TA has same grade for any two courses. It is also \nph when each course has binary valuation function, and all the other constraints are same as earlier.
\label{thm:nph-cap1}\end{restatable}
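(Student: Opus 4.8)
The plan is to reduce from \tcm (the (3,3)-com-SMTI problem), a variant of stable matching with ties and incomplete lists where each preference list has length at most three, each agent appears in at most three lists, and ties are "complete" in a controlled sense; this problem is known to be \npc. Given such an instance with agents on two sides, I would build a \mefe instance in which courses play the role of one side, TAs the role of the other side, and every course has capacity one — so a feasible matching is essentially a perfect matching in the graph $G_{\cal I}$. The edge set of $G_{\cal I}$ is chosen to mirror the acceptability relation of the \tcm instance, which automatically gives degree at most three on both sides.

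The core of the construction is to encode preferences via grades and then set $v_j(t) = g_t(x_j)$. For a TA $t$ corresponding to an agent with preference list of length at most three, I would assign grades in $\{1,2,3\}$ to the (at most three) courses on $t$'s list so that a strictly more preferred course gets a strictly higher grade — this is possible precisely because lists have length $\le 3$, and it gives "no TA has the same grade for any two courses." Symmetrically, for a course $x_j$ corresponding to an agent on the other side, the value $v_j(t) = g_t(x_j)$ must reflect $x_j$'s ranking of $t$; the subtlety is that $g_t(x_j)$ is a quantity attached to $t$, so I need the \tcm instance's structure to guarantee that ordering the neighbors of each agent by "the grade the TA assigns to this course" reproduces (a weak version of) that agent's preference order. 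This is exactly where the restricted structure of \tcm — short lists, bounded occurrences, and the specific tie pattern — is needed, and handling it cleanly is the main obstacle; I expect to exploit that with list length $\le 3$ the grade alphabet $\{1,2,3\}$ is rich enough to realize all the required orderings simultaneously, and that ties in \tcm correspond to courses that happen to receive the same grade from different TAs, which is allowed. I would set the threshold $k$ so that ${\sf AvgUtil}(x_j) \ge k$ (with capacity one, this just says $v_j(\mu^{-1}(x_j)) \ge k$) forces each matched course to get a TA it finds "acceptable enough," playing the role of the stability/acceptability requirement.

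Then I would prove the equivalence. For the forward direction, given a weakly stable matching $M$ of the \tcm instance, define $\mu$ by matching $t$ to its $M$-partner; feasibility is immediate (capacity one, all matched), the satisfaction constraint holds by the choice of $k$ together with weak stability ruling out the bad low-grade assignments, and merit-based envy-freeness follows because a merit-based envy pair $(t_i, t_j)$ — with $g_{t_i}(\mu(t_j)) \ge g_{t_j}(\mu(t_j))$ and $u_{t_i}(\mu(t_j)) > u_{t_i}(\mu(t_i))$ — translates, via $v = g$ and the grade-encodes-preference setup, into a blocking pair of $M$, contradicting weak stability (this is the same implication sketched in the introduction, that \mefem matchings are weakly stable when $v_j(t) = g_t(x_j)$). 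For the reverse direction, a \mefe solution $\mu$ gives a perfect matching whose induced matching on the \tcm instance is weakly stable: any would-be blocking pair becomes either a merit-based envy pair or a violation of a course's satisfaction threshold. Finally, I would observe that the whole construction already has capacity one, degree $\le 3$ on both sides, grades and valuations in $\{0,1,2,3\}$, distinct grades per TA, and $v_j = g_t$; for the last sentence of the theorem, I would give a variant (or argue the same reduction adapts) in which each course's valuation function is binary — e.g., by a further restricting the \tcm source instance or collapsing the course-side value distinctions — while keeping all other properties, so that \mefe remains \nph with binary course valuations.
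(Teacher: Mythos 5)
Your overall plan coincides with the paper's: a reduction from \tcm (with ties restricted to one side), courses of capacity one standing for the agents of the tie-free side, $k$ chosen so that satisfaction of a course just means "matched to an acceptable TA," and the correspondence between merit-based envy pairs and blocking pairs carrying both directions of the equivalence. The binary variant is also handled the way you suggest, by flattening the course-side values to $1$ while keeping everything else.

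There is, however, one concrete gap at the heart of the construction, and it is exactly the step you flag as "the main obstacle" and then defer. You propose to assign the grades $g_t(\cdot)$ so that a TA's strictly more preferred course gets a strictly higher grade, i.e.\ to let grades encode the \emph{TA's} preference order over courses. Combined with $v_j(t)=g_t(x_j)$, this would make each course's valuation of a TA reflect that TA's opinion of the course rather than the course's ranking of the TA, and the merit condition $g_{t_i}(\mu(t_j))\geq g_{t_j}(\mu(t_j))$ would then compare two TAs' opinions of a course instead of the course's ranking of the two TAs; blocking pairs would no longer translate into envy pairs, and no amount of exploiting short lists or bounded occurrences repairs this, because the two orderings you want the single function $g_t$ to realize are genuinely independent. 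The resolution in the paper is the opposite assignment: take the variant of \tcm in which ties occur only on the women's side, let courses be the men, and define $g_w(m)=4-i$ where $i$ is the position at which \emph{man $m$} ranks $w$ in his strict list, so that $v_m(w)=g_w(m)$ automatically encodes the course's (strict) ranking of its neighbors. The TA's own preference over courses, including the ties, is encoded separately in the utility function $u_w$ (which the envy definition uses for the "prefers" part) and never interacts with the grades. Once the grades are defined from the course side in this way, your forward and reverse arguments go through essentially as you sketched them.
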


Our next intractability result is when TAs have binary valuations and grades in the range $\{0,1,2\}$, other than the constraints on degrees and capacities. The result is due to a polynomial-time reduction from the \dmatching problem. 
\begin{restatable}{theorem}{nphcaptwo}($\clubsuit$)
    \mefe is \npc even when each course has capacity two, degree six,  and  valuations  in $\{0,1,2,3\}$; 
    each TA has binary valuation function, and grades in $\{0,1,2\}$. 
\label{thm:nph-cap2}\end{restatable}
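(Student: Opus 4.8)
The plan is to reduce from \dmatchingshort, where we are given three disjoint sets $A$, $B$, $C$, each of size $q$, and a set $E \subseteq A \times B \times C$ of triples, and we must decide whether there is a perfect matching, i.e., $q$ triples that together cover every element of $A \cup B \cup C$ exactly once. This problem is \npc, and its structure matches the desired restrictions well: each course will have capacity two, which naturally models a triple being ``selected'' by bundling two of its three coordinate-elements into the course and the remaining vertex going elsewhere, or more cleanly, by making a course a ``triple-gadget'' that must absorb exactly two of its incident element-TAs.

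\paragraph{Construction.} For each triple $e = (a,b,c) \in E$ I would create a course $x_e$ with capacity $2$. For each element $z \in A \cup B \cup C$ I would create an ``element-TA'' $t_z$, and for each triple $e$ I would also create a constant number of ``filler'' TAs to pad capacities; these fillers have binary valuation and are used only to make counting work, so they keep the degree of each course at six (each triple-course is positively valued by its three element-TAs plus a bounded number of fillers). Element-TA $t_z$ positively values exactly those triple-courses $x_e$ with $z \in e$; since every element lies in at most — well, in general arbitrarily many triples, but the standard trick is to first invoke the bounded-occurrence version of \dmatchingshort (each element appears in at most three triples), which is still \npc, so that $d(t_z)\le 3$ as required, while each course $x_e$ has degree $\le 6$. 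The valuations from the course side on element-TAs are set so that the satisfaction threshold $k$ forces each triple-course to take exactly two of its three element-TAs (this is where values in $\{0,1,2,3\}$ are used: I would give element-TAs value, say, $2$ and calibrate $k$ and the filler values so that ${\sf AvgUtil}(x_e)\ge k$ forces precisely two ``real'' element-TAs, not three and not one). Grades $g_{t_z}$ of an element-TA across its incident triple-courses are placed in $\{0,1,2\}$ so that the merit-based envy condition enforces consistency: if $t_z$ is placed in a course corresponding to a triple \emph{not} containing $z$... actually $t_z$ only positively values triples containing $z$, so the real job of grades is to prevent an element-TA from being ``stolen'' by a triple-course that should not have selected it. The binary valuation requirement on TAs is satisfied by declaring each element-TA to value all its incident triple-courses $1$ (positively) and everything else $0$.

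\paragraph{Correctness.} For the forward direction, given a perfect matching $M \subseteq E$, I would assign, for each selected triple $e=(a,b,c)\in M$, two of the three element-TAs $t_a,t_b,t_c$ to $x_e$ and route the third element-TA to the (unique) other selected triple it belongs to — wait, no; the cleaner design is that for $e\in M$ all three element-TAs ``want'' $x_e$ but capacity is $2$, so I instead design the gadget so a selected triple-course takes exactly two element-TAs and for non-selected triples the course is filled entirely by fillers; then each element $z$, lying in exactly one selected triple, contributes its TA to exactly one triple-course, and a global count shows $2q$ element-TA slots across selected courses vs. $3q$ element-TAs — so this count is off, meaning the right gadget must have element-TAs come in copies or the course capacity/threshold must be retuned. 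I would resolve this by giving each element three private ``port'' TAs (one per potential triple slot) rather than one shared TA, with an equality-of-grades constraint linking them, and use the envy-freeness to synchronize; alternatively, and more simply, set course capacity so that a selected triple absorbs all three of its element-TAs while a non-selected triple absorbs none — but capacity must be a fixed constant $2$, so instead I pair element-TAs of a triple into $\binom{3}{2}=3$ ``pair-courses'' per triple, each capacity $2$, and add a selection gadget on top. The final reduction will have each triple represented by a bounded constant-size gadget of capacity-$2$ courses. For the reverse direction, given a \mefem matching, I would read off which triples are ``selected'' from which triple-gadget courses met their threshold, argue via the satisfaction constraints that selected triples cover each element at most once (no overload) and via a counting/threshold argument that exactly $q$ triples are selected and together they cover everything, yielding a perfect matching.

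\paragraph{Main obstacle.} The delicate part is designing a constant-size gadget per triple that (i) keeps every course's degree $\le 6$ and every TA's degree $\le 3$ and capacity $\le 2$ simultaneously, (ii) uses only course-side valuations in $\{0,1,2,3\}$, TA-side binary valuations, and TA grades in $\{0,1,2\}$, and (iii) makes the \emph{combination} of the averaging/satisfaction constraint and the merit-based envy-freeness constraint exactly encode ``pick a perfect matching'' — in particular, preventing ``half-selected'' triples or double-covered elements purely through envy among element-TAs sharing a grade. I expect most of the work to be in tuning the grades so that an element-TA assigned to the ``wrong'' triple-course is always envied by one of its siblings (forcing consistency), while a correct assignment is envy-free; getting the $\{0,1,2\}$ grade range to suffice for this, rather than needing more values, is the crux. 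Once the gadget is fixed, the equivalence proof is a routine two-direction argument, and membership in \textsf{NP} is immediate since a matching can be verified in polynomial time.
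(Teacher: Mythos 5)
Your choice of source problem (bounded-occurrence \dmatchingshort) and your general strategy (capacity-$2$ courses, a satisfaction threshold that forces the ``right'' mix of TAs, and high-grade dummy/filler TAs whose envy forces them to be matched) are all aligned with the paper. However, the proposal does not actually contain a reduction: you correctly identify the central obstacle --- a triple has three elements but a course has capacity two, so ``one course per triple absorbing its element-TAs'' gives a $2q$-versus-$3q$ mismatch --- and then you list three mutually inconsistent ways one might escape it (port TAs per element, $\binom{3}{2}$ pair-courses plus an unspecified ``selection gadget'', retuning capacities) without committing to or verifying any of them. You yourself flag the gadget design and the sufficiency of the grade range $\{0,1,2\}$ as ``the crux'' and ``where most of the work'' lies. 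That crux is exactly what a proof of this theorem consists of, so as it stands this is a plan with the hard part left open, not a proof.

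The idea you are missing is to break the symmetry among the three coordinates: the third coordinate should be encoded by the \emph{courses}, not by TAs. For each $r\in R$ the paper creates one course $r^i$ per triple containing $r$ (at most three), and only the elements of $P\uplus Q$ become ``original'' TAs; thus each course needs to absorb exactly the \emph{two} TAs $p,q$ of its triple, which matches capacity $2$ with no counting mismatch. Each $r$ additionally gets four dummy TAs with grade $2$ everywhere (the maximum, so an unmatched dummy envies everyone and hence all dummies must be matched): two valued $3$ and two valued $1$ by all copies of $r$, while originals are valued $2$. With $k=2$, a value-$1$ dummy can only reach the average threshold when paired with a value-$3$ dummy, so the four dummies fill exactly two of the three copies of $r$, forcing the remaining copy to take the two originals of one triple containing $r$ --- which is precisely the selected triple. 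Every course then has degree $6$ ($2$ originals plus $4$ dummies), valuations in $\{0,1,2,3\}$, TAs have binary (all-ones) valuations, and grades lie in $\{1,2\}\subseteq\{0,1,2\}$. This single construction discharges all three of the requirements (i)--(iii) you list as open, so the gap in your writeup is the absence of this gadget rather than any error in the surrounding framing.
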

\hide{
\begin{proof}
We give a polynomial-time reduction from the \dmatching (\dmatchingshort, in short) problem, in which 

  given three sets $P,Q,R$ of equal sizes and a set $E \subseteq P \times Q \times R$; 
  the goal is to find a matching $M \subseteq E$ such that for any two distinct triplets $(p_1, q_1, r_1), (p_2, q_2, r_2) \in M$, $p_1 \neq p_2, q_1 \neq q_2$, and $r_1 \neq r_2$, and every element of $P\uplus Q \uplus R$ belongs to a triplet in $M$ (such a matching is called a perfect matching). The \dmatchingshort problem is \npc even when each  element of $P\uplus Q \uplus R$ belongs to at most three sets in $E$~\cite{DBLP:books/fm/GareyJ79}.
\begin{figure}
    \centering

\begin{tikzpicture}
[scale=0.60]

\node[draw,align=right] at (2.5,5) {\color{blue}$v=2$\\ \color{blue}$g=1$};

\node[draw,align=right] at (-2.5,6.4) {\color{red}$v=3$\\ \color{red}$g=2$};

\node[draw,align=right] at (-3.2,-0.1) {$u^X=1$\\ $g=2$};

\draw (-0.2,1) ellipse (50pt and 127pt);
\Vertex[x=0,y=4,size=0.2,label = $r_{1}^{1}$,fontscale =1.5,position = above]{r11};
\Vertex[x=0,y=3,size=0.2,label = $r_{1}^{2}$,fontscale =1.5,position = left]{r12};
\Vertex[x=0,y=2,size=0.2,label = $r_{1}^{3}$,fontscale =1.5,position = left]{r13};
\draw [thick,dash dot] (-2,1.6) -- (1.5,1.6);

\draw [thick,dash dot] (-2,0.5) -- (1.5,0.5);

\Vertex[x=0,y=0,size=0.2,label = $r_{n}^{1}$,fontscale =1.5,position = left]{rn1};
\Vertex[x=0,y=-1,size=0.2,label = $r_{n}^{2}$,fontscale =1.5,position = left]{rn2};
\Vertex[x=0,y=-2,size=0.2,label = $r_{n}^{3}$,fontscale =1.5,position = left]{rn3};

\begin{scope}[shift={(-3,-3)}]

\draw (7.2,4.3) ellipse (50pt and 70pt);

\draw [thick,dash dot] (5.5,4.5) -- (9,4.5);

\Vertex[x=7,y=6,size=0.2,label = $p_{1}$,fontscale =1.5,position = right]{p1o};

\draw [dotted]  (7,6) -- (7,5);

\Vertex[x=7,y=5,size=0.2,label = $p_{n}$,fontscale =1.5,position = right]{pno};

\Vertex[x=7,y=4,size=0.2,label = $q_{1}$,fontscale =1.5,position = right]{q1o};

\draw [dotted]  (7,3) -- (7,4);

\Vertex[x=7,y=3,size=0.2,label = $q_{n}$,fontscale =1.5,position = right]{qno};
\end{scope}

\begin{scope}[shift={(-13.5,5)}]

\draw (7.2,-0.4) ellipse (50pt and 70pt);

\Vertex[label = $r_{1}^{d_{1}}$,x=8,y=1,size=0.2,fontscale =1.2,position = left]{R11d};
\Vertex[label = $r_{1}^{d_{2}}$,x=8,y=0,size=0.2,fontscale =1.2,position = left]{R12d};

\draw [dotted]  (8,0) -- (8,-1);

\Vertex[label = $r_{n}^{d_{1}}$,x=8,y=-1,size=0.2,fontscale =1.2,position = left]{Rn1d};
\Vertex[label = $r_{n}^{d_{2}}$,x=8,y=-2,size=0.2,fontscale =1.2,position = left]{Rn2d};
\end{scope}

\begin{scope}[shift={(-13.5,5)}]

\draw (7.2,-5.4) ellipse (50pt and 70pt);

\Vertex[label = $r_{1}^{d'_{1}}$,x=8,y=-4,size=0.2,fontscale =1.2,position = left]{R13d};
\Vertex[label = $r_{1}^{d'_{2}}$,x=8,y=-5,size=0.2,fontscale =1.2,position = left]{R14d};

\draw [dotted]  (8,-5) -- (8,-6);

\Vertex[label = $r_{n}^{d'_{1}}$,x=8,y=-6,size=0.2,fontscale =1.2,position = left]{Rn3d};
\Vertex[label = $r_{n}^{d'_{2}}$,x=8,y=-7,size=0.2,fontscale =1.2,position = left]{Rn4d};
\end{scope}

\Edge[color = blue](r11)(p1o)
\Edge[color = blue](r11)(q1o)

\Edge[color = red](r11)(R11d)
\Edge[color = red](r11)(R12d)

\Edge(r11)(R13d)
\Edge(r11)(R14d)

\end{tikzpicture}

    \caption{\mefem instance of \dmatching where $(p_1,q_1,r_1)$ is an edge.Edge for $r_n$ are not drawn}
    \label{fig:enter-label}
\end{figure}

Let $(P,Q,R,E)$ be an instance of \dmatchingshort such that each  element of $P\uplus Q \uplus R$ belongs to at most three sets in $E$. 
We construct an instance of \mefe as follows. For each $r \in R$, we create three courses $r^{1},r^{2},r^{3}$. Intuitively, three copies of $r$ denote the sets containing $r$. 
For each $p\in P$, we add a TA $p$ and for each $q\in Q$, we add a TA $q$. We call these TAs as ``original'' and denote this set as $O$.  Next, we add some ``dummy'' TAs. For each $r \in R$, we add four dummy TAs $r^{d_1}$ , $r^{d_2}$ , $r^{d'_1}$ and $r^{d'_2}$. Let $D=\{r^{d_i}\colon r\in R, i\in [2]\}$ and $D'=\{r^{d'_i}\colon r\in R, i\in [2]\}$. Let $X$ be the set of all courses, i.e., $X=\{r^j \colon r\in R, j\in [3]\}$, and $T$ be the set of all the TAs, i.e., $T=O \uplus D \uplus D'$. Next, we define the utility function of every course. Suppose that $(p,q,r), (p',q',r)$, and $(p'',q'',r)$ be three triplets in $E$. Then, $v_{r^1}(p)=v_{r^1}(q)=2$, $v_{r^2}(p')=v_{r^2}(q')=2$, and $v_{r^3}(p'')=v_{r^3}(q'')=2$. Furthermore, for all $i\in [3]$, $v_{r^i}(r^{d_1})=v_{r^i}(r^{d_2})=3$, and $v_{r^i}(r^{d'_1})=v_{r^i}(r^{d'_2})=1$. That is, first copy of $r$ gives valuations $2$ to the TAs corresponding to the elements in the first set with $r$, second copy gives valuations $2$ to the TAs corresponding to the elements in the second set with $r$, and so on. All the copies gives valuation $3$ to the corresponding TAs in the set $D$ and $1$ to the corresponding elements in the set $D'$. The valuation for all the other TAs is $0$.

For every TA $t$ and course $x$, the utility function is as follows: $u_t(x)=1$, i.e., all the TAs value all the courses equally (the value does not matter in our reduction). 

Next, we define a grade function for each TA as follows.
For each TA $t$ corresponding to elements in $P\uplus Q$, $g_t(r_i^j)=1$, if $t$ and $r$ belong to a triplet in $E$. If $t$ is a dummy TA such that $t\in \{r^{d_1},r^{d_2},r^{d'_1},r^{d'_2}\}$, then, $g_t(r_i^j)=2$, where $i\in [n], j\in [3]$, and $0$ for all the other courses.

We set $k=2$ and the capacity for each course is $2$.

Next, we prove the correctness. In particular, we prove the following. 

\begin{restatable}{lemma}{lem:correctnees-nph-3DPM}
${\cal I}=(P,Q,R,E)$ is a yes-instance of \dmatchingshort if and only if ${\cal I}=(X,T,\{v_i\}_{i\in X}, \{u_i\}_{i\in T}, \{g_i\}_{i\in T}, \{c_i\}_{i\in X},k\}$ is a yes-instance of \mefe. 
\label{lem:correctnees-nph-3DPM}\end{restatable}

\begin{proof}
    In the forward direction, let $M$ be a solution to ${\cal I}$. We construct a solution to ${\cal I}$ as follows. If $(r,p,q)\in M$, then $\mu(p)=\mu(q)=r^1$. Furthermore, $\mu(r^{d_1})=\mu(r^{d'_1})=r^2$ and $\mu(r^{d_2})=\mu(r^{d'_2})=r^2$. We first show that $\mu$ is a feasible matching.
    \begin{restatable}{claim}{clm:matching-feasibility}
            $\mu$ is a feasible matching.
   \label{clm:matching-feasibility}\end{restatable}
    \begin{proof}
        Since $M$ is a perfect matching, for every $r\in R$, its first copy $r^1$ is matched to two TAs. By the construction, all the other courses are matched to two dummy TAs. Thus, $\mu$ is a feasible matching.
    \end{proof}
    Next, we argue that $\mu$ meets satisfaction criteria of each course.
    \begin{restatable}{claim}{clm:satisfaction}
               For each course $r^j\in X$, where $r\in R$, ${\sf AvgUtil}(r^j)\geq 2$.
\label{clm:satisfaction}\end{restatable}
    \begin{proof}
        Recall that every $r\in R$ is in a triplet in $M$. Suppose $(p,q,r)\in M$. Then, $\mu^{-1}(r^1)=\{p,q\}$. Since the utility of $r^1$ for $p$ and $q$ is $2$, ${\sf AvgUtil}(r^1)=2$. Since $r^2$ is matched with $r^{d_1}$ and $r^{d'_1}$, and its utility for $r^{d_1}$ is $3$ and $r^{d'_1}$ is $1$, ${\sf AvgUtil}(r^2)=2$. Similarly, since $r^3$ is matched with $r^{d_2}$ and $r^{d'_2}$, ${\sf AvgUtil}(r^3)=2$.
    \end{proof}
    Next, we prove merit-based envy-freeness between TAs. Note that since every TA values all the courses equally, if he is matched to a course, then there is no envy. Since $M$ is a perfect matching, every element in $P\uplus Q$ is in a triplet in $M$. Thus, every TA corresponding to elements in $P\uplus Q$ is matched to a course. TAs $r^{d_1} \in D, r^{d'_1} \in D'$ are matched to $r^2$, and $r^{d_2} \in D, r^{d'_2} \in D'$ are matched to $r^3$. Thus, all the TAs are matched to a course. This completes the proof in the forward direction.

In the reverse direction, let $\mu$ be a solution to ${\cal I}$. We begin with the following observations about the solution $\mu$. 

\begin{restatable}{observation}{obs1:rev-proof}
   Every course $x$ is matched to two TAs for whom the utility is non-zero. 
\label{obs1:rev-proof}\end{restatable}

\begin{proof}
    Since the capacity of every course is two, every course is matched to two TAs. Recall that $k=2$ and maximum utility of a course for any TA is $3$. Thus, $x$ is matched to TAs with non-zero utility. 
\end{proof}

\begin{restatable}{observation}{obs2:rev-proof}
    Every TA $t\in D\uplus D'$ is matched to a course in $\mu$. 
\label{obs2:rev-proof}\end{restatable}

\begin{proof}
    Suppose that a TA $t\in D\uplus D'$ is unsaturated in $\mu$. Recall that grade of $t$ is $2$ in every course, and no other TA has higher grade. Since $t$ is unsaturated, $t$ envies all the TAs by the definition of merit-based envy-freeness. 
\end{proof}

\begin{restatable}{observation}{obs3:rev-proof}
     If a TA $t \in D'$ matched to a course $x$, then $x$ is also matched to a TA in $D$.  
\label{obs3:rev-proof}\end{restatable}

\begin{proof}
    Recall that the capacity of every course is $2$ and $k=2$. Thus, course $x$ is matched to two TAs. Since the utility of a course for TAs in $D'$ is $1$, $x$ is matched with a TA in $D$ as the utility for TAs in $O$ is $2$ and for TAs in $D$ is $3$. 
\end{proof}
Due to Observations~\ref{obs2:rev-proof} and \ref{obs3:rev-proof}, without loss of generality, let $r^{d_1}, r^{d'_1}$ are matched to $r^{j}$ and $r^{d_2}, r^{d'_2}$ are matched to $r^{j'}$, where $r\in R, j,j' \in [3], j\neq j'$. Thus, $r^i$, where $i\in [3]\setminus \{j,j'\}$, is matched to two TAs in $O$, say $p,q$. Due to Observation~\ref{obs1:rev-proof} and the fact that $r^i$ has non-zero utility to $p,q \in O$, if $(p,q,r)\in E$, we know that if a course $x$ is matched to two TAs in $O$, then $r$ and $\mu^{-1}(r)$ forms a triplet in $T$.  

We construct a set $M\subseteq T$ as follows: $M=\{(r,p,q)\in T \colon \mu^{-1}(r^i)=\{p,q\}, \text{ where } i\in [3]\}$. Next, we argue that $M$ is a matching. Note that $p$ and $q$ are only matched to one course in $\mu$. Since the capacity of every course is also $2$, a course is matched to only two TAs. Thus, $M$ is a matching. Next, we argue that $M$ is a perfect matching. Recall that $k=2$ and the capacity of every course is also $2$, due to Observation~\ref{obs3:rev-proof}, only two courses corresponding to $r\in R$ can be matched to TAs in $D\uplus D'$. Morevover, since capacity of each course is two, one copy is matched to two TAs in $O$. Hence, every $r\in R$ belongs to a triplet in $M$. Since $|P|=|Q|=|R|$ and $M$ is a matching, every element of $P\uplus Q$ also belong to a triplet in $M$. 
\end{proof}
This completes the proof.
\end{proof}
}

\hide{
\begin{restatable}{remark}{rem1}
    The reduction in the proof of Theorem~\ref{thm:nph-cap2} can be slightly edited and reduce the degree of each TA to three. If a course $x$ values a TA $t$ as $0$, then $t$ also assign value $x$ to $0$. Since every element in $P \uplus Q$ is in at most three sets in $E$, the degree of TAs in $O$ is bounded by three. Due to the construction, the degree of TAs in $D$ or $D'$ is also bounded by three. Since in the proof, a course was only matched to TAs with non-zero valuations, the proof still works. Thus, the problem is \npc even each TA has degree at most three and binary valuations. 
\label{rem1}\end{restatable}
}

\section{Polynomial Time Tractable Cases}

In this section, we identify some instances that can be solved in polynomial time. We first consider the case when the difference of the degree and capacity of each course is at most one.  

\begin{theorem}\label{degree-cap1}
   \mefe can be solved in polynomial time when the difference between the degree and capacity of each course is at most one.
\end{theorem}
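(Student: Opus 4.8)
The plan is to exploit the rigidity forced by the degree--capacity gap. Call a course $x$ \emph{rigid} if $d(x)=c_x$ and \emph{flexible} if $d(x)=c_x+1$; these are the only possibilities. In any feasible matching $\mu$ a rigid course has $\mu^{-1}(x)=N(x)$, while a flexible course has $\mu^{-1}(x)=N(x)\setminus\{t\}$ for a single ``left-out'' TA $t\in N(x)$. Hence a feasible matching is determined entirely by choosing, for each flexible course, which one of its positively valued TAs it leaves out, and the partner of a TA $t$ (if any) is the unique course of $N(t)$ that does not leave $t$ out. First I would run a batch of polynomial-time prechecks: reject if any rigid course has average utility below $k$ (this value is already fixed); reject if some TA lies in $N(x)\cap N(x')$ for two distinct rigid courses (it cannot serve both); and for each flexible course $x$ compute $A(x)=\{t\in N(x):\sum_{t'\in N(x)}v_x(t')-v_x(t)\ge k\,c_x\}$, the set of TAs whose removal leaves $x$ satisfied, rejecting if some $A(x)$ is empty.

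Next I would encode feasibility together with course satisfaction as a degree-constrained-subgraph problem on the bipartite graph having the flexible courses on one side, the TAs on the other, and an edge $\{x,t\}$ whenever $t\in A(x)$: each flexible course must pick exactly $1$ incident edge (its left-out TA), and each TA $t$ must have between $\max(0,d(t)-1)$ and $d(t)$ incident edges picked, this being tightened to exactly $d(t)-1$ (forcing a specific set of edges) when $t$ has a rigid neighbour, since then $t$ is matched to that rigid course. Such degree-constrained bipartite subgraphs reduce to a polynomial-time flow-with-lower-bounds computation, and a solution is precisely a feasible, course-satisfying leave-out assignment.

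The crux is folding in merit-based envy-freeness. The key observation is that, since each TA is matched to at most one of its neighbours, ``$t_i$ envies $t_j$'' can occur only through the course $x=\mu(t_j)$, and it does occur iff $x\in N(t_i)\cap N(t_j)$, $g_i(x)\ge g_j(x)$, and $u_i(\mu(t_i))<u_i(x)$; moreover the last condition is equivalent to \emph{every} course in $Z_i(x):=\{y\in N(t_i):u_i(y)\ge u_i(x)\}$ leaving $t_i$ out (so all of $Z_i(x)$ is flexible, and since $x\in Z_i(x)$ this automatically makes $\mu(t_j)=x$). Because the sets $Z_i(x)$ are nested by $u_i$-value, all envy constraints with envier $t_i$ collapse into one: let $x_i^{*}$ be the $u_i$-largest $x\in N(t_i)$ such that $Z_i(x)$ is entirely flexible and some $t_j\in N(x)\setminus\{t_i\}$ has $g_j(x)\le g_i(x)$, put $W_i:=Z_i(x_i^{*})$, and forbid ``every course of $W_i$ leaves $t_i$ out''. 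As at most one course of $N(t_i)$ may fail to leave $t_i$ out, this is equivalent to requiring that $t_i$ be matched, to a course of $W_i$: every course of $N(t_i)\setminus W_i$ is forced to leave $t_i$ out (reject on any clash with a rigid neighbour of $t_i$ or with a previously forced edge), and the count for $t_i$ is tightened to exactly $d(t_i)-1$. Propagating these forced edges and tightened bounds --- an operation that only ever forces more edges or rejects --- again leaves a degree-constrained bipartite subgraph problem, so a solution (or a proof of its nonexistence) is found in polynomial time; any solution yields, by construction, a feasible matching that satisfies every course and has no merit-based envy.

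The step I expect to be the real obstacle is the envy analysis of the previous paragraph: verifying that envy ``through'' a common course is captured exactly by the condition that all of $Z_i(x)$ leaves $t_i$ out, that the nesting of the $Z_i(x)$ collapses the quadratically many pairwise constraints into a single one per TA, and that the induced forced-edge propagation conflicts with the rigid/flexible bookkeeping only in situations that legitimately certify a no-instance. Granting that, what remains is a standard polynomial-time $b$-matching/flow computation.
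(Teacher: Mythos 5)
Your proof is correct, but it takes a genuinely different route from the paper's. The paper works component by component on the bipartite graph $G=(X,T)$ and exploits near-uniqueness of the solution: any course with degree equal to capacity is forced to take all of $N(x)$, and this forcing propagates through a connected component to fix every assignment; if no such course exists, an edge count shows the component is a tree (exactly one TA is unmatched, so one guesses which of the $m$ TAs it is), unicyclic (two choices for one course's assignment), or contains two cycles (too few TAs, reject). Each guess yields a single candidate matching, and feasibility, satisfaction, and merit-based envy-freeness are then simply \emph{verified} on that candidate, so envy never has to be encoded into the search. You instead characterize all feasible, course-satisfying matchings at once as the solutions of a degree-constrained subgraph problem over ``left-out'' choices, and the substantive content of your argument is the reduction of merit-based envy-freeness to one static constraint per TA ($t_i$ must be matched into $W_i$), which you justify correctly: envy can only arise through a flexible course of $N(t_i)$ that leaves $t_i$ out (a rigid neighbour would absorb $t_i$, and a flexible neighbour that keeps $t_i$ gives $t_i$ that course), the upper level sets $Z_i(x)$ are nested, and the binding constraint is the one for the $u_i$-largest dangerous course; the rejection rules for clashes (a forced left-out edge hitting a rigid neighbour, another forced edge, or a TA outside $A(x)$) are all sound because each conflicting assignment is a consequence of necessary conditions. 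This is more machinery than the paper needs --- its candidate matchings are so few that envy is checked by brute force --- but it buys a formulation that does not depend on per-component uniqueness of the solution and that cleanly separates feasibility, satisfaction, and envy into flow constraints, so it would survive in variants where the matching is far from unique.
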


\begin{proof}
    We visualize an instance ${\cal I}$ of \mefe as a bipartite graph $G=(X,T)$ and note that $G$ can be a disconnected graph. However, it is sufficient to solve each component separately, as
    we do not assign $0$-valued TAs to a course. Let $G'$ be a maximal connected subgraph of $G$. For the ease of notation, we reuse $X,T$ as the bipartiton of $G'$ as well, where $X, T$ is the set of courses and TAs, respectively, in $G'$. Let $X=\{x_1,\ldots,x_n\}$ and $T=\{t_1,\ldots,t_m\}$. We find an \mefem matching, if it exists, for $G'$. We design the algorithm based on the structure of the graph. We first consider the case the graph has a course with degree equalling capacity. If not, then we further consider other cases: whether it is  acyclic, or contains one cycle, or contains at least two cycles. In the case of two cycles, we argue that it is no-instance. In the other cases, for a yes-instance, the intuitive idea is that there exists a TA/course whose assignment or decision that it is unassigned dictates the whole matching. Note that such a TA/course should be chosen carefully as every choice does not fix the matching. 
    {\bf Case 1: There exists a course $x_i$ such that $d(x_i) - c_i = 0$.} Let $X'=\{x_1,\ldots,x_\ell\}$ be the set of courses in $X$ such that $d(x_i) = c_i$, where $i\in [\ell]$. We construct a matching $\mu$ as follows: for each $i\in [\ell]$, $\mu^{-1}(x_i)=N(x_i)$. If $\mu^{-1}(x_i)$ and $\mu^{-1}(x_j)$ are not disjoint, for any $i,j\in [\ell]$, then we return {\sf NO}. Otherwise, we call $\texttt{Extended Matching}(G',\mu,T_\mu=N(X'),X_\mu=X')$(Algorithm~\ref{alg:extend-match}) and find a matching $\mu$. The intuition is that we consider a course $x\in X$ whom we did not assign TAs yet, such that one of the positively valued TAs it has an edge with has already been assigned to a course. Thus, all the remaining TAs positively valued by this course must be assigned to it (we will argue in the correctness proof that for a yes-instance, only one of its TA is assigned to another course). If $\mu$ is a solution to ${\cal I}$, we return it; otherwise, we return {\sf NO}.

    \begin{algorithm}[]
\caption{{\texttt{Extended Matching}}}\label{alg:extend-match}
\textbf{Input:} a connected graph $G' = (X, T)$, a matching $\mu$, a set of TAs $T_\mu$, a set of courses $X_\mu$ \\
\textbf{Output:} a matching $\mu$
\begin{algorithmic}[1]
    \While{$\exists x \in N(T_{\mu})$ such that $x \notin X_{\mu}$}
        \For{all TA $t'\in T \setminus T_{\mu}$ s.t. $xt' \in E(G')$}
            \State $\mu(t') = x$
            \State $T_{\mu} \rightarrow T_{\mu} \cup \{t'\}$
        \EndFor 
        \State $X_{\mu} \rightarrow X_{\mu} \cup \{x\}$
    \EndWhile
\Return {$\mu$}
\end{algorithmic}
\end{algorithm}

{\bf Case 2: For all the courses $x_i\in X$, $d(x_i)-c_i=1$}

    {\bf Case 2.1: $G'$ is acyclic, i.e., a tree.}\label{c1:tree} In this case, $|E(G')|=|X|+|T|-1$. Furthermore, since $G'$ is a bipartite graph, $E(G')=\sum_{i\in [n]}d(x_i)$. Thus, $$\sum_{i\in [n]}d(x_i)=|X|+|T|-1$$  This implies that  $\sum_{i\in [n]}(d(x_i)-1)=|T|-1$
    Since $c_i=d(x_i)-1$, we have that  $$\sum_{i\in [n]}c_i=|T|-1$$ 
    Hence, there is only one TA that is unassigned in any feasible solution to ${\cal I}$. We guess this TA, say $t$. Now, we need to assign all TAs in $T'=T\setminus \{t\}$ to the courses in $X$. The intuitive idea is that since $t$ is unassigned, for all the courses that have positive valuations to $t$, we only have capacity many choices. Thus, the allocation of all these courses is fixed. Since the graph is connected, this eventually fixes the allocation of all the courses. Algorithm~\ref{alg:deg-cap-tree} describes the algorithm formally.  

\begin{algorithm}[]
\caption{Unit difference between degree and capacity: Acyclic Case}\label{alg:deg-cap-tree}
\textbf{Input:} a connected graph $G'=(X, T)$ \\
\textbf{Output:} either a matching $\mu$, or {\sf NO}.

\begin{algorithmic}[1]
\For{$t \in T$} \label{step:loopt}
    \State $T_{\mu_t} = {t}$, $X_{\mu_t} = \emptyset$, $\mu_t(z)=\emptyset$, for all $z\in T$
   \State $\mu_t$={\texttt{ Extended Matching}}($G',\mu_t,T_{\mu_t},X_{\mu_t}$)
    
    \If{$\mu_t$ is a solution to ${\cal I}$}   \Return $\mu_t$
    \EndIf
\EndFor
\Return {\sf NO}
\end{algorithmic}
\end{algorithm}

    {\bf Case 2.2: $G'$ contains only one cycle.} Let $C=(x_1,t_1,\ldots,x_\ell,t_\ell)$ be the cycle in $G'$. Clearly, there exists an edge in $C$ whose deletion makes the graph $G'$ acyclic. Thus, in this case, $|E(G')|=|X|+|T|$.  
    Furthermore, $|E(G')|=\sum_{i\in [n]}d(x_i)$.  Thus,  by equating the above two expressions just as we did in Case 2.1, we get $|T|=\sum_{i\in [n]}c_i$. Therefore, all TAs need to be assigned in a feasible matching for this case. Let $X_{C},T_C$ denote the set of courses and TAs, respectively, in $C$. We guess whether $t_1$ or $t_\ell$ is assigned to $x_1$ (one of them has to be assigned as $d(x_1)-c_1=1$). After this guess, an allocation for all the courses gets ``fixed'', i.e., there is a unique choice of allocation. Algorithm~\ref{alg:deg-cap-1cycle} describes the algorithm formally. 
\begin{algorithm}[]
\caption{Unit difference between degree and capacity: Unique cycle case}\label{alg:deg-cap-1cycle}
\textbf{Input:} a connected graph $G'=(X, T)$ \\
\textbf{Output:} either a matching $\mu$, or {\sf NO}.

\begin{algorithmic}[1]
\State let $C=(x_1,t_1,\ldots,x_\ell,t_\ell)$
\State let $\mu^{-1}(x_1)=N(x_1)\setminus \{t_{\ell}\}$ \label{stp:cycle-c0} 
\State $X_\mu=\{x_1\}$,  
$T_\mu=N(x_1)\setminus \{t_{\ell}\}$, 
\State $\mu$={\texttt{Extended Matching}}($G',\mu,T_\mu,X_\mu$) \label{stp:cycle-c1}
\If{$\mu$ is a solution to ${\cal I}$}
\Return $\mu$
\EndIf
\State $\tilde{\mu}^{-1}(x_1)=N(x_1)\setminus \{t_1\}$
\State $\tilde{\mu}^{-1}$={\texttt{Extended Matching}}($G',\tilde{\mu}^{-1},T_{\mu},X_{\mu}$)

    \If{$\tilde{\mu}^{-1}$ is a solution to ${\cal I}$}
\Return $\mu'$
    \EndIf
\Return {\sf NO}
\end{algorithmic}
\end{algorithm}

    {\bf Case 2.3: $G'$ contains more than one cycle.} The algorithm  returns {\sf NO} in this case.

The correctness can be found in supplementary. 
\hide{
Next, we prove the correctness of the algorithm. In particular, we show the correctness of each case individually. 

\begin{restatable}{lemma}{lem:degree-capC1}
    If there exists a course $x_i\in X$ such that $d(x_i)-c_i=0$, then ${\cal I}$ is a yes-instance of \mefe if and only if we return  a matching in Case 1.
\label{lem:degree-capC1}\end{restatable}
\begin{proof}
To prove the forward direction, first let us assume that $\eta$ is a solution to ${\cal I}$. Recall that $X'=\{x_1,\ldots,x_\ell\}$ is the set of courses for which  $d(x_i) = c_i$. Thus,  for all the courses $x_i \in X$, $\mu^{-1}(x_i)=\eta^{-1}(x_i)=N(x_i)$.

Due to the connectivity of the graph $G'$, and the fact that if we assign TAs to a course in the algorithm, it does not have any unallocated neighbors, allocated TAs have unallocated courses at every step of the algorithm. 
Let $(x_{\ell + 1}, \ldots, x_n)$  be the sequence of courses considered by this algorithm after we call $\texttt{Extended Matching}$ given that $X_\mu = X'$ and $T_\mu = N(X')$. We next prove the following.

\begin{restatable}{claim}{clm:d-c=0}
    For every $i\in \{\ell +1,\ldots,n\}$, $\mu^{-1}(x_i)=\eta^{-1}(x_i)$.
\label{clm:d-c=0}\end{restatable}

\begin{proof}  
We prove it by strong induction on $i$.

    \emph{Base Case.} $i=\ell+1$. Since $x_{\ell + 1}$ is the first course considered after the assignments of courses in $X'$, there must be a TA $t' \in T_\mu$ such that $x_{\ell + 1}t' \in E(G')$. Furthermore, $\eta^{-1}(x_{\ell + 1}) \cap T_\mu = \emptyset$ as $\mu^{-1}(x_i)=\eta^{-1}(x_i)$, for all $i\in [\ell]$. This indicates that $t'$ is the only neighbor of $x_{\ell + 1}$ that is in $T_\mu$ at this step and shows that $\eta^{-1}(x_{\ell + 1}) = N(x_{\ell + 1}) \setminus \{t'\}$. Also, by the $\texttt{Extended Matching}$ procedure, for all the TAs $t'' \in N(x_{\ell + 1})\setminus \{t'\}$, $\mu(t'')=x_{\ell + 1}$. Hence, $\mu(x_{\ell + 1})=\eta(x_{\ell + 1})$. 

    \emph{Induction Step.} Suppose the claim is true for all $i\leq j-1$. We next prove it for $i=j$. We considered $x_j$ in the algorithm as $x_j\in N(T(\mu))$. Let $x_jt' \in E(G')$, where $\mu(t')=x_i$, $i<j$. Due to induction hypothesis, $\eta(t')=x_i$. Furthermore, due to induction hypothesis, $\eta(x_i)=\mu(x_i)$, for all $i\leq j-1$. Thus,  $\eta(x_j)\cap T(\mu) = \emptyset$, where $T(\mu)$ is the set of TAs constructed in the first $j - \ell - 1$ iterations of the while loop of Algorithm~\ref{alg:extend-match} along with the TAs in $N(X')$, i.e., it contains the set of TAs assigned to $x_1,\ldots,x_{j-1}$. Thus, $t'$ is the only neighbor of $x_j$ that is in  $T(\mu)$ at this step. Thus, for all the TAs $t'' \in N(x_j)\setminus \{t'\}$  $\mu(t'')=x_j$. Hence, $\mu(x_j)=\eta(x_j)$. 
\end{proof}  

In the reverse direction, if we return a matching in Case 1, then ${\cal I}$ is clearly a yes-instance of \mefe.
\end{proof}
\begin{restatable}{lemma}{lem:degree-capC1}
    If $G'$ is acyclic, then ${\cal I}$ is a yes-instance of \mefe if and only if Algorithm~\ref{alg:deg-cap-tree} returns a matching.
\label{lem:degree-capC1
}\end{restatable}
\begin{proof}

In the forward direction, let $\eta$ be a solution to ${\cal I}$. As argued in Case 1, there exists a TA $\tilde{t}$, that is unallocated in $\eta$. Consider the case when $t=\tilde{t}$ in Step~\ref{step:loopt} of Algorithm~\ref{alg:deg-cap-tree}. We claim that $\eta=\mu_{\tilde{t}}$.  Let $x_1,\ldots, x_n$ be the sequence of courses that are considered in Algorithm~\ref{alg:extend-match} when called for $T_\mu=t$. We prove it by strong induction on $i\in [n]$. In particular, we prove the following. 

\begin{restatable}{claim}{clm:tree}
    For every $i\in [n]$, $\mu^{-1}(x_i)=\eta^{-1}(x_i)$.
\label{clm:tree}\end{restatable}

\begin{proof}
Recall that at any step of the algorithm, $T_{\mu_{\tilde{t}}}, X_{\mu_{\tilde{t}}}$ are the sets of matched TAs and courses, respectively, so far.  
    \emph{Base Case.} $i=1$. In the first iteration $T_{\mu_{\tilde{t}}}=\{\tilde{t}\}$.  Thus, $x_1\in N(\tilde{t})$. Since $\tilde{t}$ is unmatched in both $\eta$ and $\mu$ and the $d(x_1)-c_1 = 1$, $\mu(x_1)=\eta(x_1)$. 

    \emph{Induction Step.} Suppose the claim is true for all $i\leq j-1$. We next prove it for $i=j$. We considered $x_j$ in the algorithm as $x_j\in N(T(\mu_{\tilde{t}}))$. Let $x_jt' \in E(G')$, where $\mu_{\tilde{t}}(t')=x_i$, $i<j$. Due to induction hypothesis, $\eta(t')=x_i$. Furthermore, due to induction hypothesis, $\eta(x_j)\cap T(\mu_{\tilde{t}}) = \emptyset$, where $T(\mu_{\tilde{t}})$ is the set of TAs constructed in the first $j-1$ iterations of the while loop of Algorithm~\ref{alg:extend-match}, i.e., it contains $\tilde{t}$ and the set of TAs assigned to $x_1,\ldots,x_j$. Thus,  $t'$ is the only neighbor of $x_j$ that is in  $T(\mu_{\tilde{t}})$ at this step. Thus, for all the TAs $t'' \in N(x_j)\setminus \{t'\}$  $\mu(t'')=x_j$. Hence, $\mu(x_j)=\eta(x_j)$.  
\end{proof}

In the reverse direction, if Algorithm~\ref{alg:deg-cap-tree} returns a matching, then clearly ${\cal I}$ is a yes-instance.
 
\end{proof}

The idea of the next proof is similar to Lemma~\ref{lem:degree-capC2}
\begin{restatable}{lemma}{lem:degree-capC2}
    If $G'$ contains only one cycle, then ${\cal I}$ is a yes-instance of \mefe if and only if Algorithm~\ref{alg:deg-cap-1cycle} returns a matching.
\label{lem:degree-capC2}\end{restatable}
\begin{proof}
    In the forward direction $\eta$ be a solution to ${\cal I}$. Let $C$ 
    be the unique cycle in $G'$. Let $x_1 \in C$. Let us also say that $x_1$ shares $t_1$ with $x_2 \in C$ and $t_\ell$ with $x_l \in C$. Since $d(x_1)-c_1=1$, either $t_1\in \eta^{-1}(x_1)$ or $t_\ell \in \eta^{-1}(x_1)$. Note that the number of courses in $C$ is equal to the number of TAs. This means that we cannot assign both $t_\ell$ and $t_1$ to $x_1$, otherwise, there would be one course $x_i \in C$ which gets neither of TAs that belong to $C$, and $x_i$ has at most $d(x_i)-2$ TAs in $T\setminus C$. Suppose that $t_1\in \eta^{-1}(x_1)$. Then, $\eta(t_\ell) \neq x_1$. In order to satisfy feasibility for $x_1$, $\eta^{-1}(x_1)=N(x_1)\setminus \{t_\ell\}$. If we assign $t_\ell$ to $x_1$ instead of $t_1$, then $\eta^{-1}(x_1)=N(x_1)\setminus {t_1}$. We first argue for the case when $\eta^{-1}(x_1)=N(x_1)\setminus \{t_\ell\}$. The other case can be argued analogously. We consider the case when  $\mu^{-1}(x_1)=N(x_1)\setminus \{t_\ell\}$ in Step~\ref{stp:cycle-c0} of Algorithm~\ref{alg:deg-cap-1cycle}. Thus, $\mu^{-1}(x_1)=\eta^{-1}(x_1)$. 
We argue that the matching $\mu$ returned in Step~\ref{stp:cycle-c1} of Algorithm~\ref{alg:deg-cap-1cycle} is same as $\eta$. 
Note that since the graph is connected, as argued earlier,  Algorithm~\ref{alg:extend-match} terminates when all the courses are matched. We already argued for $x_1$.  For the remaining courses, we prove the result similar to Claim~\ref{clm:tree}. Let $x_2,\ldots, x_n$ be the sequence of courses that are considered in Algorithm~\ref{alg:extend-match} when called for $X_\mu=\{x_1\}$ and $T_\mu=\mu^{-1}(x_1)$ (we just reused the notation $x_2$. It could be different from the one in $C$ that share the TA with $x_1$). We prove it by strong induction on $i\in [n]$. In particular, we prove the following. 

\begin{restatable}{claim}{clm:1cycle}
    For every $i\in \{2,\ldots,n\}$, $\mu^{-1}(x_i)=\eta^{-1}(x_i)$.
\label{clm:1cycle}\end{restatable}

\begin{proof}
 
    \emph{Base Case.} $i=2$.  Since we first considered $x_2$ in Algorithm~\ref{alg:extend-match}, it has a neighbor in $N(x_1)\setminus \{x_\ell\}$, say $t'$. Since $\eta$ is a solution to ${\cal J}$ and $T_\mu = N(x_1)\setminus \{x_\ell\}$ at this stage, $t'$ is the only TA in $N(x_2)$ that belongs to $T_\mu$. Thus, for every $t''\in N(x_2)\setminus \{t'\}$, $\mu(t'')=x_2$. Thus, $\eta(x_2)=\mu(x_2)$. 

    \emph{Induction Step.} Suppose the claim is true for all $i\leq j-1$. We next prove it for $i=j$. We considered $x_j$ in the algorithm as $x_j\in N(T(\mu))$. Let $x_jt' \in E(G')$, where $\mu(t')=x_i$, $i<j$. Due to induction hypothesis, $\eta(t')=x_i$. Furthermore, due to induction hypothesis, $\eta(x_j)\cap T(\mu) = \emptyset$, where $T(\mu)$ is the set of TAs constructed in the first $j - 2$ iterations of the while loop of Algorithm~\ref{alg:extend-match} along with the TAs assigned to $\{x_1\}$, i.e., it contains the set of TAs assigned to $x_1,\ldots,x_{j-1}$. Thus, $t'$ is the only neighbor of $x_j$ that is in  $T(\mu)$ at this step. Thus, for all the TAs $t'' \in N(x_j)\setminus \{t'\}$  $\mu(t'')=x_j$. Hence, $\mu(x_j)=\eta(x_j)$. 

    The inductive proof for the case where $t_\ell$ is assigned to $x_1$ initially is analogous to the proof given above.
\end{proof}  
In the reverse direction, if Algorithm~\ref{alg:deg-cap-1cycle} returns a matching, then clearly ${\cal I}$ is a yes-instance.
\end{proof}

\begin{restatable}{lemma}{lem:degree-capC3}
    If $G'$ contains more than one cycle, then ${\cal I}$ is a no-instance of \mefe.
\label{lem:degree-capC3}\end{restatable}
\begin{proof}
    The number of edges in $G'$ is  $|E(G')|=\sum_{i\in [n]}d(x_i)$. We also know that in any $G'$ that is not acyclic, the number of edges is at least the number of vertices, which in this case is $|X| + |T|$. 
    
\begin{restatable}{claim}{clm:mulcycle}
In a connected graph $H$ if $|E(H)|=|V(H)|$, then $H$ contains only one cycle. 
\label{clm:mulcycle}\end{restatable}

\begin{proof}
    Since $H$ is connected and  $|E(H)|=|V(H)|$, it contains at least one cycle, say $C$. Let $e=uv$ be an edge in  $C$. Since the deletion of an edge that belongs to a cycle does not disconnect the graph, $H-e$ is connected. Furthermore, since the number of edges in $H-e$ is $|V(H)|-1$, $H-e$ is a tree. Thus, there is a unique path between $u$ and $v$ in $H-e$. Suppose that there exist two distinct cycles $C,C'$ in $H$ containing $e$. Then, $C-e$ and $C'-e$ are two two distinct paths between $u$ and $v$ in $H-e$, a contradiction.  
\end{proof}

Thus, due to Claim~\ref{clm:mulcycle}, $|E(G')|>|X|+|T|$. Thus,

$\sum_{i\in [n]}d(x_i)>|X|+|T|$. Recall that $d(x_i) - 1 = c_i$. Thus,  
$\sum_{i\in [n]}c_i > |T|$. Since the number of TAs is lesser than the number of TAs needed to meet each course's capacity constraint, it is impossible to generate a feasible matching. Therefore, ${\cal I}$ is a no-instance of \mefe in this case.
\end{proof}

Therefore, we have proved that the above algorithm finds an \mefe, if it exists, for any instance by considering all cases, showing the correctness of the algorithm.}
\end{proof}

Due to Theorem~\ref{degree-cap1}, we have the following result, which is in contrast to Theorem~\ref{thm:nph-cap1} when considering the degree of courses.

\begin{corollary}\label{cor:deg2}
    \mefe can be solved in polynomial time when the degree of courses is at most two.
\end{corollary}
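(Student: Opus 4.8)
The plan is to observe that the hypothesis of Theorem~\ref{degree-cap1} is automatically satisfied in this setting, modulo a trivial preprocessing step, so the corollary follows immediately. First I would dispose of degenerate courses: a course $x_i$ with capacity $c_i = 0$ can simply be deleted from the instance, since no TA is ever assigned to it, it imposes no satisfaction constraint, and it never witnesses a merit-based envy relation (envy between two TAs is always certified by the course to which a TA is \emph{assigned}, not by an empty course). Hence I may assume $c_i \ge 1$ for every course.

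Next I would do a local feasibility check. Any feasible matching must assign to each course $x_i$ exactly $c_i$ distinct TAs that $x_i$ values positively, so if $c_i > d(x_i)$ for some course, the instance is a trivial {\sf NO}-instance and we return {\sf NO}. Thus we may assume $1 \le c_i \le d(x_i)$ for every course $x_i$. Combining this with the corollary's hypothesis $d(x_i) \le 2$ gives
\[
d(x_i) - c_i \;\le\; 2 - 1 \;=\; 1
\]
for every course $x_i$.

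Therefore the difference between the degree and the capacity of each course is at most one, and the instance lies exactly in the regime covered by Theorem~\ref{degree-cap1}; invoking that polynomial-time algorithm decides the instance (and outputs an \mefem matching if one exists), completing the proof. I do not expect any genuine obstacle: the only subtlety is the handling of the degenerate capacity cases ($c_i = 0$ or $c_i > d(x_i)$), each resolvable in constant time per course, after which the reduction to Theorem~\ref{degree-cap1} is immediate.
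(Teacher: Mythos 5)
Your proposal is correct and takes essentially the same route as the paper, which derives the corollary directly from Theorem~\ref{degree-cap1}; your additional preprocessing (discarding capacity-zero courses and rejecting instances with $c_i > d(x_i)$) just makes explicit the trivial observations needed to see that $d(x_i)-c_i\le 1$ holds. No gap.
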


Our next tractability result is in contrast to Theorem~\ref{thm:nph-cap1}, as it reduces the problem to \ssmtiffull (\ssmtif) in polynomial time, which can be solved in polynomial time~\cite{DBLP:conf/isaac/Kunysz18}

\begin{restatable}{theorem}{polycapone}($\clubsuit$)
     \mefe can be solved in polynomial time when the capacity of each course is one, and each TA has distinct valuations for positively valued courses.
\label{poly-cap1}\end{restatable}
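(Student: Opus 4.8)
The plan is to reduce \mefe, under the stated restrictions, to \ssmtif, which is polynomial-time solvable by~\cite{DBLP:conf/isaac/Kunysz18}. Given an instance of \mefe with all capacities equal to $1$, I build a \ssmtif instance on the same bipartition $(X,T)$. The acceptable pairs are exactly $\{(x,t): u_t(x)>0\}$; note this is symmetric, since $u_t(x)>0 \iff v_x(t)>0$ by the model. Each TA $t$ ranks its acceptable courses \emph{strictly} in decreasing order of $u_t$ — this is a genuinely strict list precisely because the hypothesis gives $t$ pairwise distinct positive valuations. Each course $x$ ranks its acceptable TAs by non-increasing grade $g_t(x)$, with two TAs tied iff they have equal grade in $x$; so the course side may carry real ties, which is why we need the ``T'' in \ssmtif. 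Finally the weight of the edge $(x,t)$ is $1$ if $v_x(t)\ge k$ and $0$ otherwise. The algorithm computes a maximum-weight strongly stable matching $\mu^\star$ of this instance (Kunysz's algorithm, which also reports if none exists); it returns $\mu^\star$ if $\mu^\star$ exists and has weight exactly $n=|X|$, and returns {\sf NO} otherwise.

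The heart of the argument is: a matching $\mu$ of the constructed instance is strongly stable with weight $n$ if and only if $\mu$ is an \mefem matching of the original instance. I would first record the elementary observation that, because each TA's positive valuations are pairwise distinct, whenever $(x,t)$ is a candidate blocking pair (so $x$ is acceptable to $t$ and $\mu(t)\ne x$) the TA side of the blocking condition is automatically \emph{strict}: $u_t(x)>u_t(\mu(t))$, treating an unmatched TA as worst. Consequently, when $\mu$ saturates every course, $(x,t)$ blocks $\mu$ under strong stability exactly when $u_t(x)>u_t(\mu(t))$ and $g_t(x)\ge g_{\mu^{-1}(x)}(x)$; writing $t'=\mu^{-1}(x)$, this is literally the condition that $t$ envies $t'$ on merit basis. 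Hence ``$\mu$ has no blocking pair'' and ``$\mu$ is merit-based envy-free'' coincide on course-saturating matchings. Then: if $\mu$ is \mefem it is feasible (all courses saturated), and since capacities are $1$ we have ${\sf AvgUtil}(x)=v_x(\mu^{-1}(x))\ge k$ for each $x$, so every matched edge has weight $1$ and the total is $n$; by the equivalence it also has no blocking pair, so it is strongly stable. Conversely, if $\mu$ is strongly stable with weight $n$, then since each of the $n$ courses contributes at most $1$ to the total, every course is matched (feasibility) and along a weight-$1$ edge, so ${\sf AvgUtil}(x)\ge k$ everywhere; strong stability together with all courses being saturated then gives merit-based envy-freeness.

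It remains to check the two {\sf NO} branches, i.e.\ that \mefe is a yes-instance iff $\mu^\star$ exists with weight $n$. If an \mefem matching $\mu$ exists, then by the claim $\mu$ is strongly stable of weight $n$, so a maximum-weight strongly stable matching exists and has weight $\ge n$; since no matching can have weight more than $n$, $\mu^\star$ has weight exactly $n$, and then $\mu^\star$ is itself \mefem by the claim. Contrapositively, if $\mu^\star$ does not exist or has weight $<n$, no \mefem matching exists. Since the reduction is clearly polynomial-time computable and~\cite{DBLP:conf/isaac/Kunysz18} runs in polynomial time, the whole procedure is polynomial.

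The step I expect to be the most delicate is nailing down the blocking-pair/merit-envy equivalence: one must handle the ``at least one strict preference'' requirement in the definition of a strongly stable blocking pair and the unmatched-agent cases carefully, and it is exactly the strict-TA-valuations hypothesis that rules out a spurious blocking pair (an \emph{indifferent} TA on a course where it has a weakly larger grade) that would correspond to no merit-envy at all. A secondary point to get right is that the per-course satisfaction threshold must be encoded through the $\{0,1\}$ edge weights rather than through edge deletion (which would hide some envy) or through weights $v_x(t)$ (maximizing the sum of the $v_x(t)$'s does not force every individual course above $k$).
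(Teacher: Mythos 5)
Your proposal is correct and follows essentially the same route as the paper: a reduction to \ssmtif with acceptable pairs $\{(x,t):u_t(x)>0\}$, grade-based (possibly tied) preferences on the course side, strict utility-based preferences on the TA side, $\{0,1\}$ edge weights encoding $v_x(t)\ge k$, and acceptance iff a maximum-weight strongly stable matching of weight $n$ exists. The delicate points you flag (the strictness of TA lists killing spurious blocking pairs, and encoding the threshold via weights rather than edge deletion) are exactly the ones the paper's correctness lemma relies on.
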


\hide{
\begin{proof}
Toward designing the algorithm, we reduce the problem to \ssmtifull (\ssmti), which is defined as follows. Given a set of men, say $M$, and women, say $W$, a preference list of women for each man, a preference list of men for each woman, (preference lists might be incomplete and/or contain ties); the goal is to decide whether there exists a strongly stable matching. A matching $\eta$ is called strongly stable if there does not exist an unmatched pair $(x,y)$ such that (i) either $x$ is unmatched in $\eta$ and $y$ is in the preference list of $x$, or $x$ strictly prefers $y$ over their matched partner in $\eta$, and (ii) either $y$ is unmatched in $\eta$ and $x$ is in the preference list of $y$, or $y$  is indifferent between $x$ and their matched partner, or $y$  strictly prefers $x$ over their matched partner in $\eta$. \ssmti can be solved in polynomial time~\cite[Theorem 3.6]{manlove1999stable}. 

Given an instance ${\cal I}$ of \mefe, we create an instance ${\cal I}$ of \ssmti as follows. The set of men $M$ is the same as the set of courses, and the set of women $W$ is the same as the set of TAs. 

If $v_x(t)\geq k$, then only man $x$ and woman $t$ are in each others' preference list.
We set the preferences of men according to the grades of TAs. Let $t,t'$ be two TAs such that $v_x(t)\geq k$, $v_x(t')\geq k$. If $g_t(x)>g_{t'}(x)$, then man $x$ strictly prefers $t$ over $t'$, and if $g_t(x)=g_{t'}(x)$, then $x$ prefers $t$ and $t'$ equally.  Next, we define preference lists of women which is on the basis of utilities of TAs that are strict. For a TA $t\in T$ such $v_x(t)\geq k$, and $v_y(t)\geq k$, if $u_t(x) > u_t(y)$, then woman $t$ prefers man $x$ over $y$, i.e., $x \succ_t y$. Let ${\cal I}$ be this constructed instance of \ssmti. Note that the women's preference list do not have ties. We find a solution $\eta$ to ${\cal I}$ using the known polynomial-time algorithm in~\cite[Theorem 3.6]{manlove1999stable}. 
If $\eta$ is a solution to ${\cal I}$, then we return {\sf YES}, otherwise ${\sf NO}$. Next, we prove the correctness. In particular, we prove the following.

\begin{restatable}{lemma}{polyssmti}
   ${\cal I}$ is a yes-instance of \mefe if and only if the above algorithm return {\sf YES}. 
\label{lem:poly-ssmti}\end{restatable}

\begin{proof}
    In the forward direction, let $\mu$ be a solution to ${\cal I}$ We claim that $\eta=\mu$ is a solution ${\cal I}$. Consider an unmatched pair $(m,w)$. Since $\mu$ is a solution to ${\cal I}$ and $k>0$, all the courses are matched, hence, all men are matched in $\eta$. Consider two TAs $w$ and $\eta(m)$. Since, $\mu$ does not have merit-based envy, either $g_w(m)<g_{\eta(m)}(m)$ or $u_w(m)\leq u_w(\mu(w))$. In the former case, $m$ strictly prefers $\eta(m)$ over $w$. In the latter case, since the preference list of $w$ is strict, she strictly prefers $\mu(w)$ over $m$. Thus, $\eta$ is strongly stable. Thus, the algorithm returns {\sf YES}.

   In the reverse direction, since we return {\sf YES}, if we find a solution, ${\cal I}$ is a yes-instance of \mefe. 
\end{proof}
This completes the proof.
\end{proof}
}
Our next result is in contrast to Theorem~\ref{thm:nph-2courses}. The result is due to the observation that a TA with high grade needs to be prioritised over low grade TAs in a solution. Thus, we sort the TAs based on their grades and match the top capacity many TAs to the course. We also observe that an instance can be partitioned into distinct sub-instances, where each sub-instance has only one course.

\hide{
\begin{proof}
Let ${\cal I}$ be an instance of \mefe and $x$ be the unique course in ${\cal I}$ and $c_x$ be its capacity. If the number of TAs is less than $c_x$, then clearly, it is a no-instance. If the number of TAs is $c_x$, then we match all the TAs to $x$ and check whether this assignment is a solution to ${\cal I}$ (i.e., it satisfies all constraints). If it is, then we have a trivial yes-instance, otherwise we have a no-instance. 
Next, we consider the case when the number of TAs is more than $c_x$.
Our algorithm is based on the following idea: if a TA $t$ is unmatched, then he will envy a matched TA $t'$, if the grade of $t$ in $x$ is at least the grade of $t'$ in $x$. Thus, our algorithm proceeds as follows: sort the TAs based on their grades in $x$ and match the first $c_x$ many TAs to $x$. Let $\mu$ be the resultant matching. If $\mu$ is a solution to ${\cal I}$ (i.e., it satisfies all the constraints), then return $\mu$; otherwise, return {\sf NO}. Next, we prove the correctness of the algorithm. In particular, we prove the following. 

\begin{restatable}{lemma}{course1P}
   If ${\cal I}$ is a yes-instance of \mefe, then $\mu$ is a (unique)solution to ${\cal I}$.  
\label{lem:course1P}\end{restatable}

\begin{proof}
    Let $\eta$ be a solution to ${\cal I}$. We claim that $\eta=\mu$. Suppose not, then there exists a TA $t$ that is matched in $\mu$, but not in $\eta$, Since $c_x$ TAs are matched in both $\mu$ and $\eta$, there is also a TA $t'$ that is matched in $\eta$, but not in $\mu$. Due to the construction of $\mu$, $g_t(x)\geq g_{t'}(x)$. Since $t$ is unmatched in $\eta$, $t$ envies $t'$ in $\eta$, a contradiction. 
\end{proof}
The correctness follows due to Lemma~\ref{lem:course1P}.

\end{proof}}

\begin{restatable}{theorem}{polyTAdegone}($\clubsuit$)
       \mefe can be solved in polynomial time when the degree for each TA is one.
\label{thm:poly-degree1}\end{restatable}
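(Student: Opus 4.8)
The plan is to use the fact that when each TA positively values exactly one course (degree one), the bipartite graph $G$ is a disjoint union of stars: each course $x$ together with its neighbourhood $N(x)$ is one connected component, and since a TA $t$ with $N(t)=\{x\}$ occurs in no other component, the sets $N(x)$ are pairwise disjoint (together with possibly some courses $x$ having $N(x)=\emptyset$, which are isolated). First I would argue the components can be solved independently. A TA in the component of $x$ has positive utility only for $x$, hence it can never envy a TA matched to a different course, because that coveted course would give it utility $0$, which is not strictly more than its current utility; and both the feasibility and the ${\sf AvgUtil}$ constraint of $x$ depend only on which TAs of $N(x)$ are matched to $x$. An isolated course with positive capacity immediately makes the instance a no-instance (it cannot be filled with zero-valued TAs), and isolated courses with capacity $0$ play no role.

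Next I would analyse a single component, consisting of a course $x$ with capacity $c_x$ and TA set $N(x)$; here solving \mefe amounts to choosing the set $S\subseteq N(x)$ of TAs matched to $x$, where feasibility forces $|S|=c_x$ (so we may assume $|N(x)|\ge c_x$, else return {\sf NO}). The key observation is the exact shape of the envy relation among degree-one TAs: a TA assigned to $x$ envies nobody, since its only positively valued course is $x$, which it already holds, so no course yields it strictly larger utility; and an unassigned TA $t_i\in N(x)\setminus S$ envies an assigned TA $t_j\in S$ precisely when $g_{t_i}(x)\ge g_{t_j}(x)$. Hence $\mu$ is merit-based envy-free iff every TA in $S$ has strictly larger grade in $x$ than every TA in $N(x)\setminus S$; equivalently, writing $\theta$ for the largest grade in $x$ among TAs outside $S$ (and $\theta=-\infty$ when $S=N(x)$), we need $S=\{t\in N(x):g_t(x)>\theta\}$ with $|S|=c_x$. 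Since the sets $\{t\in N(x):g_t(x)>\theta\}$ are nested and strictly grow in size as $\theta$ decreases through the distinct grade values occurring in $N(x)$, at most one of them has size exactly $c_x$, so the candidate $S$ is unique: sort $N(x)$ by grade in $x$ and take the top $c_x$ TAs, provided the cut does not split a group of equally graded TAs; if it does, this component (and hence the instance) is a no-instance.

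Finally I would verify the satisfaction constraint ${\sf AvgUtil}(x)=\frac{1}{c_x}\sum_{t\in S}v_x(t)\ge k$ on this unique candidate $S$; if every component passes both tests, the union of the per-component assignments is the desired \mefem matching, and otherwise we output {\sf NO}. Correctness follows from the independence of components and the uniqueness of $S$ within each component, and the running time is polynomial, being dominated by sorting each $N(x)$ and linear scans. I expect the main obstacle to be pinning down the envy relation exactly---in particular the tie case, where a TA outside $S$ tying an inside TA's grade does create envy, which is what forces the clean-cut condition and the uniqueness of $S$; a secondary point to get right is the convention that an unassigned TA does not envy another unassigned TA, since the coveted course still yields utility $0$. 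This tractability is to be contrasted with Theorem~\ref{thm:nph-2courses}, where the degree of each TA is two.
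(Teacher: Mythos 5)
Your proposal is correct and follows essentially the same route as the paper: decompose the instance into one star component per course, within each component sort the TAs of $N(x)$ by grade and take the top $c_x$ (this candidate is unique, and any tie at the cut or a failed ${\sf AvgUtil}$ check yields {\sf NO}), then combine the per-component matchings. The only cosmetic difference is that you rule out the tie-at-the-cut case explicitly in advance, whereas the paper constructs the sorted matching and detects the resulting envy in a final verification step.
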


\hide{
\begin{proof}
Let ${\cal I}$ be an instance of \mefe, where 
the degree of each TA is one. 
We create a set of sub-instances $I = \{{\cal I}_1, {\cal I}_2,\ldots, {\cal I}_n\}$ as follows. For every $j\in [n]$, ${\cal I}_j$ contains the course $x_j$ and its neighbors as the set of TAs. The utilities, grades and $k$ is same as in ${\cal I}$. We solve each ${\cal I}_j$ using the algorithm in Theorem~\ref{thm:poly-course1}. If the algorithm returns {\sf NO} for any instance, then we return {\sf NO}. Otherwise, let $\mu_j$ be the matching returned for the instance ${\cal I}_j$. We construct a matching $\mu$ as follows: for every $j\in [n]$, and TA $t$ in the instance ${\cal I}_j$, $\mu(t)=\mu_j(t)$. If $\mu$ is a solution to ${\cal I}$, then we return $\mu$; otherwise, we return {\sf NO}.

\begin{restatable}{lemma}{deg1P}
${\cal I}$ is a yes-instance of \mefe if and only if the above algorithm returns a matching.
   
\label{lem:deg1P}\end{restatable}

\begin{proof}
In the forward direction, let $\eta$ be a solution to ${\cal I}$. If we denote the matching corresponding to each course, $x_j \in X$, by $\eta_j$, then $\eta_j(t) = \eta(t)$ for all $t \in N(x_j)$. Clearly, $\eta_j$ is a solution to ${\cal I}_j$. As argued in Theorem~\ref{thm:poly-course1}, $\eta_j$ is the unique solution to ${\cal I}_j$. The algorithm in Theorem~\ref{thm:poly-course1} returns the unique solution $\mu_j=\eta_j$. Due to our construction of $\mu$, it is the same as $\eta$. Hence, we return the matching $\mu$. 

In the reverse direction, if we return a matching $\mu$ which is a solution, then clearly, ${\cal I}$ is a yes-instance of \mefe.
 
\end{proof}
This completes the proof.
\end{proof}}

\begin{corollary}\label{cor:deg1}
    \mefe can be solved in polynomial time when the number of courses is one.
\end{corollary}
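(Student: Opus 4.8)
The plan is to observe that Corollary~\ref{cor:deg1} is an immediate special case of Theorem~\ref{thm:poly-degree1}. If there is only one course $x$ in the instance, then every TA that has any positively valued course can only value $x$ (a TA with degree $0$ is irrelevant, as such a TA is never matched and never envies anyone, since $u_t(\mu(t')) > u_t(\mu(t)) = 0$ would require $u_t(\mu(t')) > 0$, contradicting that $t$ positively values no course). Hence every TA has degree at most one, which is exactly the hypothesis of Theorem~\ref{thm:poly-degree1}. Invoking that theorem, \mefe is solvable in polynomial time.

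Concretely, I would first restrict attention to the TAs in $N(x)$, since TAs outside $N(x)$ can be left unassigned without affecting feasibility (they would only be assigned to zero-valued courses, which is forbidden) or merit-based envy-freeness (such a TA $t$ has $u_t(z) = 0$ for every course $z$, so $t$ envies no one). Then the single-course subroutine from Theorem~\ref{thm:poly-course1} applies directly: sort the TAs in $N(x)$ by their grade $g_t(x)$, match the top $c_x$ of them to $x$, and check that the resulting assignment is feasible (enough TAs exist, i.e.\ $|N(x)| \ge c_x$), meets ${\sf AvgUtil}(x) \ge k$, and is merit-based envy-free. The correctness argument is the one already given for the degree-one case: any unmatched TA $t$ with grade at least that of some matched TA $t'$ would envy $t'$, so the matched set must be a set of $c_x$ TAs of highest grade, and by the monotone structure of the sort this forced set is essentially unique.

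There is essentially no obstacle here; the only mild subtlety is the bookkeeping that a TA of degree zero (valuing no course, hence in particular not valuing $x$) is never a source of envy and can be ignored, so that the reduction to ``degree of each TA is one'' is literally valid rather than merely morally valid. Everything else follows verbatim from Theorem~\ref{thm:poly-degree1}.
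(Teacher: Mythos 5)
Your proof is correct and takes essentially the same route as the paper, which also derives the corollary by observing that with a single course every TA has degree (at most) one and then invoking Theorem~\ref{thm:poly-degree1}. Your explicit handling of degree-zero TAs is a harmless (and slightly more careful) refinement of the same observation.
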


\begin{proof}

The degree of every TA is 1 when the number of courses is one. Hence Theorem~\ref{thm:poly-degree1} is applicable.
\end{proof}

Another restricted instance we analyze is one in which the capacities of all courses are constant, and the  number of courses is also constant. In this case, the total number of possible solutions are polynomial which can be enumerated in polynomial time. Hence, we have the following result. 

\begin{restatable}{theorem}{polyconstantcoursecap}($\clubsuit$)
    \mefe can be solved in polynomial time when the number of courses and capacity of each course is constant.
\label{poly-constant-course-cap}\end{restatable}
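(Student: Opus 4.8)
The plan is a brute-force enumeration over the "type" of a matching, exploiting that with $n$ courses each of capacity at most $\mathsf{cap}$ (both constants), the total number of assigned TAs is at most $n\cdot\mathsf{cap}=O(1)$. First I would handle the feasibility count: in any feasible matching exactly $\sum_{i\in[n]} c_i \le n\cdot\mathsf{cap}$ TAs are saturated, so a matching is determined by choosing which TAs go to which course. Rather than enumerate subsets of $T$ directly (which is not polynomial), I would enumerate over the courses: for each course $x_i$ we must pick a size-$c_i$ subset of $N(x_i)$ to be $\mu^{-1}(x_i)$, and these subsets must be pairwise disjoint. Since $c_i\le \mathsf{cap}$, there are at most $\binom{|N(x_i)|}{c_i}\le |N(x_i)|^{\mathsf{cap}} \le m^{\mathsf{cap}}$ choices per course, hence at most $m^{n\cdot\mathsf{cap}} = m^{O(1)}$ candidate matchings in total, which can be enumerated in polynomial time.

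The key steps, in order: (1) split $G_{\cal I}$ into connected components and observe (as in Theorem~\ref{degree-cap1}) that it suffices to solve each component, or simply work with the whole instance directly since the bound $m^{n\cdot\mathsf{cap}}$ already suffices; (2) for each choice of an ordered tuple of subsets $(S_1,\dots,S_n)$ with $S_i\subseteq N(x_i)$, $|S_i|=c_i$, check that the $S_i$ are pairwise disjoint — if so this defines a feasible matching $\mu$ with $\mu^{-1}(x_i)=S_i$ and all remaining TAs unassigned; (3) for each such feasible $\mu$, check in polynomial time the two \mefem conditions: that ${\sf AvgUtil}(x_i)=\frac{1}{c_i}\sum_{t\in S_i} v_i(t)\ge k$ for every $i$, and that no pair of TAs is merit-based envious (iterate over all ordered pairs $(t,t')\in T\times T$ and test $g_{t}(\mu(t'))\ge g_{t'}(\mu(t'))$ together with $u_t(\mu(t'))>u_t(\mu(t))$, where an unassigned TA is treated as having utility $0$ from $\emptyset$); (4) return any $\mu$ passing all checks, or {\sf NO} if none does. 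Correctness is immediate: every feasible matching arises as some tuple $(S_1,\dots,S_n)$ in the enumeration, so if a solution exists the algorithm finds one.

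I do not expect a genuine obstacle here; the only point requiring a little care is arguing that the enumeration is genuinely polynomial — i.e. that we enumerate per-course subsets (bounded by $m^{\mathsf{cap}}$ each, $n$ courses) rather than arbitrary subsets of $T$, and that $n,\mathsf{cap}$ being constants makes $m^{n\cdot\mathsf{cap}}$ a polynomial in the input size. A secondary minor point is to make the convention for unassigned TAs explicit in the envy check (an unassigned TA $t$ envies $t'$ whenever $g_t(\mu(t'))\ge g_{t'}(\mu(t'))$ and $u_t(\mu(t'))>0$), matching the definition in the preliminaries.
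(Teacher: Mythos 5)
Your proposal is correct and takes essentially the same route as the paper: both are brute-force enumerations of all feasible matchings, bounded by $m^{O(1)}$ because $\sum_i c_i \le n\cdot\mathsf{cap}$ is a constant (the paper counts via $\binom{m}{c}$ choices of the matched set times constantly many partitions, you count via per-course subsets; the bounds agree). The additional care you take with disjointness, the $N(x_i)$ restriction, and the envy check for unassigned TAs is consistent with the paper's implicit "check whether the matching is a solution" step.
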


\hide{
\begin{proof}
In this case, we argue that the total number of feasible matchings are at most $m^{\OO(1)}$, where $m$ is the number of TAs. Thus, we can try all possible feasible matchings and return the one that is a solution. If no matching meets our fairness criteria, then we return {\sf NO}. 
Note that the total number of TAs that get assigned in a matching is a constant value $c=\sum_{x_i \in X} c_i$. Given the set of TAs $T$, there are $\binom{m}{c}$ ways of choosing the set of matched TAs. For each of those ways, we need to further choose $c_i$ TAs for each course $x_i$. The number of such choices for each course $x_i$ are bounded by $\binom{c}{c_i}$, which is a constant given that $c$ is a constant. Since the total number of courses is constant, the product of all the possible choices for courses given a constant $c$ will also be a constant. Therefore, the total number of possible assignments is $\binom{m}{c} \cdot \OO(1)$ which is $m^{\OO(1)}$.
\end{proof}}

Next, we consider the case when every course has only two positive distinct values for TAs, and obtain the tractability, with some restrictions on valuations and grades of TAs. 

\begin{restatable}{theorem}{2val}
\mefe can be solved in polynomial time when every course has only two distinct positive valuations, and for a course no two TAs have the same grade. Furthermore, each TA has distinct valuations for positively valued courses.
\label{thm:2val}\end{restatable}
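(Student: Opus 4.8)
The plan is to exploit the distinctness hypotheses and the ``two positive values'' hypothesis together. For a course $x$ let $a_x > b_x$ be its two positive values and let $H_x$ (resp.\ $L_x$) be the TAs it values $a_x$ (resp.\ $b_x$); then ${\sf AvgUtil}(x)\ge k$ holds for $\mu^{-1}(x)$ exactly when $\mu^{-1}(x)$ contains at least $h_x:=\max\{0,\lceil c_x(k-b_x)/(a_x-b_x)\rceil\}$ TAs from $H_x$, and if $a_x<k$ or $|H_x|<h_x$ for some course then the instance is immediately a no-instance. As in Theorem~\ref{degree-cap1} I would first split $G$ into connected components and handle each separately, since zero-valued TAs are never assigned. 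Because no two TAs share a grade in a course, each course gets a strict priority (by grade) over $N(x)$; because no TA values two of its courses equally, each TA gets a strict preference (by value) over $N(t)$. So the whole problem is a strict-preference Hospitals--Residents-type instance plus the per-course lower quota $h_x$ on the class $H_x$ and the requirement that every course be \emph{exactly} full.

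The first substantive step is a structural lemma: in any \mefem matching $\mu$, for every course $x$ and every TA $t$ with $u_t(x)>u_t(\mu(t))$ (``$t$ wants $x$''), the grade of $t$ in $x$ is strictly below the grade of \emph{every} TA in $\mu^{-1}(x)$; equivalently $\mu^{-1}(x)$ is the set of the $c_x$ highest-grade TAs among $\mu^{-1}(x)$ together with the TAs wanting $x$. This shows that merit-based envy-freeness of a feasible matching is precisely Hospitals--Residents stability with the above strict preferences, and that the value class of a held TA is irrelevant to the envy test — the quota $h_x$ influences the matching only through a count.

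Next I would reduce to a polynomial-time-solvable stable-matching problem. The base reduction follows the one behind Theorem~\ref{poly-cap1}: courses on one side with grade-induced preferences, TAs on the other with value-induced preferences, keeping the edge $(x,t)$ whenever $t$ is positively valued by $x$. The ``exactly full'' requirement is handled by a Rural-Hospitals-type argument: a feasible envy-free matching is exactly a stable matching that saturates every course, all stable matchings saturate the same courses, so it suffices to compute one and check it is everywhere full. The remaining ingredient is the lower quota $h_x$ on $H_x$; since $\{H_x, N(x)\setminus H_x\}$ is (trivially) a laminar family, this is a classified/reserved-seat constraint of the simplest kind, which I would encode into the weighted strongly stable matching machinery of \ssmtif~\cite{DBLP:conf/isaac/Kunysz18} by a deferred-acceptance-with-reserves style construction and a weighting that forces an honest stable matching to meet every $h_x$ whenever some \mefem matching does. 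Finally, verify in polynomial time that the returned matching is feasible and envy-free, outputting it or {\sf NO}.

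The main obstacle is exactly this last encoding. A naive reduction (e.g.\ splitting each course into $h_x$ reserved and $c_x-h_x$ open copies) can certify as ``stable'' a matching in which a low-value, high-grade TA wanting $x$ is left envying a low-grade, high-value TA that $x$ holds only to fill its reserved seats — this matching is not \mefem, yet the quota is met, so the reduction would be unsound. Overcoming this is where the two-value hypothesis is essential: because there is a single value class per course, a course can never be forced to keep a \emph{low-grade} member of $H_x$ in place of a \emph{higher-grade} TA wanting it (that would create genuine envy and no reshuffling repairs it), so the quota must in fact hold for the grade-forced top-$c_x$ set itself. The heart of the proof is to formalise ``either the grade-determined assignment already meets every $h_x$, or no \mefem matching exists in this component'' and to turn this dichotomy into the polynomial-time construction and check sketched above; the strictness of grades and valuations is used throughout to keep these top sets uniquely determined.
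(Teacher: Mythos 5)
Your skeleton is the paper's own: you derive the same per-course quota $h_x=\max\{0,\lceil c_x(k-b_x)/(a_x-b_x)\rceil\}$ from the two linear constraints (the paper's $a_i$ from Equations~\ref{eq:1}--\ref{eq:2}), split each course into $h_x$ ``reserved'' seats whose lists contain only $H_x$ and $c_x-h_x$ ``open'' seats, rank seats by grade and TAs by utility (both strict by hypothesis), and look for a woman-saturating stable matching. That is exactly the construction in the proof of Theorem~\ref{thm:2val} and Lemma~\ref{lem:twotype}, except that the paper reduces to plain Gale--Shapley {\sc Stable Matching} and proves both directions of the equivalence directly, with no detour through \ssmtif, reserves, or weights.

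The gap is that you stop precisely at the step that constitutes the proof. You correctly isolate the delicate point: a reserved seat's truncated list means a low-value, high-grade TA $t$ who wants $x$ cannot block with a reserved seat occupied by a low-grade member of $H_x$, so stability of the split instance does not \emph{obviously} imply merit-based envy-freeness. But your proposed repair --- the dichotomy ``either the grade-determined assignment already meets every $h_x$, or no \mefem matching exists'' --- is neither proved nor well posed: which $c_x$ TAs a course must hold in a \mefem matching is not determined by grades alone, since it depends on which higher-grade TAs are absorbed by courses they prefer, i.e., on the matching itself, so there is no a priori ``grade-forced top-$c_x$ set'' against which to test the quota. Likewise, ``a weighting that forces an honest stable matching to meet every $h_x$'' is a wish, not a construction. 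To close the argument you must prove the equivalence for a concrete encoding, as Lemma~\ref{lem:twotype} attempts by extracting a blocking pair from any envious pair --- an argument you should in fact scrutinise in exactly the scenario you raise, since it requires the envious man to appear on the envied woman's preference list, which fails when that woman is a reserved seat and the envious TA is low-valued. As written, your proposal asserts the central soundness claim rather than establishing it, and offers no counterexample or alternative encoding in its place.
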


\begin{proof}

Let ${\cal I}=(X,T,\{v_i\}_{i\in X}, \{u_i\}_{i\in T}, \{g_i\}_{i\in T},$ $\{c_i\}_{i\in X},k\}$ be an instance of \mefe such that $v_i\colon T\rightarrow \{0,q_{i},q'_{i}\}$, where $q_i,q'_i$ are positive integers, and for any pair of TAs $t_i,t_j \in T$, $g_i(x)\neq g_j(x)$, for any course $x$. Without loss of generality, let $q_i \geq q'_{i}$. Next, we find the number of $q_i$ and $q'_i$ valued items assigned to $x_i$ in a solution. Let $a_{i}, a'_{i}$ be the number of $q_i$ and $q'_i$ valued items assigned to $x_i$, respectively.  If $q_i=q_i'$, then $a_i=c_{i}, a'_i=0$. Otherwise, we find the following equations for each course independently.  Due to our feasibility and satisfaction constraints for courses, we know that for each $i\in [n]$,

\begin{align}
  a_iq_i + a'_iq'_i &\geq kc_i \label{eq:1} \\
  a_i + a'_i &= c_i \label{eq:2}
\end{align}

If the set of the above two equations does not have a solution, then clearly, $\cal{I}$ is a no-instance of \mefe. Suppose that there is a solution of the set of Equation~\ref{eq:1} and \ref{eq:2} for every course. If we solve these two equations, we obtain that $a'_i \leq \frac{c_i(q_i-k)}{q_i-q'_i}$ and $a_i \geq \frac{c_i(k-q'_i)}{q_i-q'_i}$. Note that if $a_i'$ is negative, then $a_i>c_i$, which violates the feasibility of a matching. Thus, if $q_i<k$, it is a no-instance. So next we assume that $q_i\geq k$. We choose $a_i=\max\{0,\lceil \frac{c_i(k-q'_i)}{q_i-q'_i}\rceil\}$ and $a'_i = c_i - a_i$. We next give a polynomial time reduction from \mefe to the {\sc Stable Matching} problem for bipartite graphs, in which given a set of men, say $M$, and women, say $W$, a preference list of women over men, a preference list of men over woman, (preference lists might be incomplete); the goal is to decide whether there exists a matching such there is no pair of man and woman $(m,w)$ such that (i) $m$ is unmatched or prefers $w$ over his matched partner, and (ii) $w$ is unmatched or prefers $m$ over her matched partner. Such a pair of $(m,w)$ is called a \emph{blocking pair}, and a matching without blocking pairs is called a \emph{stable matching}.  

We create an instance of the {\sc Stable Matching} problem ${\cal J}$ as follows. The set of men is $M=T$, i.e., corresponding to every TA in $T$, we have a man in $M$. Corresponding to every course $x_i\in X$, we have $c_i$ women, say $w_i^1,\ldots, w_i^{c_i}$. If $a_i>0$, let $W_i=\{w_i^1,\ldots,w_i^{a_i}\}$, and if $a'_i >0$, let $W'_i=\{w_i^{a_i+1},\ldots,w_i^{c_i}\}$.  Next, we define the preference list of every woman $w$, say $P_w$, as follows. For every $w\in W_i$, man $t$ is in $P_w$ if and only if the valuation of course $x_i$ for the TA $t$ is $q_i$.  For every woman $w\in W'_i$, man $t$ is in the preference list of $w$ if and only if the valuation of course $x_i$ for TA $t$ is positive. Recall that in an instance the {\sc Stable Matching problem} a man $m$ is in the preference list of a woman $w$ if and only if $w$ is in the preference list of $m$. Next, we define the ordering of men in the preference list of every woman, which is based on the grades. Since no two TAs have the same grade for a course, for men $t,t'$ in $P_w$, woman $w$ prefers $t$ more than $t'$ if and only if $g_t(w)>g_{t'}(w)$. Next, we define the ordering of women in the preference list of every man. Let $P_t$ be the set of women in the preference list of the man $t$. Consider a man $t$. If $u_t(x_i) > u_t(x_j)$, where $x_i,x_j \in X$, then man $t$ prefers women in $w\in W_i \cup W'_i$ more than woman $\hat{w} \in W_j \cup W'_j$, where $w, \hat{w} \in P_m$. Furthermore,  if $w\in P_m\cap W_i$ and $\hat{w}\in P_m\cap W'_i$, then $m$ prefers $w$ more than $\hat{w}$. If $w,\hat{w} \in P_m \cap W_i$, then $m$ order them arbitrarily; similarly women in $P_m\cap W'_i$ are ordered arbitrarily. This completes the construction of ${\cal J}$. Proof of correctness can be found in supplementary. Since the {\sc Stable Matching} problem can be solved in polynomial time~\cite{GS62college}, the algorithm runs in polynomial time. 
\hide{ Next, we prove the correctness. In particular, we prove the following result:\shubham{i think it is important to say that the arbitrary preference we create is same for all TA / men}

\begin{restatable}{lemma}{lem:twotype}
    ${\cal I}$ is a yes-instance of \mefe if and only if ${\cal I}$ has a women-saturating stable matching. 
\label{lem:twotype}\end{restatable}

\begin{proof}
    In the forward direction, let $\mu$ be a solution to ${\cal I}$. We construct a matching $\eta$ for the instance ${\cal I}$ as follows. Consider the course $x_i\in X$. Let ${\sf sort}(\mu^{-1}(x_i))$ contains the TAs in $\mu^{-1}(x_i)$ in the decreasing order of their grades in $x_i$. Recall that no two TAs have the same grade in any course. Now, we consider men in $\mu^{-1}(x_i)$ in the order they appear in ${\sf sort}(\mu^{-1}(x_i))$ and they are matched to most preferred unmatched woman in $W_i\cup W'_i$. Since we always match an unmatched man to an unmatched woman, it is a matching. Let this matching be $\eta$. Next, we argue that $\eta$ is a stable matching for $\cal{I}$. Consider an unmatched pair of man and woman, say $(m,w)$. 
    Let $w$ be a woman corresponding to the course $x_i$. Due to our construction, corresponding to every course $x_i$, we have $c_i$ women. Since $\mu$ is a solution to ${\cal I}$ and every man in $\mu^{-1}(x_i)$ is matched to a woman corresponding to the course $x_i$, every woman is matched in $\eta$. If $m$ is unmatched in $\eta$, then TA $m$ is unmatched in $\mu$. , 
    If $w$ prefers $m$ over $\eta(w)$, then, due to the construction of the instance ${\cal I}$, $g_{\eta(w)}(x_i)<g_{m}(x_i)$. Since $m$ is unmatched in $\mu$, $m$ envies $\eta(w)$, a contradiction. Consider the case when $m$ is matched. Then, 
    suppose that $m$ prefers $w$ over $\eta(m)$. Suppose that $w, \eta(m) \in W_i$. Due to the construction of $\eta$, $\eta(w)$ is before $m$ in ${\sf sort}(\mu^{-1}(x_i))$. Thus, $g_{\eta(w)}(x_i)>g_{m}(x_i)$. Thus, $w$ prefers $\eta(w)$ more than $m$. Hence, $mw$ is not a blocking pair. Similarly, we can argue when $w, \eta(m) \in W'_i$. Suppose that $\eta(m)$ is a woman corresponding to $x_j$. Then, $\eta(m)\in W_j \cup W'_j$. Since $m$ prefers $w$ over $\eta(m)$, TA $m$ prefers course $x_i$ over $x_j$. Then, if $w$ prefers $m$ over $\eta(w)$, $g_m(x_i)>g_{\eta(w)}(x_i)$. Hence, TA $m$ envies TA $\eta(w)$, a contradiction to the fact that $\mu$ is a solution to ${\cal I}$.

    In the reverse direction, let $\eta$ be a stable matching in ${\cal I}$ that saturates all women. We construct a matching $\mu$ for ${\cal I}$ as follows. If a woman $w$ corresponding to a course $x_i$ is matched to a man $m$, then for TA $m$, $\mu(m)=x_i$. Since $\eta$ saturates all women, and corresponding to every course $x_i$, we have $c_i$ women, $\mu$ is a feasible matching. Furthermore, due to the construction of preference lists no course/TA is matched to zero-valued TA/course. Next, we argue that for every course $x_i$, ${\sf AvgUtil}(x_i)\geq k$. Recall that due to the construction, $a_i$ many $q_i$ valued TAs are matched to $x_i$ and $a'_i$ many $q'_i$ valued TAs are matched to $x_i$. Thus, due to Equation~\ref{eq:1}, ${\sf AvgUtil}(x_i)\geq k$. Next, we argue that there is no envy between any pair of  TAs. Towards the contradiction, suppose that $t$ envies $t'$. Then, $g_t(\mu(t'))>g_{t'}(\mu(t'))$ and $u_t(\mu(t'))>u_t(\mu(t))$. Recall that we have a man $t$ corresponding to every TA $t$. Then, due to the construction, man $t$ prefers woman $\eta(t')$ more than $\eta(t)$ and woman $\eta(t')$ prefers man $t$ more than $t'$. Thus, $(t,\eta(t'))$ is a blocking pair, a contradiction.
\end{proof}
This completes the proof.}
\end{proof}

\hide{
\begin{restatable}{theorem}{ConstCourseTypesofVal}
\mefe can be solved in polynomial time when the number of courses is constant, the types of valuation corresponding to each course are constant, the valuations assigned by TAs to courses are distinct, and the grades of TAs corresponding to a course are distinct.
\label{thm:ConstCourseTypesofVal}\end{restatable}

\begin{proof}
Let ${\cal I}$ be an instance of \mefem. Given that $X$ is the set of courses for ${\cal I}$, $|X|$ is a constant. Additionally, for distinct courses $x_i, x' \in X$ and a TA $t_\ell \in T$ such that $u_\ell(x_i)>0$ and $u_\ell(x')>0$, $u_\ell(x_i) \neq u_\ell(x')$. Also, for each pair of TAs $t_\ell, t \in T$ we have $g_{t_\ell}(x_i) \neq g_{t}(x_i)$, where $x_i \in X$. In this restricted setting, for any $x_i \in X$ and $t_\ell \in T$, $v_i(t_\ell)$ is constrained to take values from a subset $V_i \in \mathbb{Z}_{\geq 0}$, i.e., $v_i(t_\ell) \in V_i$. For all $i \in [n]$, the size of set $V_i$, given by $|V_i|$, is also a constant. We write set $V_i$ by ${i_1, i_2,..., i_{|V_i|}}$ and denote the number of TAs of valuation $i_j$ assigned to $x_i$ by $a_j^i$. Therefore, we can write a equation for the capacity constraint corresponding to each course $x_i$ -
\begin{center}
    $\sum_{i_j \in V_i} a_j^i = c_i$ (1)
\end{center}
The number of solutions to such an equation is given by $(\binom{c_i + |V_i| - 1}{|V_i| - 1})$. Given that $|V_i|$ is a constant, we can express the computational complexity of the expression by $\OO(c_i^{|V_i|})$. We perform the same operation for other courses to evaluate the total number of possible solutions. Making an arbitrary assumption, without loss of generality, that the number of solutions to (1) for $x_i$ is higher than that that for any other course, we observe that the total number of solutions would be upper bounded by $\OO(c_i^{|V_i|*|X|})$ where the exponent is a constant. Therefore, we have a polynomial number of guesses. We create an instance ${\cal I}_s$. This instance involves creation of $c_i$ many copies, each of capacity 1, corresponding to each course $x_i \in X$ in ${\cal I}$. The set of the copies corresponding to $x_i$ is denoted by $X_i={x^1_{i},x^2_{i},...,x^{c_{i}}_i}$. For ${\cal I}_s$, the parameter $k$ corresponding to each copy is 0 (which indicates that all copies would be satisfied with their assignment). For each $j \in [|V_i|]$, $a_j^i$ many copies belonging to $X_i$ only have an edge with TAs $t_\ell \in T$, such that $u_\ell(x_i) > 0$ and $v_i(t_\ell) = i_j$ (we denote this set of TAs by $T^j_i$). We denote this set of $a_j^i$ copies by $X^j_i$. For each TA $t_\ell$ such that $v_i(t_\ell) = i_j$, we order the valuations for the $a_j^i$ copies. Each such TA values the copy $x^r_{i} \in X^j_i$ by $u_\ell(x^r_i) = u_\ell(x_i) + r*\epsilon$ . We chose a value of $\epsilon$ small enough such that the maximum valuation given to any copy by $t_\ell$ (at max, equalling $v_i(t_\ell) + c_i*\epsilon$) is lesser than $v_j(t_\ell)$ for any other course $x_j \in X$, such that $v_j(t_\ell) > v_i(t_\ell)$. Also, for all TA $t_\ell \in T$, $x_i \in X$ and $x^r_{i} \in X_i$, we have $g_\ell(x_i) = g_\ell(x^r_i)$. Recall that in an instance of \mefem with the restrictions given in Theorem 5, the only conditions were that $c_i = 1$ for all $x_i \in X$ and $u_\ell(x_i) \neq u_\ell(x')$ for all $t_\ell \in T$ and all distinct pairs $x_i, x' \in X$. Since the valuation given by TAs to each copy is distinct as well, ${\cal I}_s$ satisfies both of these conditions. Therefore, we can run the Theorem 5 algorithm on ${\cal I}_s$ and return a \mefe, $\mu$, if it exists. If it does, we create a matching $\eta$ for ${\cal I}$. For all $t \in T$ and $i \in [n]$ such that $\mu(t) \in X_i$, we have $\eta(t) = x_i$. If $\mu(t) = \emptyset$, then $\eta(t) = \emptyset$. We claim that $\eta$ is a merit-based envy free matching for ${\cal I}$ in this case.

\begin{restatable}{lemma}{lem:constantvalcourses}
Iff $\mu$ is an \mefe for ${\cal I}_s$ then $\eta$ is a merit-based envy-free matching for ${\cal I}$. 
\label{lem:constantvalcourses}\end{restatable}

\begin{proof}
We first prove the forward direction. We already know that $\mu$ is a \mefe for ${\cal I}_s$. Therefore, there exists no distinct pair of TAs $t_1$ and $t_2$ (named arbitrarily without loss of generality) such that $g_{t_1}(\mu(t_2))\geq g_{t_2}(\mu(t_2))$ and $u_1(\mu(t_2))>u_1(\mu(t_1))$. We know that envy can only exist between TAs assigned to different courses in $\eta$. We just need to show, that in our construction of $\eta$, there should exist no pair of TAs $t_1$ and $t_2$, such that $g_{t_1}(\eta(t_2))\geq g_{t_2}(\eta(t_2))$ and $u_1(\eta(t_2))>u_1(\eta(t_1))$. We prove this by contradiction. Let us suppose that such a pair $t_1, t_2$ does exist. We denote $\eta(t_1)$ by $x_i$ and $\eta(t_2)$ by $x_{i'}$. Given our construction of $\eta$ from $\mu$, this means that $\mu(t_1) \in X_i$ and $\mu(t_2) \in X_{i'}$. Since $g_{t_1}(\eta(t_2))\geq g_{t_2}(\eta(t_2))$,  $g_{t_1}(\mu(t_2))\geq g_{t_2}(\mu(t_2))$ should follow, since we do not make any changes to the grades of the copies. Additionally, $u_1(\eta(t_2))>u_1(\eta(t_1))$, indicates that $u_1(\mu(t_2))>u_1(\mu(t_1))$, by construction of the instance. However, this would indicate that $t_1, t_2$ is a blocking pair for matching $\mu$ in instance ${\cal I}_s$, a contradiction. Therefore, $\eta$ must be a merit-based envy-free matching for ${\cal I}$.

Proving the reverse direction is trivial since we only construct $\eta$ from $\mu$ if $\mu$ is a \mefe for ${\cal I}$. 
\end{proof}

Additionally, we also need to show the following -

\begin{restatable}{lemma}{lem:constantvalcoursesiff}
Iff a \mefe $\mu$ for ${\cal I}_s$ exists, a merit-based envy-free matching $\alpha$ exists for ${\cal I}$ such that in $\alpha$, for all $x_i \in X$, $a_j^i$ spots in $x_i$'s capacity are filled with TAs $t \in T^j_i$ for whom $v_i(t) = i_j$. 
\label{lem:constantvalcoursesiff}\end{restatable}

\begin{proof}
Since we know that Theorem 5 always returns a \mefe if it exists, the proof in the forward direction just involves construction of $\eta$ from $\mu$ by the algorithm above, and we just showed that $\eta$ is a merit-based envy-free matching for ${\cal I}$ by that construction. Therefore, we can just have $\alpha = \eta$, and $\alpha$ is a merit-based-envy free matching in that case.
 
To prove the reverse direction, we construct a matching $\mu$ for ${\cal I}_s$ from matching $\alpha$ and show that $\mu$ is \mefe for ${\cal I}_s$. Since theorem 5 always returns a \mefe matching if it exists, $\mu$ will be returned by it in such a case. We now provide our construction, which is exactly opposite to how we earlier constructed $\eta$ from $\mu$. For each $x_i \in X$, we assign the set of TAs $T^j_i \cap \eta^{-1}(x_i)$ (we know that for all $t \in T^j_i$,  $v_i(t) = i_j$) to the set of copies, $X^j_i$ in such a way that the TA $t \in T^j_i \cap \eta^{-1}(x_i)$ with the highest grade (since grades for the same course are distinct, we can find such a TA) for $x_i$ gets assigned to the copy $x^r_{i} \in X^j_i$ with the highest utility (the copy is the same regardless of the TA). The TA with the second highest grade in $T^j_i \cap \eta^{-1}(x_i)$ gets assigned to the copy in $X^j_i$ with the second highest utility. Now, we just need to prove that $\mu$ is merit-based envy-free since $k = 0$ so all courses will be satisfied with their assignment. We notice that there can't be any envy between TAs who have been assigned to different courses in $\mu$, otherwise those TAs would form a blocking pair in $\eta$ as well. And as for TAs which have been assigned to the same course, we always assign the TAs with highesr grade to the copies which have been given higher valuation by construction of $\mu$, which means that there will be no envy between those TAs either. This means that we have merit-based envy-freeness as well, therefore, $\mu$  is a \mefe for ${\cal I}_s$ and must be returned by Theorem 5.
\end{proof}

After construction of $\eta$ from $\mu$ (if $\mu$ exists), we check whether $\eta$ leads to satisfaction of each course $x_i \in X$, i.e., whether $(\sum_{i_j \in V_i} i_j*a_j^i)/c_i>k$ for each course, $x_i \in X$. If it does, then we return $\eta$ as an \mefe since it satisfies courses while also resulting in merit-based envy-freeness. If not, we move onto the next guess. In case a \mefe does not exist for ${\cal I}_s$, we move onto the next guess. We perform the same procedure for all guesses until we get a \mefe or we run out of guesses. In case we run out of guesses, we return {\sf NO}.
\end{proof}

}

\section{Parameterized (Approximation) Algorithms}

In this section, we consider the problem in the realm of parameterized algorithms to cope with the intractability. We first consider the parameter $m$ (the number of TAs.). The following result is due to trying all possible partitions of TAs. 

\begin{restatable}{theorem}{FPTm}($\clubsuit$)
\mefe can be be solved in $\fpt(m)$, where $m$ is the number of TAs. 
\label{thm:FPTm}\end{restatable}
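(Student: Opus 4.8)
The plan is to solve the problem by brute force over all candidate matchings, after a cheap preprocessing step that lets us treat the number of courses as bounded by $m$.

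First I would dispose of degenerate courses: a course with $c_i=0$ is never assigned any TA, so it is vacuously satisfied and can never be the target of envy; such courses can be deleted. For the remaining courses $c_i\ge 1$. Now observe that any feasible matching saturates every course and therefore uses $\sum_i c_i$ distinct TAs; hence if $\sum_i c_i > m$ we can immediately answer {\sf NO}, and otherwise $n\le\sum_i c_i\le m$. This bound is the only thing that makes the subsequent enumeration an $\fpt(m)$ algorithm rather than an $n^{\OO(m)}$ one.

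Next I would enumerate every map $\mu\colon T\to X\cup\{\emptyset\}$; there are at most $(n+1)^m\le (m+1)^m$ of them, a function of $m$ alone. For each $\mu$ I would test, in polynomial time, the three defining conditions: (i) feasibility, i.e.\ $|\mu^{-1}(x_i)|=c_i$ for every $x_i\in X$ and $u_t(\mu(t))>0$ whenever $\mu(t)\ne\emptyset$; (ii) satisfaction of courses, i.e.\ ${\sf AvgUtil}(x_i)=\tfrac{1}{c_i}\sum_{t\in\mu^{-1}(x_i)}v_i(t)\ge k$ for every $x_i\in X$; and (iii) merit-based envy-freeness, i.e.\ for no ordered pair $(t_i,t_j)$ do we have both $g_i(\mu(t_j))\ge g_j(\mu(t_j))$ and $u_i(\mu(t_j))>u_i(\mu(t_i))$, which amounts to $\OO(m^2)$ comparisons of input numbers. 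I would return the first $\mu$ passing all three tests, and {\sf NO} if none does.

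Correctness is immediate: the space of maps $T\to X\cup\{\emptyset\}$ contains every feasible matching, and we check exactly the conditions in the definition of an \mefem matching. The total running time is $(m+1)^m\cdot|I|^{\OO(1)}$, which is $\fpt(m)$. There is no genuine technical obstacle; the only point requiring a moment's care is the reduction to the case $n\le m$ (equivalently, the early {\sf NO} when $\sum_i c_i>m$), without which the enumeration would not be parameterized by $m$ alone. If one prefers to avoid referring to $n$ at all, an equivalent route is to enumerate the $B_m\le m^m$ set partitions of $T$ and, for each, match its blocks to courses by a small bipartite-matching subroutine (a block $B$ may go to course $x_i$ iff $|B|=c_i$ and $B\subseteq N(x_i)$), again in $\fpt(m)$ time.
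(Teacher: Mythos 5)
Your proposal is correct and follows essentially the same route as the paper: enumerate all $(n+1)^m$ assignments of TAs to courses (or to $\emptyset$) and check each candidate for feasibility, course satisfaction, and merit-based envy-freeness in polynomial time, using the observation that $n$ is bounded in terms of $m$ for any non-trivial instance to make the enumeration $\fpt(m)$. Your explicit preprocessing (discarding capacity-zero courses and rejecting when $\sum_i c_i>m$) is in fact a cleaner justification of the bound $n\le m$ than the paper's brief remark.
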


\hide{
\begin{proof}
Let ${\cal I}$ be a given instance of \mefe.
Note that every TA has $n$ options for the allocation. Thus, the total number of possible matching is $n^m$. For every possible matching, we can check in polynomial time if it is a solution to ${\cal I}$. Since, for a yes-instance, the number of courses is at least the number of TAs, this algorithm runs in $\OO(m^m(n+m)^{\OO(1)})$ time. 
 
\end{proof}
}

Next, we move our attention to the parameter $n$, the number of course. The next algorithm is similar to the one in Theorem~\ref{thm:2val}. But, here we have constant number of distinct positive valuations for each course, rather than just two. So, we guess the number of TAs  of each value assigned to a course in a solution. Note that we also consider constant capacity for each course. The rest of the algorithm is similar. In the algorithm in Theorem~\ref{thm:2val}, we know the upper bound for $a_i$, but here, we know the exact values, and hence add the edges accordingly in the {\sc Stable Matching} instance. 

\begin{restatable}{theorem}{FPTn}($\clubsuit$)
\mefe can be solved in $\fpt(n)$ when every course has constant number of distinct positive valuations and capacity, and for a course no two TAs have the same grade. Furthermore, each TA has distinct valuations for positively valued courses.
\label{thm:FPTn}\end{restatable}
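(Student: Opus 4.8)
The plan is to run the machinery of Theorem~\ref{thm:2val}, but with an exact guess, for every course, of how many of the TAs assigned to it carry each of the course's (constantly many) distinct valuations; this replaces the ``at least $a_i$'' bookkeeping of Theorem~\ref{thm:2val}. Write $D$ for the bound on the number of distinct positive valuations per course and $C$ for the bound on capacity, both constants. For a course $x_i$ let $q_i^1>\dots>q_i^{d_i}$, with $d_i\le D$, be its distinct positive values. For each course I would guess a vector $(a_i^1,\dots,a_i^{d_i})$ of nonnegative integers satisfying $\sum_j a_i^j=c_i$, $\sum_j a_i^j q_i^j\ge k\,c_i$, and $a_i^j\le|\{t\in T: v_i(t)=q_i^j\}|$. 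Each course admits at most $\binom{C+D}{D}=\OO(1)$ such vectors, so the number of joint guesses over all $n$ courses is at most $\binom{C+D}{D}^{n}$, a function of $n$ alone, and all of them can be enumerated within an $\fpt(n)$ budget, discarding the infeasible ones.

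For a fixed valid guess I would build a {\sc Stable Matching} instance ${\cal J}$ just as in Theorem~\ref{thm:2val}: the men are the TAs; for each course $x_i$ we create $c_i$ women partitioned into blocks $W_i^1,\dots,W_i^{d_i}$ with $|W_i^j|=a_i^j$; a woman in $W_i^j$ has in her list exactly the men $t$ with $v_i(t)=q_i^j$, ordered by decreasing $g_t(x_i)$ (a strict order, since a course never grades two TAs equally); a man $t$ ranks any woman of $x_i$ above any woman of $x_{i'}$ whenever $u_t(x_i)>u_t(x_{i'})$ (strict, since a TA's positive valuations are pairwise distinct), breaks the remaining ties, which occur only among the clones of a single block $W_i^j$, by one fixed order common to all men, and omits zero-valued pairs. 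Because the per-value counts are now exact rather than mere lower bounds, the edge set of ${\cal J}$ is fully determined; this is the only substantive difference from Theorem~\ref{thm:2val}. I would then run Gale--Shapley; if it returns a matching $\eta$ saturating all women, I read off $\mu$ by setting $\mu(t)=x_i$ iff $\eta(t)\in\bigcup_j W_i^j$ (and $\mu(t)=\emptyset$ otherwise), output $\mu$ if it is an \mefem matching, and otherwise proceed to the next guess; if no guess succeeds, output {\sf NO}. The time is $f(n)$ times a polynomial, so this is an $\fpt(n)$ algorithm.

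Soundness is immediate, since we output a matching only after checking it is an \mefem matching. For completeness I would start from an \mefem solution $\mu^\star$, take $(a_i^j)$ to be the value distribution $\mu^\star$ induces on each course (a valid guess, by feasibility and satisfaction of $\mu^\star$), and exhibit a women-saturating stable matching $\eta$ of ${\cal J}_{(a_i^j)}$ whose read-back $\mu$ equals $\mu^\star$: inside each block $W_i^j$, match the $a_i^j$ value-$q_i^j$ members of $(\mu^\star)^{-1}(x_i)$ to the clones of $W_i^j$ in decreasing order of their $x_i$-grades. Saturation is a counting identity, and stability would follow from the correspondence that any blocking pair $(t,w)$ with $w\in W_i^j$ forces $v_i(t)=q_i^j$, forces $t$ to strictly prefer $x_i$ to its assignment $\mu(t)$, and forces $g_t(x_i)>g_{\eta(w)}(x_i)$ with $\eta(w)\in\mu^{-1}(x_i)$ --- i.e.\ $t$ merit-envies $\eta(w)$, contradicting envy-freeness of $\mu$. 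The main obstacle is exactly the fidelity of this correspondence together with the choice of which stable matching to read back: merit-based envy is allowed \emph{across} value classes (a TA with a small valuation for a course but a high grade in it can still envy), whereas the gadget only lets a TA contest the slots of its own value class, so one must argue that for a guess arising from a genuine solution no such cross-class envy survives in the read-back; if that structural claim turns out to be awkward to pin down directly, the explicit \mefem-verification step keeps the algorithm correct at no asymptotic cost. Finally, the constant-capacity and distinctness hypotheses cannot simply be dropped, as Theorem~\ref{thm:nph-2courses} already rules out an $\fpt$ algorithm in $n$ together with the TA-degree and the numbers of valuation/grade types.
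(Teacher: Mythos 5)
Your algorithm and gadget are essentially the paper's: guess, for each course, the exact number of assigned TAs in each of its constantly many value classes, clone the course into capacity-many women partitioned into blocks by value class, let a block's women list exactly the TAs of that value (ordered by grade), let men order women by TA-side utility, and solve {\sc Stable Matching}. The extra upper bound on the guesses and the final verification pass are harmless. But the issue you yourself flag as ``the main obstacle'' is a genuine gap, and your fallback does not close it. Stability of the gadget matching only excludes merit-envy \emph{within} a value class: a TA $t$ with $v_i(t)=q'$ is not on the list of any woman in the block for value $q\neq q'$, so no blocking pair witnesses $t$'s envy toward a lower-graded $t'$ with $v_i(t')=q$. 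Concretely, take one course $x$ with capacity $2$, $k=7$, and TAs $B$ ($v_x=10$, grade $1$), $A$ ($v_x=5$, grade $3$), $C$ ($v_x=5$, grade $2$). The only usable guess is one TA of each value; the gadget has the women-saturating stable matching reading back to $\{B,A\}$, yet $C$ merit-envies $B$, and in fact no \mefem matching exists. Your verifier saves soundness here, but note that this same example already falsifies the reverse direction of the paper's Lemma~\ref{lem:correctness-FPT}, whose blocking-pair argument tacitly assumes $t$ is acceptable to $\eta(t')$; so the cross-class case is not an ``awkward technicality'' glossed over elsewhere --- it is unhandled in the paper too.

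The unrecoverable part is completeness. For the guess induced by a true solution $\mu^\star$ you only show that \emph{some} women-saturating stable matching reads back to $\mu^\star$; Gale--Shapley returns one particular stable matching, whose read-back can carry cross-class envy and be rejected by your verifier, after which every other guess may fail as well. For instance, take courses $x$ (capacity $2$, values $\{10,4\}$, $k=7$) and $y$ (capacity $1$, value $10$, $k=1$); TAs $r,s$ with $v_x=v_y=10$, $g_r(x)=6<g_s(x)=9$, $g_r(y)=2>g_s(y)=1$, $r$ preferring $x$ and $s$ preferring $y$; and $t'$ ($v_x=10$, grade $5$), $p$ ($v_x=4$, grade $8$), $t$ ($v_x=4$, grade $7$) valuing only $x$. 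Then $x\leftarrow\{s,p\}$, $y\leftarrow r$ is \mefem and corresponds to the woman-optimal stable matching of the gadget for the guess $(1,1)$ on $x$, but man-proposing Gale--Shapley returns the matching reading back to $x\leftarrow\{r,p\}$, $y\leftarrow s$, in which $t$ (grade $7$) merit-envies $r$ (grade $6$) across value classes; the verifier rejects it, the guess $(2,0)$ also fails, and your algorithm outputs {\sf NO} on a yes-instance. So you need either a proof that a suitably chosen stable matching (not an arbitrary one) always reads back envy-free for the correct guess, or a different mechanism that also polices envy between distinct value classes; neither your write-up nor the paper's proof supplies this.
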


\hide{
\begin{proof}
Let ${\cal I}$ be an instance of \mefe that satisfies the constraints in the theorem statement. Let $v_i\colon T \rightarrow \{q_{1i},\ldots,q_{r_ii}\}$, where $r_i$ is a constant. Let $a_{ji}$ be the number of TAs of value $q_{ji}$ assigned to the course $x_i$ in a solution, where $i\in [n], j\in [r_i]$. We guess the values of $a_{ji}$, for all  $i\in [n], j\in [r_i]$, that satisfies the following two constraints.

\begin{align}
\sum^{r_i}_{j=1}a_{ji}q_{ji} \geq kc_i \label{eq:3} \\
\sum^{r_i}_{j=1}a_{ji} = c_i \label{eq:4}
\end{align}

Note that we have $(c_i+1)^{r_i}$ choices  for vector $(a_{1i}, \ldots, a_{r_ii})$ for every course $x_i$. A vector is called \emph{valid}, if it satisfies Equation~\ref{eq:3} and \ref{eq:4}. Since the capacity of each course and $r_i$ is constant, we have constant choices for every course.  Hence, we have  $c^n$ total choices, where $c$ is a constant.

For each valid vector $(a_{1i}, \ldots a_{r_ii})$, we create an instance of the {\sc Stable Matching} problem ${\cal J}$ as follows. The set of men is $M=T$, i.e., corresponding to every TA in $T$, we have a man in $M$. Corresponding to every course $x_i\in X$, we have a set of women $W_{ji}$ that contains $a_{ji}$ many women. If any of these sets are empty we ignore those sets. Let $W_i$ be set of all women corresponding to course $x_i$. Due to the validity of the vector, corresponding to every course $x_i$, we have $c_i$ women. 
 The preference lists are as in Theorem~\ref{thm:2val}. The complete algorithm and proof can found in supplementary. }
 \hide{
Next, we define the preference list of every woman $w$, say $P_w$, as follows. For every $w\in W_{ji}$, man $t$ is in $P_w$ if the valuation of course $x_i$ for the TA is $a_{ji}$. 
Next, we define the ordering of men in the preference list of every woman, which is based on the grades. Let $w$ be a women corresponding to the course $x_i$. Since no two TAs have the same grade for a course, for men $t,t'$ in $P_w$, $w$ prefers $t$ more  than $t'$ if and only if $g_t(x_i)>g_{t'}(x_i)$. Next, we define the ordering of women in the preference list of every man. Let $P_t$ be the set of women in the preference list of the man $t$.
Consider a man $t$. If $u_t(x_i) > u_t(x_j)$, where $x_i,x_j \in X$, then man $t$ prefers women in $w\in W_i$ more than woman in $\hat{w} \in W_j$, where $w, \hat{w} \in P_m$. If $w,w' \in P_t \cap W_i$, then we first note that there exists unique $j\in [r_i]$ such that $w,w' \in W_{ji}$, due to the construction. In this case, man $t$ order them arbitrarily;\shubham{add the condition that the order is same for all TA} This completes the construction of ${\cal J}$. The proof of correctness can be found in supplementary.

Next, we prove the correctness. In particular, we prove the following result:

\begin{restatable}{lemma}{lem:correctness-FPT}
${\cal I}$ have a feasible envy free matching if and only if ${\cal J}$ have a strongly stable matchning which saturate set of women that is all the women are matched . 
\label{lem:correctness-FPT}\end{restatable}

\begin{proof}
    In the forward direction, let $\mu$ be a solution to ${\cal I}$. We construct a matching $\eta$ for the instance ${\cal J}$ as follows. Consider the course $x_i\in X$. Let ${\sf sort}(\mu^{-1}(x_i))$ contains the TAs in $\mu^{-1}(x_i)$ in the decreasing order of their grades in $x_i$. Recall that no two TAs have the same grade in any course. Now, we consider men in $\mu^{-1}(x_i)$ in the order they appear in ${\sf sort}(\mu^{-1}(x_i))$ and they are matched to there most preferred unmatched woman in $W_i$. Since we always match an unmatched man to an unmatched woman, it is a matching. Let this matching be $\eta$. Next, we argue that $\eta$ is a stable matching for $\cal{J}$. Consider an unmatched pair of man and woman, say $(m,w)$. 
    Let $w$ be a woman corresponding to the course $x_i$. Due to our construction, corresponding to every course $x_i$, we have $c_i$ women. Since $\mu$ is a solution to ${\cal I}$ and every man in $\mu^{-1}(x_i)$ is matched to a woman corresponding to the course $x_i$, every woman is matched in $\eta$. If $m$ is unmatched in $\eta$, then TA $m$ is unmatched in $\mu$. 
    If $w$ prefers $m$ over $\eta(w)$, then, due to the construction of the instance ${\cal I}$, $g_{\eta(w)}(x_i)<g_{m}(x_i)$. Since $m$ is unmatched in $\mu$, $m$ envies $\eta(w)$, a contradiction. Consider the case when $m$ is matched. Then, 
    suppose that $m$ prefers $w$ over $\eta(m)$. Suppose that $w, \eta(m) \in W_i \cap P_m$. Due to the construction of $\eta$, $\eta(w)$ is before $m$ in ${\sf sort}(\mu^{-1}(x_i))$. Thus, $g_{\eta(w)}(x_i)>g_{m}(x_i)$. Thus, $w$ prefers $\eta(w)$ more than $m$. Hence, $(m,w)$ is not a blocking pair. Then, $\eta(m)\in W_j \cap P_w$. Since $m$ prefers $w$ over $\eta(m)$, TA $m$ prefers course $x_i$ over $x_j$. Then, if $w$ prefers $m$ over $\eta(w)$, $g_m(x_i)>g_{\eta(w)}(x_i)$. Hence, TA $m$ envies TA $\eta(w)$, a contradiction to the fact that $\mu$ is a solution to ${\cal I}$.

    In the reverse direction, let $\eta$ be a stable matching in ${\cal J}$ that saturates all women. We construct a matching $\mu$ for ${\cal I}$ as follows. If a woman $w$ corresponding to a course $x_i$ is matched to a man $m$, then for TA $m$, $\mu(m)=x_i$. Since $\eta$ saturates all women, and corresponding to every course $x_i$, we have $c_i$ women, $\mu$ is a feasible matching. Furthermore, due to the construction of preference lists no course/TA is matched to zero-valued TA/course. Next, we argue that there is no envy between any pair of  TAs. Towards the contradiction, suppose that $t$ envies $t'$. Then, $g_t(\mu(t'))>g_{t'}(\mu(t'))$ and $u_t(\mu(t'))>u_t(\mu(t))$. Recall that we have a man $t$ corresponding to every TA $t$. Then, due to the construction, man $t$ prefers woman $\eta(t')$ more than $\eta(t)$ and woman $\eta(t')$ prefers man $t$ more than $t'$. Thus, $(t,\eta(t'))$ is a blocking pair, a contradiction.
\end{proof}
This completes the proof of lemma \ref{lem:correctness-FPT}.

If any of the valid combination leads a stable matching which saturate set of women for $\cal{I}$ then we can conclude that we have a feasible envy free matching for $\cal{J}$ by lemma \ref{lem:correctness-FPT}. As the combination is a valid combination it also satisfy valuation criteria hence we return yes-instance.As the valid combinations can be at max $\prod_i c_i^{r}$ and we can solve each instance of valid combination in polynomial time hence the run time is $\OO(c^{rn}*(n+m)^{\OO(1)})$ where c is max($c_1 \ldots c_i$) \shubham{add ref to smi}

This completes the proof of theorem \ref{thm:FPTn}.}

Next, we design a parameterized approximation scheme with respect to the parameter $n,\epsilon$, and $\log v$, where $v$ is maximum value assigned to a TA by a course. Our basic idea is that for appropriately chosen $\epsilon'$, we guess the number of TAs assigned to course $x_i$ that have valuations in the range $[(1+\epsilon')^{j-1},(1+\epsilon')^j)$, for every course $x_i\in X$.  
Then, we create the number of copies of a course accordingly and reduce to the {\sc Stable Matching} problem  as in Theorem~\ref{thm:FPTn}. 

\begin{restatable}{theorem}{thmfptapxn}($\clubsuit$)
    \mefe admits an algorithm that given a yes-instance outputs a matching in which for every course $x$, ${\sf AvgUtil}(x)\geq (1-\epsilon)k$, where $0\leq \epsilon < 1$, when the capacity of each course is constant, no two TAs have the same grade for a course, and each TA assigns distinct utilities to each course, and runs in $\OO({\sf cap}^{\frac{n \log {\sf max_{val}}}{\log(1+\epsilon')}} (n+m)^{\OO(1)})$, where ${\sf cap}$ is the maximum capacity of a course and ${\sf max_{val}}$ is the maximum valuation of a course for a TA, $\epsilon'=\frac{1}{1-\epsilon}-1$. 
\label{thm:fpt-apxn}\end{restatable}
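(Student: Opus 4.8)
The plan is to build a parameterized approximation scheme by "bucketing" the course valuations geometrically, reducing the number of distinct valuation types each course effectively has to a function of $n$, $\epsilon$, and $\log {\sf max_{val}}$, and then invoking the machinery of Theorem~\ref{thm:FPTn}. Concretely, fix $\epsilon' = \frac{1}{1-\epsilon}-1$ so that $(1-\epsilon)(1+\epsilon') = 1$. For each course $x_i$, partition the positive valuation range $[1, {\sf max_{val}}]$ into buckets $B^i_1, B^i_2, \ldots$ where $B^i_j = [(1+\epsilon')^{j-1}, (1+\epsilon')^j)$. The number of such buckets is at most $\lceil \log_{1+\epsilon'} {\sf max_{val}} \rceil = \OO\!\left(\frac{\log {\sf max_{val}}}{\log(1+\epsilon')}\right)$, which is the quantity appearing in the claimed running time.

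**Next I would** guess, for each course $x_i$ and each bucket index $j$, the number $a^i_j$ of TAs assigned to $x_i$ whose true valuation $v_i(t)$ lies in bucket $B^i_j$, subject to $\sum_j a^i_j = c_i$ and a \emph{relaxed} satisfaction constraint: we only require $\sum_j a^i_j \cdot (1+\epsilon')^{j-1} \geq (1-\epsilon) k c_i$, i.e.\ we use the lower endpoint of each bucket as a conservative proxy for the contribution. Since ${\sf cap}$ is constant, the number of guesses per course is at most ${\sf cap}^{(\#\text{buckets})}$, and over all $n$ courses this is ${\sf cap}^{\OO(n \log {\sf max_{val}} / \log(1+\epsilon'))}$, matching the stated bound. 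For each valid guess vector, I would run the \textsc{Stable Matching} reduction exactly as in Theorem~\ref{thm:FPTn}: create $a^i_j$ women for course $x_i$ restricted to bucket $j$, put a man (TA) $t$ in a bucket-$j$ woman's list iff $v_i(t) \in B^i_j$, order women's lists by grades (ties ruled out by hypothesis), and order men's lists by their distinct utilities for courses (breaking ties \emph{consistently} across all men within a course, and within a bucket arbitrarily but identically). The key point is that the merit-based envy-freeness argument transfers verbatim because the preference structure on grades and TA-utilities is untouched by bucketing; only the \emph{course side} is coarsened.

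**The correctness claim splits into two directions.** For soundness: if the reduced \textsc{Stable Matching} instance has a women-saturating stable matching, then the induced matching $\mu$ of \mefe is feasible and merit-based envy-free (same proof as Theorem~\ref{thm:2val}/\ref{thm:FPTn}), and for every course $x_i$ its true average utility is $\frac{1}{c_i}\sum_t v_i(t) \geq \frac{1}{c_i}\sum_j a^i_j (1+\epsilon')^{j-1} \geq (1-\epsilon) k$, since each assigned TA in bucket $j$ has true valuation at least the bucket's lower endpoint. For completeness: given a genuine \mefe solution $\mu^\star$ (which satisfies ${\sf AvgUtil}(x_i) \geq k$ exactly), define $a^i_j$ to be the number of TAs of $\mu^{\star -1}(x_i)$ whose valuation falls in $B^i_j$. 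These satisfy $\sum_j a^i_j = c_i$ trivially, and the relaxed constraint holds because each true valuation $v_i(t) < (1+\epsilon')^{j}$ is at most $(1+\epsilon')$ times the lower endpoint, so $\sum_j a^i_j (1+\epsilon')^{j-1} \geq \frac{1}{1+\epsilon'} \sum_t v_i(t) \geq \frac{1}{1+\epsilon'} k c_i = (1-\epsilon) k c_i$. Hence this guess vector is among those enumerated, and the corresponding \textsc{Stable Matching} instance admits a women-saturating stable matching (obtained from $\mu^\star$ by sorting each course's TAs by grade and assigning them to the bucket-women greedily, exactly as in the forward direction of Theorem~\ref{thm:2val}).

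**The main obstacle** I anticipate is bookkeeping the bucket boundaries cleanly when $(1+\epsilon')^j$ is not an integer and valuations are integers — one must be careful that the buckets partition the integer range and that the "men in bucket $j$" are well-defined and consistent with what the guess vector $a^i_j$ counts; a TA with valuation exactly on a boundary must be placed deterministically. A secondary subtlety is verifying that the coarsened men's preference lists still produce a stable matching whose induced assignment respects the \emph{correct} bucket counts — this needs the same "sort by grade within a course" argument as Theorem~\ref{thm:2val} to show that the stable matching places exactly $a^i_j$ TAs of the right valuation class into $x_i$, and that no TA gets matched to a course-copy whose valuation bucket it does not belong to (guaranteed by the incidence structure of the preference lists). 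Once these are pinned down, the running time is the product of the number of guesses, ${\sf cap}^{\OO(n\log {\sf max_{val}}/\log(1+\epsilon'))}$, and the polynomial cost of solving each \textsc{Stable Matching} instance, giving the stated bound.
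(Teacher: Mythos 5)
Your proposal is correct and follows essentially the same route as the paper: geometric bucketing of each course's valuations into ranges $[(1+\epsilon')^{j-1},(1+\epsilon')^j)$, guessing the per-bucket counts for each course (at most ${\sf cap}^{n\log{\sf max_{val}}/\log(1+\epsilon')}$ guesses since ${\sf cap}$ is constant), reducing each guess to a women-saturating \textsc{Stable Matching} instance exactly as in Theorem~\ref{thm:FPTn}, and charging the $\frac{1}{1+\epsilon'}=(1-\epsilon)$ loss to the gap between a bucket's lower endpoint and the true valuations. The only cosmetic difference is that you fold the relaxed satisfaction inequality into the definition of a valid guess, whereas the paper checks the approximate satisfaction of the resulting matching at the end; the correctness argument in both directions is otherwise identical.
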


\hide{
\begin{proof}
Let ${\cal I}=(X,T,\{v_i\}_{i\in X}, \{u_i\}_{i\in T}, \{g_i\}_{i\in T},$ $\{c_i\}_{i\in X},k\}$ be an instance of \mefe that satisfies the constraint in the theorem statement. Let $v_i=\max_{t\in T}v_i(t)$, i.e., maximum value a course assigns to a TA. Our basic idea is that for appropriately chosen $\epsilon'$, we guess the number of TAs assigned to course $x_i$ that have valuations in the range $[(1+\epsilon')^{j-1},(1+\epsilon')^j)$, for every course $x_i\in X$. 
Then, we create the number of copies of a course accordingly and reduce to the {\sc Stable Matching} problem  as in Theorem~\ref{thm:FPTn}. Note that for every course $x_i$, the number of guesses is $\OO(c_i^{\log_{1+\epsilon'}v_i})$. Thus, the total number of guesses for all the courses is at most $\OO({\sf cap}^{n\log_{1+\epsilon'}{\sf max_{val}}})$. Since the {\sc Stable Matching} problem can be solved in polynomial time, the running time follows. The approximation guarantee is due to the fact that in a solution instead of assigning a TA of value $(1+\epsilon')^j$, the algorithm may assign a TA of valuation $(1+\epsilon')^{j-1}$, which incurs a loss of $\frac{1}{1+\epsilon'}$. We choose $\epsilon'=\frac{{\sf cap}}{1-\epsilon}-1$. The formal algorithm and proof can be found in the supplementary.} 
\hide{
Next, we present the algorithm formally and argue the approximation factor. 

    For each course $x_i\in X$, we guess the number of TAs, say  $c_i^j$, that are assigned to a course $x_i$, in a solution, and $x_i$ has valuation at least $(1+\epsilon)^{j-1}$ and at most $(1+\epsilon)^j-1$ for these TAs, where $j\in[\log_{1+\epsilon}v_i]$. A guess is call \emph{valid} if $c_i^1+\ldots+c_i^{\log_{1+\epsilon}v_i}=c_i$. Note that we have at most $\OO(c^{n\log_{1+\epsilon}v})$ valid guesses \shubham{here we meed to define c}. Now, for every valid guess, $G=\{c_i^j \colon i\in [n], j\in [\log_{1+\epsilon}v_i]\}$, we create an instance ${\cal I}_G$ of the {\sc Stable Matching} problem as follows. The set of men $M=T$. Next, we define the set of women. For every course $x_i$, we create $c_i$ women, say $W_i= \{w_i^1,\ldots w_i^{c_i}\}$. Let $W_i^1=\{w_i^1,\ldots,w_i^{c_i^1}\}$, and $W_i^j=\{c_i^{j-1}+1,\ldots,c_i^j\}$, where \shubham{i think we should use w instead of c and c should be in superscript} $j\in \{2,\ldots,\log_{1+\epsilon}v_i\}$. Next, we define the set of men who are in the preference list $P_w$ of a woman $w$. If the course $x_i$ has a valuation in the range \shubham{i think it should be 1+ epsilon} $[(1+\epsilon)^{j-1},(1_\epsilon)^j)$ for the TA $t$, then the man $t$ is in the preference list of woman $w\in W_i^j$, otherwise not. The ordering of men and women in each other's preference list is defined in the same way as we did in Theorem~\ref{thm:FPTn} \todo{need to add it!} \todo{@Shubham: complete it and also write the proof.}

    Next, we define the ordering of men in the preference list of every woman, which is based on the grades. Since no two TAs have the same grade for a course, for men $t,t'\in P_w$, $t$ is more preferred than $t'$ if and only if $g_t(w)>g_{t'}(w)$. Next, we define the ordering of women in the preference list of every man. Let $P_m$ be the set of women in the preference list of the man $t$.
Consider a man $t$. If $u_t(x_i) > u_t(x_j)$, where $x_i,x_j \in X$, then man $t$ prefers women in $w\in W_i$ more than woman in $\hat{w} \in W_j$, where $w, \hat{w} \in P_m$. If $w,w' \in P_m \cap W_i$, then $m$ order them arbitrarily;\shubham{add the condition that the order is same for all TA} This completes the construction of ${\cal I}_G$. Next, we prove the correctness. In particular, we prove the following result:

\begin{restatable}{lemma}{lem:correctness-FPT-approx}
${\cal I}$ have a feasible envy free matching if and only if ${\cal I}_G$ have a strongly stable matchning which saturate set of women that is all the women are matched . 
\label{lem:correctness-FPT-approx}\end{restatable}

\begin{proof}
    In the forward direction, let $\mu$ be a solution to ${\cal I}$. We construct a matching $\eta$ for the instance ${\cal I}_G$ as follows. Consider the course $x_i\in X$. Let ${\sf sort}(\mu^{-1}(x_i))$ contains the TAs in $\mu^{-1}(x_i)$ in the decreasing order of their grades in $x_i$. Recall that no two TAs have the same grade in any course. Now, we consider men in $\mu^{-1}(x_i)$ in the order they appear in ${\sf sort}(\mu^{-1}(x_i))$ and they are matched to there most preferred unmatched woman in $W_i$. Since we always match an unmatched man to an unmatched woman, it is a matching. Let this matching be $\eta$. Next, we argue that $\eta$ is a stable matching for $\cal{J}_G$. Consider an unmatched pair of man and woman, say $(m,w)$. 
    Let $w$ be a woman corresponding to the course $x_i$. Due to our construction, corresponding to every course $x_i$, we have $c_i$ women. Since $\mu$ is a solution to ${\cal I}$ and every man in $\mu^{-1}(x_i)$ is matched to a woman corresponding to the course $x_i$, every woman is matched in $\eta$. If $m$ is unmatched in $\eta$, then TA $m$ is unmatched in $\mu$. 
    If $w$ prefers $m$ over $\eta(w)$, then, due to the construction of the instance ${\cal I}$, $g_{\eta(w)}(x_i)<g_{m}(x_i)$. Since $m$ is unmatched in $\mu$, $m$ envies $\eta(w)$, a contradiction. Consider the case when $m$ is matched. Then, 
    suppose that $m$ prefers $w$ over $\eta(m)$. Suppose that $w, \eta(m) \in W_i \cap P_m$. Due to the construction of $\eta$, $\eta(w)$ is before $m$ in ${\sf sort}(\mu^{-1}(x_i))$. Thus, $g_{\eta(w)}(x_i)>g_{m}(x_i)$. Thus, $w$ prefers $\eta(w)$ more than $m$. Hence, $(m,w)$ is not a blocking pair. Then, $\eta(m)\in W_j \cap P_w$. Since $m$ prefers $w$ over $\eta(m)$, TA $m$ prefers course $x_i$ over $x_j$. Then, if $w$ prefers $m$ over $\eta(w)$, $g_m(x_i)>g_{\eta(w)}(x_i)$. Hence, TA $m$ envies TA $\eta(w)$, a contradiction to the fact that $\mu$ is a solution to ${\cal I}$.

    In the reverse direction, let $\eta$ be a stable matching in ${\cal I}_G$ that saturates all women. We construct a matching $\mu$ for ${\cal I}$ as follows. If a woman $w$ corresponding to a course $x_i$ is matched to a man $m$, then for TA $m$, $\mu(m)=x_i$. Since $\eta$ saturates all women, and corresponding to every course $x_i$, we have $c_i$ women, $\mu$ is a feasible matching. Furthermore, due to the construction of preference lists no course/TA is matched to zero-valued TA/course. Next, we argue that there is no envy between any pair of  TAs. Towards the contradiction, suppose that $t$ envies $t'$. Then, $g_t(\mu(t'))>g_{t'}(\mu(t'))$ and $u_t(\mu(t'))>u_t(\mu(t))$. Recall that we have a man $t$ corresponding to every TA $t$. Then, due to the construction, man $t$ prefers woman $\eta(t')$ more than $\eta(t)$ and woman $\eta(t')$ prefers man $t$ more than $t'$. Thus, $(t,\eta(t'))$ is a blocking pair, a contradiction.

    This conclude the proof of lemma \ref{lem:correctness-FPT-approx}
\end{proof}

\shubham{write the approxination proof}
}

\section{Existence Results}\label{sec:existence}

In this section, we identify some yes and no instances of the problem.
\subsection{Yes-instances of \mefe}
We begin with identifying the instances  where the solution is guaranteed to exist. Before presenting our first result, we  define some notations.   

Let $r_i \colon T \rightarrow [m]$ be a rank function such that $t$ has $r_i(t)$th lowest grade in the course $x_i$, for each $i\in [n]$. Here, $m$ denotes the size of the set $T$.  
 Let $c = c_1 + \ldots + c_n$. We denote $k_i = g_{t}(x_i)$ where $t = r_i^{-1}(c)$. 
 Now, we are ready to present our first existential result.

\begin{restatable}{theorem}{binvalTAs}
($\clubsuit$)
   Let ${\cal I}$ be an instance of \mefe such that 
   
    (i) every TA $t_i \in T$ has the following valuation function for courses, given by $u_i\colon X \rightarrow \{0,a\}$, for each $i\in [m]$; 
    
    (ii) for every course $x\in X$ and TA $t\in T$, $v_x(t)=g_t(x)$, i.e., the utility of course $x$ for TA $t$ is the same as her grade in course $x$;

    (iii) for every subset $X'\subseteq X$, $|N(X')|\geq \sum_{x \in X'}^{} c_x$.

    (iv) for every course $x$, there should not be two distinct TAs $t, t'$ such that $u_{t}(x)=u_{t'}(x) = a$  but $g_{t}(x) = g_{t'}(x)$, i.e., no two TAs positively valuing a course should have the same grade for that course.
    
    Then, for
  $k = \min\{k_i \mid x_i \in X\}$, a \mefem matching always exists. 
    
\label{binvalTAs}\end{restatable}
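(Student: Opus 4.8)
The plan is to exhibit an explicit \mefem matching and verify the three requirements (feasibility, satisfaction of courses, merit-based envy-freeness). First I would build a feasible matching that ignores the satisfaction constraint for a moment: condition (iii) is exactly Hall's condition for the bipartite ``course-demand'' graph in which each course $x$ needs $c_x$ distinct neighbours from $N(x)$; hence by a Hall-type / flow argument there is a feasible assignment $\mu$ with $|\mu^{-1}(x)| = c_x$ for every $x$ and no TA matched to a zero-valued course. So far $\mu$ is only feasible, not necessarily a solution, so the work is to re-route it into a \emph{good} one.

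The key idea is a greedy ``grade-priority'' construction. Since $v_x(t)=g_t(x)$ by (ii) and, by (iv), no two TAs positively valuing $x$ share a grade on $x$, the positively-valued TAs of each course $x$ are totally ordered by grade. I would process courses/TAs so that each course ends up with TAs that are as high-graded as possible, mimicking the single-course algorithm of Theorem~\ref{thm:poly-course1}: concretely, I would argue that one can choose a feasible $\mu$ in which, writing $c = c_1+\dots+c_n$, every \emph{matched} TA has grade (in its assigned course) at least the $c$-th lowest grade threshold, and in fact the natural candidate is to take, for each course $x_i$, TAs whose grade in $x_i$ is among the top ones — this is where the quantity $k_i = g_t(x_i)$ with $t = r_i^{-1}(c)$ enters. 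The cleanest route: since all TAs have binary valuations $\{0,a\}$ (condition (i)), \emph{every} matched TA is equally happy ($u_t(\mu(t)) = a$) and an unmatched TA has utility $0$; so envy can occur \emph{only} from an unmatched TA $t$ towards a matched TA $t'$ with $g_t(\mu(t')) \ge g_{t'}(\mu(t'))$ and $a = u_t(\mu(t')) > u_t(\mu(t)) = 0$, i.e. $t$ must positively value $\mu(t')$. Hence the whole problem reduces to: find a feasible $\mu$ such that for every course $x$ and every \emph{unmatched} TA $t$ with $x \in N(t)$, every TA $t'$ assigned to $x$ has $g_{t'}(x) > g_t(x)$.

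So the main step is to produce such a $\mu$, and the tool is an augmenting/exchange argument: start from any feasible $\mu$; if some unmatched $t$ with $x\in N(t)$ envies some $t' \in \mu^{-1}(x)$ with $g_{t'}(x) \le g_t(x)$ (strict $<$ by (iv), since both positively value $x$), swap — set $\mu(t) = x$ and unmatch $t'$. This keeps $\mu$ feasible, and a potential-function argument (e.g. the multiset of grades of matched TAs, compared lexicographically, strictly increases, or: the sum over courses of grades of assigned TAs strictly increases) shows the process terminates; at termination no unmatched TA envies any matched TA, so $\mu$ is merit-based envy-free. Finally I would check satisfaction: because at termination each course $x_i$ holds its $c_i$ highest-available-grade TAs and, by (iii), enough TAs exist globally, the grades assigned to $x_i$ dominate the threshold $k_i$, so ${\sf AvgUtil}(x_i) = \frac{1}{c_i}\sum_{t \in \mu^{-1}(x_i)} g_t(x_i) \ge k_i \ge k = \min_j k_j$. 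The delicate point — and the main obstacle — is making the termination/exchange argument interact correctly with \emph{feasibility across all courses simultaneously}: a single swap fixes one course but could create an unmatched TA who now envies someone at another course, so the potential function must be chosen to strictly improve under every such swap (a lexicographic order on the sorted grade vector of all matched TAs is the natural choice), and one must confirm the global Hall condition (iii) guarantees that the terminal matching still saturates every course at exactly $c_i$ and that the grade threshold comparison with $k_i$ goes through.
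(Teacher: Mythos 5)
Your plan follows essentially the same route as the paper's proof: a Hall's-theorem argument on a course-copy bipartite graph for feasibility, the observation that binary TA valuations confine envy to unmatched-versus-matched pairs, an exchange procedure with a strictly improving potential for termination, and a counting argument for satisfaction. The only imprecision is in the last step: the bound ${\sf AvgUtil}(x_i)\geq k_i$ comes from a pigeonhole over the total capacity $c=\sum_j c_j$ (if some TA matched to $x_i$ were outside the top $c$ by grade in $x_i$, one of those top $c$ TAs would be left unmatched and would envy it, contradicting termination), not from condition (iii), which is needed only to guarantee the initial feasible matching.
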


\hide{
\begin{proof}

We begin by constructing a bipartite graph $G'_{{\cal I}} = (Y, T)$, where $Y$ contains $c_i$ copies corresponding to every $x_i\in X$, i.e., $Y=\{x_i^1,\ldots,x_i^{c_i}:x_i\in X\}$. If an edge $x_it\in E(G_{\cal I})$, then $x_i^jt\in E(G'_{\cal I})$, for all $j\in [c_i]$. We find a maximum sized matching $\eta$ in  $G'_{{\cal I}}$ using algorithm in~\cite{}.\pallavi{add citation.} Accordingly, we create a matching $\mu$ for ${G_{\cal I}}$, by matching TA $t$ assigned to $x_i^j$ for $j \in [c_i]$ in $\eta$ to $x_i$ in $\mu$.  

\begin{lemma}
    $\mu$ is a feasible matching to ${\cal I}$.
\end{lemma}

\begin{proof}

Due to the construction of $G_{\cal I}$ and $G'_{\cal I}$, $t$ is not matched to any $0$-valued course. Next, we argue that $|\mu^{-1}(x_i)|=c_i$, for all $i\in [n]$. Towards this, it is sufficient to argue that $\eta$ saturates $Y$. We will prove it using Hall's Theorem. That is, we argue that for every $Y' \subseteq Y$, $|N(Y')|\geq |Y'|$. 
   
   Note that we can express $Y'$ as $\bigcup_{x_i \in S} \{ x_i^j \mid j \in J_i \}$, for some $J_i \subseteq [c_i]$ and $S \subseteq X$. Thus, $|Y'| \leq \sum_{x_i \in S} c_i$. 
   If $x_i \in S$, then $N(x_i) \subseteq N(Y')$. Thus, $N(Y') = \bigcup_{x_i \in S} N(x_i)=N(S)$. By constraint (iii), 
   $|N(S)| \geq \sum_{x_i \in S} c_i$ as $S\subseteq X$. Thus, $N(Y')=N(S)\geq \sum_{x_i \in S} c_i \geq |Y'|$. Hence, the maximum matching $\eta$ saturates $Y$.   
\end{proof}

Note that $\mu$ might not be \mefem. We modify $\mu$ using \texttt{Exchange Matching} (Algorithm~\ref{alg:existence-binval}) to obtain an \mefem matching.
The intuition of the algorithm is as follows. The algorithm first finds whether there is any pair of TAs $t_i, t_j$ s.t. $t_i$ has merit-based envy towards $t_j$ in a feasible matching $\mu'$, where $t_j$ is assigned to course $x_j$. The algorithm then finds the TA with the lowest grade matched to $x_j$ in $\mu'$, denoted by $t$, and exchanges the allocations of $t$ and $t_i$. The algorithm repeats this procedure until there is no such pair $t_i$ and $t_j$. 

\begin{algorithm}[]
\caption{\texttt{Exchange Matching}}\label{alg:existence-binval}
\textbf{Input:} an input instance ${\cal I}$ and a feasible matching $\mu : T \rightarrow X \cup \phi$ \\
\textbf{Output:} a \mefe $\mu'$.

\begin{algorithmic}[1]
\State $\mu' = \mu$
\State $T_M = \{t \in T : \mu'(t) \neq \phi \}$ \Comment{Set of Matched TAs}
\State $T_N = T \setminus T_M$ \Comment{Set of Unmatched TAs}
\While{there exists a pair $t_i, t_j$ such that $t_i \in T_N, t_j \in T_M$, $g_{i}(\mu'(t_j)) > g_{j}(\mu'(t_j))$ and $u_i(\mu'(t_j)) = a$} \label{step:loopt} 
    \State Chose such a pair $t_i, t_j$ arbitrarily
    \State $x_j = \mu'(t_j)$
    \State Find $t \in \mu'^{-1}(x_j) \text{ such that } g_t(x_j) \leq g_{t'}(x_j) \forall t' \in \mu'^{-1}(x_j)$ 
    \State $\mu'(t) = \phi$ and $\mu'(t_i) = x_j$ 
    \State $T_N = (T_N \setminus t_i) \cup \{t\}$
    \State $T_M = T_M \setminus t \cup \{t_i\}$
\EndWhile

\Return {$\mu'$} 
\end{algorithmic}
\end{algorithm}

It can be observed that if the algorithm terminates, then the resulting matching $\mu$ is merit-based envy-free since no unmatched TA envies any matched TA at that point (else, the while loop does not terminate) and no TA matched to some positively valued course will envy any other TA.

We now prove that the algorithm {\texttt{Exchange Matching}} necessarily leads to satisfaction of courses, and subsequently, that it terminates in polynomial time.

\begin{restatable}{lemma}{lem:satisfaction}
    {\texttt{Exchange Matching}} leads to satisfaction of courses.   
\label{lem:satisfaction}\end{restatable}

\begin{proof}
We recall the definition of $k_j = g_{t}(x_j)$, where $t = r_j^{-1}(c)$, with $r_j : T \rightarrow [m]$ (rank function), $c = c_1 + \ldots + c_n$ and $g_t(x_j) = v_j(t)$, the utility of the $x_j$ for $t$. Note that if every course $x_i$ is matched to only TAs in the set $\{r_i^{-1},\ldots,r_i^{-1}(c)\}$, then ${\sf AvgUtil}(x_i)\geq k_i \geq k$. We prove this using contradiction. Let us suppose that there is a TA $t$ assigned to a course $x_j$ in the matching output by the algorithm, $\mu'$, such that $r_j(t) > c$. Since $\mu'$ is feasible, the total number of TAs assigned to all courses is equal to $c$. Therefore, by the pigeonhole principle, there exists a TA $t'$ such that $r_j(t') \leq c$, which is not assigned to any course, since $t$ is assigned to $x_j$ while having a rank larger in number than $c$. But then, the pair $t, t'$ must satisfy the condition in the while loop of algorithm, which is a contradiction, since the algorithm terminated. 
\end{proof}

\begin{restatable}{lemma}{lem:termination}
    {\texttt{Exchange Matching}} terminates in polynomial time.   
\label{lem:termination}\end{restatable}
\begin{proof}
To prove that the algorithm terminates, we define two functions. For the set of all matchings, $\mathcal{M} : T \rightarrow X \cup \{\emptyset\}$, we first define a function $\sigma_j : \mathcal{M} \rightarrow Z_{\geq 0}$ corresponding to every course $x_j \in X$. For a matching $\mu$ of the  instance ${\cal I}$, let  $t\in \mu^{-1}(x_j)$ be a TA of lowest grade among all the TAs in $\mu^{-1}(x_j)$. 
Let 
    $E_{j, \mu} = \{t' : t' \in T \setminus \mu^{-1}(x_j) \text{ and } g_{t'}(x_j) > g_{t}(x_j)\}$
  and   
    $\sigma_j(\mu) = |E_{j, \mu}|$.
Observe that $\sigma_j(\mu)$ can never be negative, since it's value is governed by the size of a set.
We also define a potential function, $\psi : \mathcal{M} \rightarrow Z_{\geq 0}$, where 
    $\psi(\mu) = \sum_{i=1}^{n} \sigma_i(M)$

We observe that if $\psi(\mu)$ is 0 (i.e, $\sigma_j(\mu) = 0, \forall x_j \in X$), there cannot be any merit-based envious pair in $\mu$, since for every course $x \in X$, there is no TA which positively values $x$ and is not matched to it, while having a higher grade than any TA matched to $x$. We also observe that $\sigma_j(M)$ is bounded by the number of TAs, $|T|$.

It is now sufficient to prove that the exchange of TAs in each iteration of the while loop in {\texttt{Exchange Matching}} always leads to a strict decrease in the value of the potential function to show that the algorithm terminates. Consider an iteration of the while loop. Suppose that in this iteration 
TAs $t_i$ and $t$ are matched to and unmatched from course $x_j$, respectively. The initial matching before the exchange is denoted by $\eta$ and the subsequent matching is denoted by $\eta'$. 
Observe that in such a case, $\sigma_i(\eta) = \sigma_i(\eta')$ for $x_i \neq x_j$, since $\eta^{-1}(x_i) = \eta'^{-1}(x_i)$. We observe that the lowest grade of a TA matched to $x_j$ strictly increases, since we remove the TA $t$, which had the lowest grade, $g_{t}(x_j)$, among TAs matched to $x_j$ in $\eta$, and replace it with a TA $t_i$, with a strictly higher grade $g_{t_i}(x_i)$. TA $t_i$ would have been a part of $E_{j, \eta}$, since $g_{t_i}(x_j) > g_{t}(x_j)$, but $t$ is not a part of $E_{j, \eta'}$, since it has a lowest grade than all TAs matched to $x_j$ in $\eta'$. And there will be no addition of any other TAs in $E_{j, \eta'}$ since the lowest grade has increased. Therefore, $\sigma_j(\eta') < \sigma_j(\eta)$ and given that $\sigma_i(\eta) = \sigma_i(\eta')$ for $x_i \neq x_j$, $\psi(\eta') < \psi(\eta)$. Hence, the potential function always decreases with the exchange in each iteration of the while loop. Since the potential function must be non-negative, {\texttt{Exchange Matching}} must terminate.
\end{proof}

Therefore, {\texttt{Exchange Matching}} gives a \mefem matching for ${\cal I}$.
\end{proof}
}

Our next positive existence result is because   \mefe instance satisfying constraints in Theorem~\ref{thm:thmexist} can be reduced to \hrpfull (\hrp).   
\hide{
An instance I of \hrpfull involves a set of residents $R=\{ r_1,...,r_{n} \}$  and a set  of hospitals $H= \{h_1,...,h_m \}$. Each hospital $h_j \in H$ has a positive integral capacity $c_j$. Each hospital $h_j \in H$ has a preference list in which it ranks R in strict order similarly each resident $r_i \in R$ has a preference list in which they rank H in strict order. A hospital is under-subscribed in a matching $\eta$ if $| \eta(h_j) | < c_j$.A matching $\eta$ is valid if $\eta(h_j) \leq c_j$ for each $h_j \in H$ and $| \eta(r_i) | \leq 1$ for each $r_i \in R$. 
 A pair $(r_i,h_j)$ block a matching $\eta$ if $r_i$ is unassigned or prefers $h_j$ over $\eta(r_i)$ and $h_j$ under-subscribed or prefer $r_i$ over at least one resident in $\eta(h_i)$.
 A matching is said to be stable if it does not contain any blocking pair. In \hrp we are required to find a matching that is valid and stable. It is known that a valid stable matching in \hrp always exists~\cite{}.\pallavi{add citation.}
 }

\begin{restatable}{theorem}{thmexist}($\clubsuit$)
 Let ${\cal I}$ be an instance of \mefe such that all TAs and courses positively value each other and each TA must  assign distinct valuation to courses and all the TAs have  distinct grades/marks in a course. Then, \mefem always exist for $k=1$. 

\label{thm:thmexist}\end{restatable}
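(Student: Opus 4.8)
The plan is to reduce such an instance of \mefe to the \hrpfull, which always admits a stable matching. First I would dispose of the course-satisfaction requirement: since every course positively values every TA and course-utilities are positive integers, $v_{x}(t)\geq 1$ for all TAs $t$ and courses $x$, so \emph{any} feasible matching $\mu$ satisfies ${\sf AvgUtil}(x)\geq 1=k$ for every course $x$. Hence the only genuine obstacles are merit-based envy-freeness and feasibility (each course $x$ must receive exactly $c_x$ TAs). Note that a feasible matching exists only if $m\geq\sum_{x\in X}c_x$, which is clearly necessary for the instance to be a yes-instance, so I would assume it.

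Next I would build the \hrp instance $\mathcal{J}$: the residents are the TAs, the hospitals are the courses, and hospital $x_i$ has capacity $c_i$. Since all agents positively value each other, every preference list is complete. I order the residents in hospital $x_i$'s list by their grade $g_{t}(x_i)$ — a strict order because no two TAs share a grade in a course — and order the hospitals in resident $t$'s list by $u_{t}(\cdot)$, which is strict because $t$ assigns distinct valuations to distinct courses. By the Gale--Shapley/\hrp theorem, $\mathcal{J}$ has a stable matching $\eta$; let $\mu$ be the corresponding assignment of TAs to courses ($\mu(t)=x_i$ iff $t$ is assigned to hospital $x_i$ in $\eta$, and $\mu(t)=\emptyset$ otherwise).

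Then I would verify $\mu$ is a \mefem matching. For feasibility: in a stable \hrp matching with complete lists, an undersubscribed hospital forces every resident to be matched (else an unmatched resident and that hospital block), which together with $m\geq\sum_i c_i$ is impossible; hence every course is filled to exactly its capacity, and no TA is ever assigned to a course it does not value (all lists positive), so $\mu$ is feasible. For merit-based envy-freeness, suppose for contradiction that TA $t_i$ merit-envies $t_j$ and put $x=\mu(t_j)$. The envy conditions give $g_{t_i}(x)\geq g_{t_j}(x)$, hence $g_{t_i}(x)>g_{t_j}(x)$ by distinctness of grades, so hospital $x$ strictly prefers resident $t_i$ to the resident $t_j\in\eta^{-1}(x)$; and $u_{t_i}(x)>u_{t_i}(\mu(t_i))$, where we read $u_{t_i}(\emptyset)=0$, so resident $t_i$ either strictly prefers $x$ to its match $\mu(t_i)$, or is unmatched and prefers $x$ to being unmatched (complete lists). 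In either case $(t_i,x)$ is a blocking pair for $\eta$, contradicting stability. Therefore $\mu$ is a \mefem matching, and since it always exists, the theorem follows.

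I expect the only delicate points to be (i) the undersubscription argument establishing feasibility, which needs the necessary hypothesis $m\geq\sum_i c_i$, and (ii) correctly treating an unassigned envious TA $t_i$ — here fixing the convention $u_{t_i}(\emptyset)=0$ (dominated by every positive utility) lets the blocking-pair argument run exactly as in the matched case, so no separate analysis is required.
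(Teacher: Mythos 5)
Your proposal is correct and follows essentially the same route as the paper: reduce to the Hospitals/Residents problem with hospital preferences given by grades and resident preferences given by TA utilities (both strict and complete under the stated hypotheses), invoke the guaranteed existence of a stable matching, and translate stability back into feasibility, course satisfaction (trivial for $k=1$), and merit-based envy-freeness via the blocking-pair argument. The paper states the correspondence as a two-way equivalence, but only the direction you prove is needed for existence, and your explicit handling of the unassigned-TA case and the hypothesis $m\geq\sum_i c_i$ matches the paper's standing assumption.
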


\hide{
\begin{proof}

We create an instance ${\cal J}$ of \hrp  as follows. The set of residents is $R=T$, i.e., corresponding to every TA in $t_i\in T$, we have a resident $r_i\in R$ and the set of hospital is $H=X$, i.e., corresponding to every course in $x_i\in X$, we have a hospital in $h_i\in H$. The capacity of $h_i$ is equal to capacity of $x_i$, for all $i\in [n]$. The preference of resident is according to the preference of TAs, i.e., if $u_t(x_i) > u_t(x_j)$, then resident $r_t$ will prefer $h_i$ over $h_j$, and the preference of hospital is according to the grades of TA that is if $g_p(x_i) > g_q(x_i)$, then the hospital $h_i$ prefer $r_p$ over $r_q$. Since every TA gives distinct valuation to courses and every course gives distinct grades/marks to TAs all the preferences are complete and strict. Next, we prove the equivalence of two instances.

\begin{restatable}{lemma}{lem:lemexist}

${\cal I}$ is a yes-instance of \mefe if and only if ${\cal J}$ is yes instance of \hrp.
    
\label{lem:lemexist}\end{restatable}

\begin{proof}

In the forward direction, let $\mu$ be a solution to ${\cal I}$. We claim that $\eta = \mu$ is a solution to ${\cal J}$. Since the capacity of each course is met in $\mu$ and capacity of courses and hospitals are equal, $\eta$ satisfies the capacity constraints of ${\cal J}$. For contradiction let us assume that there is a blocking pair in $\eta$, i.e., there exist a pair $h_i$ and $r_j$ such that $h_i$ prefer $r_j$ over at least one resident in the set $\eta(h_i)$ and $r_j$ is either \pallavi{incomplete proof.}
unassigned or prefers $h_i$ over $\eta(r_i)$ in that case due to our reduction we have $t_p$ such that $t_p \in \mu^{-1} (x_i)$ , $g_{t_j}(x_i) > g_{t_p}(x_i)$ and $u_{t_j} (x_i) > u_{t_j} (\mu(t_j))$ which means $t_j$ envy $t_p$ which contradict the fact that $\mu$ is a solution to ${\cal I}$.

In the reverse direction, let $\eta$ be a solution to ${\cal J}$. We claim that $\mu = \eta$ is a solution to ${\cal I}$.
Recall that throughout the paper, we are assuming that the number of TAs are greater than or equal to sum of capacity of courses. Thus, the number of resident is greater than or equal to the sum of capacity of hospital. Since, the preferences are complete, we know that all the hospital are saturated, else we have a blocking pair an under subscribed hospital and an unassigned resident. Hence, the capacity constraint of all the course are met. Since all the courses and TAs value each other positively,
Next, we argue that $\mu$ is merit-based envy-free. Towards the contradiction, suppose that $t_j$ envies $t_p \in \mu^{-1} (x_i)$. Then, $g_{t_j}(x_i) > g_{t_p}(x_i)$ and $u_{t_j} (x_i) > u_{t_j} (\mu(t_j))$. This implies that $h_i$ is matched to $r_p$ in $\eta$ \pallavi{$h_i$ is matched to $r_p$ in $\eta$} and 
$h_i$ prefers $r_j$  over $r_p$ 
and $r_j$ is either unassigned or prefers $h_i$ over $\eta(r_i)$ which contradict the fact that $\eta$ is a stable matching of ${\cal J}$.
\end{proof}

As demonstrated in \cite{David2013} Theorem 3.2, the \hrp always has a solution hence \mefe with condition mentioned in Theorem~\ref{thm:thmexist} will always have a solution.
\end{proof}
}

\subsection{No-instances of \mefe}

Each yes-instance discussed in the previous section has multiple constraints. In this section, we discuss that none of the constraints mentioned for positive existence results are by themselves sufficient for the existence of a \mefem matching. If course utilities are grades, i.e., $v_j(t) = g_t(x_j)$ for all $x_j \in X, t \in T$, with no other constraints, the answer to the existence question is NO. For example, in an instance with a single course having capacity 1 and 2 TAs having the same grade, the unassigned TA will always envy the other, so no \mefem matching exists. If $k=1$, with no other conditions, the answer is again NO, as the above example is a NO instance for this case. If all TAs and courses positively value each other, without further restrictions, the answer remains NO. For example, we take an instance with a single course having capacity 1 and 2 TAs, where both TAs positively value the course and the course positively values both TAs, and courses assign utilities less than $k$ to both TAs. For this instance, no \mefem matching exists, since the course satisfaction criteria cannot be met. If TAs assign valuation functions having the range ${0, a}$ to courses given that a feasible matching exists, the answer to the existence question remains NO as the previous example is a NO instance.

\section{Conclusion}
In this paper, we considered many-to-one matching problem under two-sided preferences, and inspired from the TA allocation problem to the courses, we used welfare function on one side and envy-freeness type fairness criteria on the other side. Indeed, the application of  problem is not limited to the TA allocation problem. We next list some of our specific open questions. Most of our algorithms require that ``in a course, no two TAs have the same grade''. Is this essential to derive those polynomial-time algorithms? In our $\fpt(n)$ algorithm, we require capacity and types of valuations of a course to be constant. What is the complexity with respect to $n$, when only one of them is constant?


\clearpage
\appendix
\begin{center}
    {\LARGE{\bf Appendix}}\footnote{Please ignore $\clubsuit$} \\
     \vspace{0.5cm}
     
\end{center}

\section{Other Applications of \mefe}

We discuss two practical applications of \mefe\footnote{We were unable to discuss these applications in the introduction due to lack of space.}. The \mefe model might be applied in a Job Market with a mediator company. Job applicants (analogous to TAs in our model) hold utilities over the firms (analogous to courses in our model) that the mediator company has ties with. Each of these firms might hold tests (grade in a course) specific to them, conducted by the mediator company. However, the firms assign utilities to applicants based on resume shortlisting, HR rounds along with the test scores. Here, applicants might envy each other based on their preferences (utilities assigned by TAs) and test scores for companies. And firms would want to ensure that they are satisfied with the applicants that are allocated to them. 
We could also use \mefe in Medical Residency Matching. Residency applicants (TAs) hold utilities over residencies (courses). Additionally, applicants might have grades in related areas to the residencies they are applying to. Residencies hold utility over applicants based on performance in relevant areas, clinical experience and interviews. 

\section{Missing proofs of Hardness Section}

\thmnphcourses*
\begin{proof}
    Let ${\cal S}=\{s_1,\ldots,s_m\}$ be an instance of the {\sc Equal-Cardinality Partition} problem. We construct an instance ${\cal J}$ of \mefe as follows. We have two courses, i.e., $X= \{x_1,x_2\}$, and $m$ TAs, $T=\{t_1,\ldots,t_m\}$. We define course valuations as follows: $v_i(t_j)=s_j$, where $i\in [2],j\in [m]$. The capacity of both the courses is $\frac{m}{2}$. All the TAs give $1$ as value to all the courses and they also have grade $1$ as in all the courses (values do not matter in our reduction, all the valuations (grades) of TAs need to be same). Let $k=\frac{\sum_{s\in {\cal S}}s}{2}$. Next, we prove the correctness. In particular, we prove the following.

\begin{restatable}{lemma}{lem:nph-2courses-correct}
        ${\cal S}$ is a yes-instance of {\sc Equal-Cardinality Partition} if and only if ${\cal J}$ is a yes-instance of \mefe.
\label{lem:nph-2courses-correct}\end{restatable}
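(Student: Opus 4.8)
The plan is to prove the two directions of the equivalence directly, in each case exhibiting an explicit matching and checking the three defining properties of an \mefem solution (feasibility, satisfaction of courses, merit-based envy-freeness). A preliminary observation, used in both directions, is that in \emph{any} feasible matching of ${\cal J}$ every TA is assigned: the capacities of the two courses sum to $\frac{m}{2}+\frac{m}{2}=m=|T|$, and $\mu^{-1}(x_1),\mu^{-1}(x_2)$ are disjoint. (If $m$ is odd then {\sc Equal-Cardinality Partition} is trivially a no-instance, and we may output a fixed no-instance of \mefe; so assume $m$ even.)

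For the forward direction, let $(S_1,S_2)$ be a solution to ${\cal S}$, so $|S_1|=|S_2|=\frac{m}{2}$ and $\sum_{s\in S_1}s=\sum_{s\in S_2}s$. Define $\mu$ by $\mu(t_j)=x_i$ whenever $s_j\in S_i$. The cardinality condition gives $|\mu^{-1}(x_i)|=\frac{m}{2}=c_i$, and since $v_i(t_j)=s_j>0$ and $u_j(x_i)=1>0$ no TA sits on a zero-valued course; hence $\mu$ is feasible. For satisfaction, ${\sf AvgUtil}(x_i)=\frac{1}{c_i}\sum_{t_j\in\mu^{-1}(x_i)}s_j=\frac{1}{c_i}\cdot\tfrac12\sum_{s\in{\cal S}}s$, which meets the threshold by the choice of $k$ in the construction. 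Finally, every TA has grade $1$ and utility $1$ for every course and every TA is matched, so the condition $u_i(\mu(t_j))>u_i(\mu(t_i))$ can never hold; hence no merit-based envious pair exists and $\mu$ is an \mefem matching.

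For the reverse direction, let $\mu$ be an \mefem matching of ${\cal J}$ and set $S_i=\{\,s_j:\mu(t_j)=x_i\,\}$. Feasibility forces $|S_i|=c_i=\frac{m}{2}$, so $S_1$ and $S_2$ have equal size, and since all TAs are matched they partition ${\cal S}$. The satisfaction constraint gives $\sum_{s\in S_i}s=c_i\cdot{\sf AvgUtil}(x_i)\ge c_i\cdot k$ for both $i$; combining this with $\sum_{s\in S_1}s+\sum_{s\in S_2}s=\sum_{s\in{\cal S}}s$ and the value of $k$, both inequalities must be tight, whence $\sum_{s\in S_1}s=\sum_{s\in S_2}s$. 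Thus $(S_1,S_2)$ solves ${\cal S}$.

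The argument is essentially routine. The only points that need care are: (i) checking that the satisfaction threshold is calibrated so that, \emph{together with} the fact that all TAs are necessarily assigned, the two per-course inequalities force equal sums in the reverse direction; and (ii) observing that the degenerate grades and utilities on the TA side make merit-based envy-freeness automatic. I do not anticipate any genuine obstacle.
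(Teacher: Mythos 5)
Your proposal is correct and follows essentially the same route as the paper's proof: construct the matching from the partition (and vice versa), use the capacity $m/2$ for feasibility and the threshold $k$ to force equal sums, and note that the uniform TA-side utilities and grades make merit-based envy-freeness automatic. Your reverse direction is slightly more explicit than the paper's (spelling out why the two satisfaction inequalities must both be tight, and why all TAs are matched), but the argument is the same.
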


\begin{proof}
    In the forward direction, let $S_1,S_2$ be a solution to ${\cal S}$. We construct a matching $\mu$ for ${\cal J}$ as follows: for $t\in S_i$, where $i\in [2]$, $\mu(t)=x_i$. Since $|S_1|=|S_2|=\frac{m}{2}$, $\mu$ is a feasible matching. Since $\sum_{s\in S_1}s=\sum_{s\in S_2}s = \frac{\sum_{s\in {\cal S}}s}{2}$, due to the construction of the instance ${\cal J}$ and the matching $\mu$, ${\sf AvgUtil}(x_i)=\frac{\sum_{s\in {\cal S}}s}{2}$, where $i\in [2]$. Since all the TAs are matched and everyone has same grade for both the courses, i.e., $0$, there is no envy between any pair of TAs. 

    In the reverse direction, let $\mu$ be a solution to ${\cal J}$. Let $S_i=\mu^{-1}(x_i)$. Due to the capacity constraint, every course is matched to $\frac{m}{2}$ TAs, thus, $|S_1|=|S_2|=\frac{m}{2}$. Since $k=\frac{\sum_{s\in {\cal S}}s}{2}$, $\sum_{s\in {\cal S}_1}s= \sum_{s\in {\cal S}_2}s = \frac{\sum_{s\in {\cal S}}s}{2}$. 
\end{proof}
This completes the proof.
\end{proof}

Next result is due to a polynomial-time reduction from \tcm, which is defined as follows. Given a set of $n$ men, say $M$, and $n$ women, say $W$, such that each of them specifies a preference list of size three that may contain ties; the goal is to decide whether there exists a perfect matching that is weakly stable. A matching $\eta$ is called weakly stable if there does not exist a pair $(m,w)$, where $m\in M$ and $w\in W$, each of whom is either unmatched in $\eta$ or prefer the other to their matched partner. \tcm is known to be \npc even when the ties belong to only women side~\cite[Theorem 5]{IRVING2009213}, and we consider this variant for our reduction. Suppose that for a woman $w$, the preference list is $\{m_1, m_2\} \succ m_3$. Then, we say that $w$ ranks $m_1,m_2$ at position $1$ and $m_3$ at position $2$. Furthermore, $m_1$ and $m_2$ are equally preferred by $w$.

\thmnphcapone*

\begin{proof}

Let ${\cal I}$ be an instance of \tcm such that the ties are only on the women's side. Let $M,W$ be the set of men and women, respectively. For $x\in M \uplus W$, let ${\sf pref}_x$ denote the preference list of $x$. We construct an instance ${\cal J}$ of \mefe as follows. Let $X=M$ and $T=W$, i.e., for each man $m \in M$, we have a course $m$, and for each woman $w$, we have a TA $w$. For every course $m\in X$, $v_m(w)=4-i$, if man $m$ ranks woman $w$ at position $i$; otherwise $0$. We next define a set of valuation functions of TAs. It is according to the preference list of women. If woman $w$ ranks man $m$ at position $i$, then the utility of TA $w$ for the course $m$ is $u_w(m)={4-i}$, if $m$ is not in the preference list of woman $w$, then $u_w(m)=0$. Next, we define a grade function for every course based the preference list of men, which is strict. If man $m$ ranks woman $w$ at position $i$, then the grade of TA $w$ for the course $m$ is $g_w(m)={4-i}$, if $m$ is not in the preference list of woman $w$, then $g_w(m)=0$. 
The capacity of each course is one and $k=1$.
Next, we prove the correctness. In particular, we prove the following.


\begin{figure}
\centering

\begin{subfigure}{0.4\textwidth}
\centering

\begin{tikzpicture}

\node at (1.7,4.2) {1};

\node at (0.3,4.2) {1};

\node at (0.3,3.65) {2};

\node at (1.7,3.4) {1};

\node at (0.3,2.8) {1};

\node at (1.7,2.8) {2};



\Vertex[x=2,y=4,size=0.2,label = $w_1   $,fontscale =1.5,position = right]{l11};
\Vertex[x=2,y=3,size=0.2,label = $w_2   $,fontscale =1.5,position = right]{l22};

\Vertex[x=0,y=4,size=0.2,label = $  m_1$,fontscale =1.5,position = left]{r11};
\Vertex[x=0,y=3,size=0.2,label = $ m_2$,fontscale =1.5,position = left]{r22};

\Edge(l11)(r11)
\Edge(l22)(r11)
\Edge(l22)(r22)

\end{tikzpicture}

\caption{An instance of \tcm}
\label{fig:tcm1}
\end{subfigure}

\vspace{5pt}

\begin{subfigure}[b]{0.4\textwidth}
\centering

\begin{tikzpicture}

\node at (1.7,4.2) {3};

\node at (0.3,4.2) {1};

\node at (0.3,3.65) {1};

\node at (1.7,3.4) {3};

\node at (0.3,2.8) {1};

\node at (1.7,2.8) {2};

\Vertex[x=0,y=4,size=0.2,label = $  c_1$,fontscale =1.5,position = left]{l1};
\Vertex[x=0,y=3,size=0.2,label = $  c_2$,fontscale =1.5,position = left]{l2};

\Vertex[x=2,y=4,size=0.2,label = $t_1 $,fontscale =1.5,position = right]{r1};
\Vertex[x=2,y=3,size=0.2,label = $t_2 $,fontscale =1.5,position = right]{r2};

\Edge[label = \color{blue}{3},fontsize=\large](l1)(r1)
\Edge[label = \color{blue}{2},fontsize=\large](r2)(l1)
\Edge[label = \color{blue}{3},fontsize=\large](l2)(r2)
\end{tikzpicture}

\caption{An instance of \mefe where grades are in the middle of the edges in \textcolor{blue}{blue} color}
\label{fig:mefe1}
\end{subfigure}

\caption{An instance of \mefe is presented in  \ref{fig:mefe1} corresponding to an instance of \tcm presented in \ref{fig:tcm1}}
\label{fig:thm2}
\end{figure}
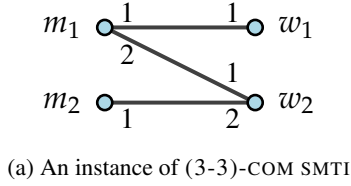
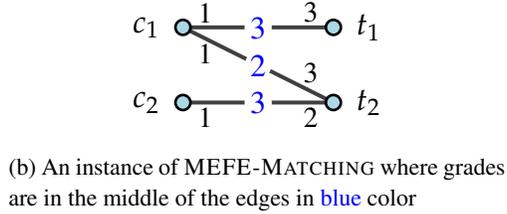

\begin{restatable}{lemma}{lem:nph-cap1-correct}
      ${\cal I}$ is a yes-instance of \tcm if and only if ${\cal J}$ is a yes-instance of \mefe.
\label{lem:nph-cap1-correct}\end{restatable}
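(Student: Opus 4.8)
The plan is to show that one natural map serves as the witness in both directions. Given a matching $\eta$ of ${\cal I}$, define $\mu(w)=\eta(w)$ for every TA $w$; given a matching $\mu$ of ${\cal J}$, define $\eta(w)=\mu(w)$ for every woman $w$ (throughout, ``the man matched to $w$'' and ``the course assigned to TA $w$'' are the same object). Because every capacity equals one and $k=1$, feasibility and course satisfaction will be immediate, so the real work is translating weak stability of $\eta$ into merit-based envy-freeness of $\mu$ and back.

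\emph{From \tcm to \mefe.} I would start from a weakly stable perfect matching $\eta$ and check that $\mu(w)=\eta(w)$ assigns exactly one TA to each course; that TA is mutually acceptable, so $v_m(\mu^{-1}(m))\ge 1$, giving ${\sf AvgUtil}(m)\ge 1=k$, and $u_w(\mu(w))\ge 1$, so no TA sits on a zero-valued course; since $\eta$ is perfect, every TA is matched. Next I would rule out envy: if $w$ envied $w'$ with $\mu(w')=m$, then $u_w(m)>u_w(\mu(w))$ forces $m$ onto $w$'s list, and since $\eta(w)=\mu(w)$ is on $w$'s list and utilities encode ranks (tied ranks giving equal utility), $w$ strictly prefers $m$ over $\eta(w)$; moreover $g_w(m)\ge g_{w'}(m)\ge 1$ (the lower bound because $\eta$ matches $w'$ to $m$), so $w$ is on $m$'s list, and since $m$'s list is strict and $w\ne w'$ we get $g_w(m)>g_{w'}(m)$, that is $m$ strictly prefers $w$ over $\eta(m)=w'$. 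Then $(m,w)$ blocks $\eta$, a contradiction, so $\mu$ is a \mefem matching.

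\emph{From \mefe to \tcm.} I would start from a \mefem matching $\mu$. Every course is saturated by exactly one positively valued TA, so $n$ distinct TAs are used, and as $|T|=n$ every TA is matched; since $v_m(w)>0$ implies $u_w(m)>0$ in the construction (mutual acceptability of the SMTI lists), $\eta(w):=\mu(w)$ is a perfect matching of ${\cal I}$. I would then verify weak stability: a blocking pair $(m,w)$ has both agents matched (the matching is perfect), with $w\ne\eta(m)=:w'$, with $w$ strictly preferring $m$ over $\eta(w)$ and $m$ strictly preferring $w$ over $w'$. The first translates to $u_w(m)>u_w(\mu(w))$; the second, as $m$'s list is strict, translates to $g_w(m)>g_{w'}(m)$, hence $g_w(\mu(w'))\ge g_{w'}(\mu(w'))$. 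So $w$ merit-envies $w'$ in $\mu$, contradicting that $\mu$ is \mefem. Hence $\eta$ is a weakly stable perfect matching.

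\emph{Where the difficulty lies.} The one subtle point is the asymmetry caused by the ties, which live only on the TA-utility side. A weak blocking pair does not violate weak stability, so I need the pair $(m,w)$ to be strictly improving for both agents, whereas merit-envy only gives a strict gain in $w$'s utility plus a weak comparison of grades. The construction makes these match: grades are read off the strict men's lists, so a weak grade comparison between two distinct TAs is automatically strict, while utilities are read off the possibly tied women's lists, so ``strictly prefers'' means exactly ``larger utility''. Orienting these implications correctly, and disposing of the degenerate cases $w=w'$ and blocking pairs with an unmatched agent, is the heart of the argument; everything else follows from all capacities being $1$ and $k=1$.
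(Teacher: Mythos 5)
Your proposal is correct and follows essentially the same route as the paper: the identity map $\mu=\eta$ in both directions, feasibility and satisfaction falling out of $k=1$ and unit capacities, and the key observation that a weak grade comparison between two distinct TAs is automatically strict because grades are read off the strict men's lists, which is exactly how the paper converts a merit-envy pair into a weakly blocking pair and back. Your treatment is, if anything, slightly more explicit than the paper's about the degenerate cases (mutual acceptability, $w=w'$, and unmatched agents), but the argument is the same.
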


\begin{proof}
  In the forward direction, let $\eta$ be a solution to ${\cal I}$. We claim that $\mu=\eta$ is also a solution to ${\cal J}$. Clearly, $\mu$ is a feasible matching as every course is matched to one TA. Furthermore,  for every course $w\in X$, ${\sf AvgUtil}(w) \geq 1$ as $v_m(w) \geq 1$ for all $w$ who are in preference list of $m$ and if a woman $w$ is matched to a man $m$ then $w$ must be in the preference list of $m$. Next, we argue that $\mu$ is merit-based envy-free. Note that $\mu$ is a perfect matching. Thus, every TA is matched to a course. Furthermore, every TA is matched to a course with non-zero utility due to the construction and the fact that $\eta$ is a solution to \tcm. Suppose that TA $t$ envies Ta $t'$, then we have $g_t(\mu(t')) \geq g_{t'}(\mu(t'))$. Due to construction and the fact that ties belong to only women side, equality is not possible. Hence, $g_t(\mu(t')) > g_{t'}(\mu(t'))$. Also, $u_t(\mu(t')) > u_t(\mu(t))$ which means $t \succ_{\eta(t')} t'$ and $\eta(t') \succ_{t} \eta(t)$, which contradict the fact that $\eta$ is a stable matching.

  In the reverse direction, let $\mu$ be a solution to ${\cal J}$. We claim that $\eta=\mu$ is a solution to ${\cal I}$. Recall that the capacity of every course is one in ${\cal I}$. Thus, $\eta$ saturates all men. Since, the number of courses and TAs is same and every course is matched to only one TA, every TA is matched to a course in $\mu$. Thus, $\eta$ is a perfect matching. Moreover, for every course $w\in X$, ${\sf AvgUtil}(w) \geq 1$ as we have $v_m(w) \geq 1$ for matched pair ($m,w$). Each man is matched with a woman who is in his preference list and hence, each woman also is matched with a man in her preference list due to construction. Next, we claim that $\mu$ is weakly stable. Towards the contradiction, suppose that there is a pair $(m,w)$ such that
  
  $w \succ_m \eta(m)$
  and 
 
  $m \succ_w \eta(w)$. Thus, due to the construction of the instance ${\cal J}$, $g_{w}(m)>g_{\eta(m)}(m)$, 
 
  and $u_w(m)>u_w(\mu(w))$. Thus, TA $w$ envies TA $w'$, which contradicts that $\mu$ is a solution to ${\cal J}$. 
\end{proof}

This completes the proof when course utilities are derived from grades. The binary case is similar. Here, the course valuation would be defined as follows: for every course $m \in X$, $v_m(w) = 1$ if woman $w$ is in the preference list of man $m$, otherwise $0$. The proof remains same for binary valuation function. 
\end{proof}

\nphcaptwo*

\begin{proof}
We give a polynomial-time reduction from the \dmatching (\dmatchingshort, in short) problem, in which 
 
  given three sets $P,Q,R$ of equal sizes and a set $E \subseteq P \times Q \times R$; 
  the goal is to find a matching $M \subseteq E$ such that for any two distinct triplets $(p_1, q_1, r_1), (p_2, q_2, r_2) \in M$, $p_1 \neq p_2, q_1 \neq q_2$, and $r_1 \neq r_2$, and every element of $P\uplus Q \uplus R$ belongs to a triplet in $M$ (such a matching is called a perfect matching). The \dmatchingshort problem is \npc even when each  element of $P\uplus Q \uplus R$ belongs to at most three sets in $E$~\cite{DBLP:books/fm/GareyJ79}.


\begin{figure}
    \centering

\begin{tikzpicture}
[scale=0.60]

\node[draw,align=right] at (2.5,5) {\color{blue}$u^X=2$\\ \color{blue}$g=1$};

\node[draw,align=right] at (-2.5,6.4) {\color{red}$u^X=3$\\ \color{red}$g=2$};

\node[draw,align=right] at (-3.2,-0.1) {$u^X=1$\\ $g=2$};

\draw (-0.2,1) ellipse (50pt and 127pt);
\Vertex[x=0,y=4,size=0.2,label = $r_{1}^{1}$,fontscale =1.5,position = above]{r11};
\Vertex[x=0,y=3,size=0.2,label = $r_{1}^{2}$,fontscale =1.5,position = left]{r12};
\Vertex[x=0,y=2,size=0.2,label = $r_{1}^{3}$,fontscale =1.5,position = left]{r13};
\draw [thick,dash dot] (-2,1.6) -- (1.5,1.6);

\draw [thick,dash dot] (-2,0.5) -- (1.5,0.5);

\Vertex[x=0,y=0,size=0.2,label = $r_{n}^{1}$,fontscale =1.5,position = left]{rn1};
\Vertex[x=0,y=-1,size=0.2,label = $r_{n}^{2}$,fontscale =1.5,position = left]{rn2};
\Vertex[x=0,y=-2,size=0.2,label = $r_{n}^{3}$,fontscale =1.5,position = left]{rn3};

\begin{scope}[shift={(-3,-3)}]

\draw (7.2,4.3) ellipse (50pt and 70pt);

\draw [thick,dash dot] (5.5,4.5) -- (9,4.5);

\Vertex[x=7,y=6,size=0.2,label = $p_{1}$,fontscale =1.5,position = right]{p1o};

\draw [dotted]  (7,6) -- (7,5);

\Vertex[x=7,y=5,size=0.2,label = $p_{n}$,fontscale =1.5,position = right]{pno};

\Vertex[x=7,y=4,size=0.2,label = $q_{1}$,fontscale =1.5,position = right]{q1o};

\draw [dotted]  (7,3) -- (7,4);

\Vertex[x=7,y=3,size=0.2,label = $q_{n}$,fontscale =1.5,position = right]{qno};
\end{scope}

\begin{scope}[shift={(-13.5,5)}]

\draw (7.2,-0.4) ellipse (50pt and 70pt);

\Vertex[label = $r_{1}^{d_{1}}$,x=8,y=1,size=0.2,fontscale =1.2,position = left]{R11d};
\Vertex[label = $r_{1}^{d_{2}}$,x=8,y=0,size=0.2,fontscale =1.2,position = left]{R12d};

\draw [dotted]  (8,0) -- (8,-1);

\Vertex[label = $r_{n}^{d_{1}}$,x=8,y=-1,size=0.2,fontscale =1.2,position = left]{Rn1d};
\Vertex[label = $r_{n}^{d_{2}}$,x=8,y=-2,size=0.2,fontscale =1.2,position = left]{Rn2d};
\end{scope}

\begin{scope}[shift={(-13.5,5)}]

\draw (7.2,-5.4) ellipse (50pt and 70pt);

\Vertex[label = $r_{1}^{d'_{1}}$,x=8,y=-4,size=0.2,fontscale =1.2,position = left]{R13d};
\Vertex[label = $r_{1}^{d'_{2}}$,x=8,y=-5,size=0.2,fontscale =1.2,position = left]{R14d};

\draw [dotted]  (8,-5) -- (8,-6);

\Vertex[label = $r_{n}^{d'_{1}}$,x=8,y=-6,size=0.2,fontscale =1.2,position = left]{Rn3d};
\Vertex[label = $r_{n}^{d'_{2}}$,x=8,y=-7,size=0.2,fontscale =1.2,position = left]{Rn4d};
\end{scope}

\Edge[color = blue](r11)(p1o)
\Edge[color = blue](r11)(q1o)

\Edge[color = red](r11)(R11d)
\Edge[color = red](r11)(R12d)

\Edge(r11)(R13d)
\Edge(r11)(R14d)

\end{tikzpicture}

    \caption{\mefem instance of \dmatching where $(p_1,q_1,r_1)$ is an edge. Edge for $r_n$ are not drawn}
    \label{fig:enter-label}
\end{figure}

Let $(P,Q,R,E)$ be an instance of \dmatchingshort such that each  element of $P\uplus Q \uplus R$ belongs to at most three sets in $E$. 
We construct an instance of \mefe as follows. For each $r \in R$, we create three courses $r^{1},r^{2},r^{3}$. Intuitively, three copies of $r$ denote the sets containing $r$. 
For each $p\in P$, we add a TA $p$ and for each $q\in Q$, we add a TA $q$. We call these TAs as ``original'' and denote this set as $O$.  Next, we add some ``dummy'' TAs. For each $r \in R$, we add four dummy TAs $r^{d_1}$ , $r^{d_2}$ , $r^{d'_1}$ and $r^{d'_2}$. Let $D=\{r^{d_i}\colon r\in R, i\in [2]\}$ and $D'=\{r^{d'_i}\colon r\in R, i\in [2]\}$. Let $X$ be the set of all courses, i.e., $X=\{r^j \colon r\in R, j\in [3]\}$, and $T$ be the set of all the TAs, i.e., $T=O \uplus D \uplus D'$. Next, we define the utility function of every course. Suppose that $(p,q,r), (p',q',r)$, and $(p'',q'',r)$ be three triplets in $E$. Then, $v_{r^1}(p)=v_{r^1}(q)=2$, $v_{r^2}(p')=v_{r^2}(q')=2$, and $v_{r^3}(p'')=v_{r^3}(q'')=2$. Furthermore, for all $i\in [3]$, $v_{r^i}(r^{d_1})=v_{r^i}(r^{d_2})=3$, and $v_{r^i}(r^{d'_1})=v_{r^i}(r^{d'_2})=1$. That is, first copy of $r$ gives valuations $2$ to the TAs corresponding to the elements in the first set with $r$, second copy gives valuations $2$ to the TAs corresponding to the elements in the second set with $r$, and so on. All the copies gives valuation $3$ to the corresponding TAs in the set $D$ and $1$ to the corresponding elements in the set $D'$. The valuation for all the other TAs is $0$.

For every TA $t$ and course $x$, the utility function is as follows: $u_t(x)=1$, i.e., all the TAs value all the courses equally (the value does not matter in our reduction). 

Next, we define a grade function for each TA as follows.
For each TA $t$ corresponding to elements in $P\uplus Q$, $g_t(r_i^j)=1$, if $t$ and $r$ belong to a triplet in $E$. If $t$ is a dummy TA such that $t\in \{r^{d_1},r^{d_2},r^{d'_1},r^{d'_2}\}$, then, $g_t(r_i^j)=2$, where $i\in [n], j\in [3]$, and $0$ for all the other courses.

We set $k=2$ and the capacity for each course is $2$.

Next, we prove the correctness. In particular, we prove the following. 

\begin{restatable}{lemma}{lem:correctnees-nph-3DPM}
${\cal I}=(P,Q,R,E)$ is a yes-instance of \dmatchingshort if and only if ${\cal J}=(X,T,\{v_i\}_{i\in X}, \{u_i\}_{i\in T}, \{g_i\}_{i\in T}, \{c_i\}_{i\in X},k\}$ is a yes-instance of \mefe. 
\label{lem:correctnees-nph-3DPM}\end{restatable}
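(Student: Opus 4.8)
The plan is to prove the biconditional by exhibiting, in each direction, an explicit translation of a solution of one problem into a solution of the other. Throughout I use that every TA has utility exactly $1$ for every course, that every course has capacity $2$, and that $k=2$, so that the satisfaction constraint for a course $x$ reads ``the two utilities $v_x$ assigns to $\mu^{-1}(x)$ sum to at least $4$''.

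\emph{From \dmatchingshort to \mefe.} Given a perfect matching $M$, I would define $\mu$ as follows: for each $(p,q,r)\in M$, let $\ell\in[3]$ be the index for which $(p,q,r)$ is the $\ell$-th triplet of $E$ containing $r$, set $\mu(p)=\mu(q)=r^{\ell}$, and on the two remaining copies of $r$ place one TA of $\{r^{d_1},r^{d_2}\}$ together with one TA of $\{r^{d'_1},r^{d'_2}\}$ (say $r^{d_1},r^{d'_1}$ on one and $r^{d_2},r^{d'_2}$ on the other). Then I would check: feasibility, since each course gets exactly two TAs and no course is zero-valued from the TA side; satisfaction, since the copy carrying $p,q$ has average utility $(2+2)/2=2=k$ and each copy carrying a $D$-dummy and a $D'$-dummy has average utility $(3+1)/2=2=k$; and merit-based envy-freeness, which holds because $M$ is perfect (so every original TA is matched) and all dummies are matched by construction, hence every TA is assigned a course, and since all TA utilities equal $1$ no TA can strictly prefer another's course.

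\emph{From \mefe to \dmatchingshort.} Let $\mu$ be a solution. I would first establish three structural facts. (1) Every course is matched to two TAs with positive utility for it: otherwise one TA contributes $0$ and the other at most $3<4$, violating satisfaction. (2) Every dummy is matched — an unmatched dummy has grade $2$, the maximum possible, in every course and utility $1>0$ for it, so it would envy every matched TA — and, since a dummy not associated with $s$ has value $0$ for every copy $s^{j}$, fact (1) forces each dummy of $r$ onto one of $r^{1},r^{2},r^{3}$. (3) If a $D'$-dummy is matched to $x$, then $x$ is a copy of its own $r$ and the other TA on $x$ is a $D$-dummy of $r$, since otherwise the two utilities sum to at most $1+2=3<4$. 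Now fix $r$: its two $D'$-dummies lie on two \emph{distinct} copies of $r$ (two $D'$-dummies on one copy give average utility $1<2$), each of which also carries a $D$-dummy of $r$ by (3); as $r$ has exactly two $D$-dummies, this uses all four dummies of $r$ and fills two of its three copies. The third copy $r^{j(r)}$ therefore has two slots, filled by (1) with positive-utility non-dummy TAs, and no dummy of another element values a copy of $r$, so $\mu^{-1}(r^{j(r)})=\{p,q\}$ for two original TAs $p,q$, which must be the two elements of the $j(r)$-th triplet of $E$ containing $r$. Setting $M=\{(p,q,r):\mu^{-1}(r^{j(r)})=\{p,q\}\}$ gives $n$ triplets, one per $r\in R$; since $\mu$ is a function each original TA lies in at most one selected triplet, and a counting argument ($3n$ courses times capacity $2$, minus $4n$ dummy slots, leaves exactly $2n$ slots for the $2n$ original TAs) shows every element of $P\uplus Q$ is covered, so $M$ is a perfect matching.

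\emph{Main obstacle.} The forward direction is routine bookkeeping. The crux is the reverse direction, and specifically the structural dichotomy that forces, for each $r$, exactly two of its copies to be occupied by dummy pairs and the third by a genuine triplet. The delicate point is ruling out stray configurations — e.g., a copy matched to a $D$-dummy together with an original TA, or to two $D$-dummies — which are eliminated precisely because the two $D'$-dummies of $r$ must each be absorbed alongside one of the two $D$-dummies of $r$; once this is in place, disjointness and coverage of the extracted triplets follow immediately from $\mu$ being a function and from counting.
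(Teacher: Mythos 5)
Your proof is correct and follows essentially the same route as the paper's: the same forward construction (triplet TAs on one copy, one $D$-dummy paired with one $D'$-dummy on each remaining copy) and the same three structural observations in the reverse direction (positively valued TAs only, all dummies matched, $D'$-dummies forced alongside $D$-dummies). If anything, you are slightly more careful than the paper in two spots — assigning $p,q$ to the copy $r^{\ell}$ indexed by their triplet rather than always to $r^{1}$, and making the final counting argument for perfectness explicit — but these are refinements of the same argument, not a different approach.
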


\begin{proof}
    In the forward direction, let $M$ be a solution to ${\cal I}$. We construct a solution to ${\cal J}$ as follows. If $(r,p,q)\in M$, then $\mu(p)=\mu(q)=r^1$. Furthermore, $\mu(r^{d_1})=\mu(r^{d'_1})=r^2$ and $\mu(r^{d_2})=\mu(r^{d'_2})=r^2$. We first show that $\mu$ is a feasible matching.
    \begin{restatable}{claim}{clm:matching-feasibility}
            $\mu$ is a feasible matching.
   \label{clm:matching-feasibility}\end{restatable}
    \begin{proof}
        Since $M$ is a perfect matching, for every $r\in R$, its first copy $r^1$ is matched to two TAs. By the construction, all the other courses are matched to two dummy TAs. Thus, $\mu$ is a feasible matching.
    \end{proof}
    Next, we argue that $\mu$ meets satisfaction criteria of each course.
    \begin{restatable}{claim}{clm:satisfaction}
               For each course $r^j\in X$, where $r\in R$, ${\sf AvgUtil}(r^j)\geq 2$.
\label{clm:satisfaction}\end{restatable}
    \begin{proof}
        Recall that every $r\in R$ is in a triplet in $M$. Suppose $(p,q,r)\in M$. Then, $\mu^{-1}(r^1)=\{p,q\}$. Since the utility of $r^1$ for $p$ and $q$ is $2$, ${\sf AvgUtil}(r^1)=2$. Since $r^2$ is matched with $r^{d_1}$ and $r^{d'_1}$, and its utility for $r^{d_1}$ is $3$ and $r^{d'_1}$ is $1$, ${\sf AvgUtil}(r^2)=2$. Similarly, since $r^3$ is matched with $r^{d_2}$ and $r^{d'_2}$, ${\sf AvgUtil}(r^3)=2$.
    \end{proof}
    Next, we prove merit-based envy-freeness between TAs. Note that since every TA values all the courses equally, if s/he is matched to a course, then there is no envy. Since $M$ is a perfect matching, every element in $P\uplus Q$ is in a triplet in $M$. Thus, every TA corresponding to elements in $P\uplus Q$ is matched to a course. TAs $r^{d_1} \in D, r^{d'_1} \in D'$ are matched to $r^2$, and $r^{d_2} \in D, r^{d'_2} \in D'$ are matched to $r^3$. Thus, all the TAs are matched to a course. This completes the proof in the forward direction.

In the reverse direction, let $\mu$ be a solution to ${\cal J}$. We begin with the following observations about the solution $\mu$. 

\begin{restatable}{observation}{obs1:rev-proof}
   Every course $x$ is matched to two TAs for whom the utility is non-zero. 
\label{obs1:rev-proof}\end{restatable}

\begin{proof}
    Since the capacity of every course is two, every course is matched to two TAs. Recall that $k=2$ and maximum utility of a course for any TA is $3$. Thus, $x$ is matched to TAs with non-zero utility. 
\end{proof}

\begin{restatable}{observation}{obs2:rev-proof}
    Every TA $t\in D\uplus D'$ is matched to a course in $\mu$. 
\label{obs2:rev-proof}\end{restatable}

\begin{proof}
    Suppose that a TA $t\in D\uplus D'$ is unsaturated in $\mu$. Recall that grade of $t$ is $2$ in every course, and no other TA has higher grade. Since $t$ is unsaturated, $t$ envies all the TAs by the definition of merit-based envy-freeness. 
\end{proof}

\begin{restatable}{observation}{obs3:rev-proof}
     If a TA $t \in D'$ matched to a course $x$, then $x$ is also matched to a TA in $D$.  
\label{obs3:rev-proof}\end{restatable}

\begin{proof}
    Recall that the capacity of every course is $2$ and $k=2$. Thus, course $x$ is matched to two TAs. Since the utility of a course for TAs in $D'$ is $1$, $x$ is matched with a TA in $D$ as the utility for TAs in $O$ is $2$ and for TAs in $D$ is $3$. 
\end{proof}
Due to Observations~\ref{obs2:rev-proof} and \ref{obs3:rev-proof}, without loss of generality, let $r^{d_1}, r^{d'_1}$ are matched to $r^{j}$ and $r^{d_2}, r^{d'_2}$ are matched to $r^{j'}$, where $r\in R, j,j' \in [3], j\neq j'$. Thus, $r^i$, where $i\in [3]\setminus \{j,j'\}$, is matched to two TAs in $O$, say $p,q$. Due to Observation~\ref{obs1:rev-proof} and the fact that $r^i$ has non-zero utility to $p,q \in O$, if $(p,q,r)\in E$, we know that if a course $x$ is matched to two TAs in $O$, then $r$ and $\mu^{-1}(r)$ forms a triplet in $T$.  

We construct a set $M\subseteq T$ as follows: $M=\{(r,p,q)\in T \colon \mu^{-1}(r^i)=\{p,q\}, \text{ where } i\in [3]\}$. Next, we argue that $M$ is a matching. Note that $p$ and $q$ are only matched to one course in $\mu$. Since the capacity of every course is also $2$, a course is matched to only two TAs. Thus, $M$ is a matching. Next, we argue that $M$ is a perfect matching. Recall that $k=2$ and the capacity of every course is also $2$, due to Observation~\ref{obs3:rev-proof}, only two courses corresponding to $r\in R$ can be matched to TAs in $D\uplus D'$. Morevover, since capacity of each course is two, one copy is matched to two TAs in $O$. Hence, every $r\in R$ belongs to a triplet in $M$. Since $|P|=|Q|=|R|$ and $M$ is a matching, every element of $P\uplus Q$ also belong to a triplet in $M$. 
\end{proof}
This completes the proof.
\end{proof}

\section{Missing proof of Polynomial Time Tractable Cases}

\subsection{Correctness of Theorem~\ref{degree-cap1}}

\begin{proof}
    
Here, we prove the correctness of the algorithm in Theorem~\ref{degree-cap1}. In particular, we show the correctness of each case individually. 

\begin{restatable}{lemma}{lem:degree-capC1}
    If there exists a course $x_i\in X$ such that $d(x_i)-c_i=0$, then ${\cal I}$ is a yes-instance of \mefe if and only if we return  a matching in Case 1.
\label{lem:degree-capC1}\end{restatable}
\begin{proof}
To prove the forward direction, first let us assume that $\eta$ is a solution to ${\cal I}$. Recall that $X'=\{x_1,\ldots,x_\ell\}$ is the set of courses for which  $d(x_i) = c_i$. Thus,  for all the courses $x_i \in X$, $\mu^{-1}(x_i)=\eta^{-1}(x_i)=N(x_i)$. 

Due to the connectivity of the graph $G'$, and the fact that if we assign TAs to a course in the algorithm, it does not have any unallocated neighbors, allocated TAs have unallocated courses at every step of the algorithm. 
Let $(x_{\ell + 1}, \ldots, x_n)$  be the sequence of courses considered by this algorithm after we call $\texttt{Extended Matching}$ given that $X_\mu = X'$ and $T_\mu = N(X')$. We next prove the following.

\begin{restatable}{claim}{clm:d-c=0}
    For every $i\in \{\ell +1,\ldots,n\}$, $\mu^{-1}(x_i)=\eta^{-1}(x_i)$.
\label{clm:d-c=0}\end{restatable}

\begin{proof}  
We prove it by strong induction on $i$.

    \emph{Base Case.} $i=\ell+1$. Since $x_{\ell + 1}$ is the first course considered after the assignments of courses in $X'$, there must be a TA $t' \in T_\mu$ such that $x_{\ell + 1}t' \in E(G')$. Furthermore, $\eta^{-1}(x_{\ell + 1}) \cap T_\mu = \emptyset$ as $\mu^{-1}(x_i)=\eta^{-1}(x_i)$, for all $i\in [\ell]$. This indicates that $t'$ is the only neighbor of $x_{\ell + 1}$ that is in $T_\mu$ at this step and shows that $\eta^{-1}(x_{\ell + 1}) = N(x_{\ell + 1}) \setminus \{t'\}$. Also, by the $\texttt{Extended Matching}$ procedure, for all the TAs $t'' \in N(x_{\ell + 1})\setminus \{t'\}$, $\mu(t'')=x_{\ell + 1}$. Hence, $\mu(x_{\ell + 1})=\eta(x_{\ell + 1})$. 

    \emph{Induction Step.} Suppose the claim is true for all $i\leq j-1$. We next prove it for $i=j$. We considered $x_j$ in the algorithm as $x_j\in N(T(\mu))$. Let $x_jt' \in E(G')$, where $\mu(t')=x_i$, $i<j$. Due to induction hypothesis, $\eta(t')=x_i$. Furthermore, due to induction hypothesis, $\eta(x_i)=\mu(x_i)$, for all $i\leq j-1$. Thus,  $\eta(x_j)\cap T(\mu) = \emptyset$, where $T(\mu)$ is the set of TAs constructed in the first $j - \ell - 1$ iterations of the while loop of Algorithm~\ref{alg:extend-match} along with the TAs in $N(X')$, i.e., it contains the set of TAs assigned to $x_1,\ldots,x_{j-1}$. Thus, $t'$ is the only neighbor of $x_j$ that is in  $T(\mu)$ at this step. Thus, for all the TAs $t'' \in N(x_j)\setminus \{t'\}$  $\mu(t'')=x_j$. Hence, $\mu(x_j)=\eta(x_j)$. 
\end{proof}  

In the reverse direction, if we return a matching in Case 1, then ${\cal J}$ is clearly a yes-instance of \mefe.
\end{proof}
\begin{restatable}{lemma}{lem:degree-capC1}
    If $G'$ is acyclic, then ${\cal I}$ is a yes-instance of \mefe if and only if Algorithm~\ref{alg:deg-cap-tree} returns a matching.
\label{lem:degree-capC1}\end{restatable}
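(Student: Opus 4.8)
The plan is to prove the two directions separately, with the reverse direction being immediate and the forward direction carrying all the content. For the reverse direction, I would simply note that Algorithm~\ref{alg:deg-cap-tree} returns a matching $\mu_t$ only after the explicit test ``$\mu_t$ is a solution to ${\cal I}$'' has succeeded, so ${\cal I}$ is a yes-instance by definition.

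For the forward direction, suppose ${\cal I}$ is a yes-instance and fix a solution $\eta$ to ${\cal I}$. First I would reuse the edge-counting identity already derived in Case~2.1: since $G'$ is a tree, $\sum_{i\in[n]} d(x_i) = |X| + |T| - 1$, and because $c_i = d(x_i) - 1$ for every course this yields $\sum_{i\in[n]} c_i = |T| - 1$. Hence every feasible matching, in particular $\eta$, leaves exactly one TA unassigned; call it $\tilde t$. The key claim is that in the iteration of the \textbf{for} loop with $t = \tilde t$, the call $\texttt{Extended Matching}(G', \mu_{\tilde t}, \{\tilde t\}, \emptyset)$ reconstructs $\eta$ exactly. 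I would prove this by strong induction on the position in the order $x_1, x_2, \dots, x_n$ in which courses are popped by the \textbf{while} loop, showing $\mu_{\tilde t}^{-1}(x_i) = \eta^{-1}(x_i)$ for all $i$. In the base case, $x_1$ is popped because $x_1 \in N(\tilde t)$; since $\tilde t$ is unassigned in both matchings and $d(x_1) - c_1 = 1$, feasibility of $\eta$ forces $\eta^{-1}(x_1) = N(x_1)\setminus\{\tilde t\}$, which is precisely what the inner \textbf{for} loop assigns. In the inductive step, $x_j$ is popped because it has a neighbor $t'$ with $\mu_{\tilde t}(t') = x_i$ for some $i < j$; by the induction hypothesis $\eta(t') = x_i$ too, and the set of already-assigned TAs (namely $\{\tilde t\}$ together with those assigned to $x_1,\dots,x_{j-1}$) coincides in $\mu_{\tilde t}$ and $\eta$. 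Acyclicity of $G'$ then guarantees that $t'$ is the \emph{only} already-assigned neighbor of $x_j$ at this moment, so the capacity constraint on $x_j$ forces $\eta^{-1}(x_j) = N(x_j)\setminus\{t'\}$, matching the algorithm. Connectivity of $G'$ ensures the \textbf{while} loop eventually visits every course, so $\mu_{\tilde t} = \eta$, and the algorithm returns a matching.

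The main obstacle I anticipate is establishing uniqueness of the already-assigned neighbor $t'$ in the inductive step — this is exactly the place where acyclicity and the unit degree–capacity gap are both essential, and it is what breaks in Cases~2.2 and~2.3 once cycles are present. I would make this rigorous by maintaining, as an invariant of the \textbf{while} loop, that the assigned TAs together with the visited courses induce a connected subtree of $G'$ in which every visited course already has its entire neighborhood assigned; the frontier of this subtree then consists of courses each having exactly one assigned neighbor, which is precisely what the forcing argument needs. The remaining steps — checking feasibility, the satisfaction constraint, and merit-based envy-freeness of the returned matching — are handled for free, since the returned matching equals $\eta$, which is a solution by assumption.
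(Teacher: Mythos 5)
Your proposal is correct and follows essentially the same route as the paper's proof: count edges in the tree to conclude exactly one TA is unassigned, guess it, and show by strong induction on the order in which \texttt{Extended Matching} pops courses that the forced assignment coincides with the assumed solution $\eta$, with the reverse direction being immediate from the algorithm's final check. Your explicit connected-subtree invariant justifying why the popped course has a \emph{unique} already-assigned neighbor is a welcome sharpening of a step the paper's induction leaves somewhat implicit.
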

\begin{proof}

In the forward direction, let $\eta$ be a solution to ${\cal I}$. As argued in Case 1, there exists a TA $\tilde{t}$, that is unallocated in $\eta$. Consider the case when $t=\tilde{t}$ in Step~\ref{step:loopt} of Algorithm~\ref{alg:deg-cap-tree}. We claim that $\eta=\mu_{\tilde{t}}$.  Let $x_1,\ldots, x_n$ be the sequence of courses that are considered in Algorithm~\ref{alg:extend-match} when called for $T_\mu=t$. We prove it by strong induction on $i\in [n]$. In particular, we prove the following. 

\begin{restatable}{claim}{clm:tree}
    For every $i\in [n]$, $\mu^{-1}(x_i)=\eta^{-1}(x_i)$.
\label{clm:tree}\end{restatable}

\begin{proof}
Recall that at any step of the algorithm, $T_{\mu_{\tilde{t}}}, X_{\mu_{\tilde{t}}}$ are the sets of matched TAs and courses, respectively, so far.  
    \emph{Base Case.} $i=1$. In the first iteration $T_{\mu_{\tilde{t}}}=\{\tilde{t}\}$.  Thus, $x_1\in N(\tilde{t})$. Since $\tilde{t}$ is unmatched in both $\eta$ and $\mu$ and the $d(x_1)-c_1 = 1$, $\mu(x_1)=\eta(x_1)$. 

    \emph{Induction Step.} Suppose the claim is true for all $i\leq j-1$. We next prove it for $i=j$. We considered $x_j$ in the algorithm as $x_j\in N(T(\mu_{\tilde{t}}))$. Let $x_jt' \in E(G')$, where $\mu_{\tilde{t}}(t')=x_i$, $i<j$. Due to induction hypothesis, $\eta(t')=x_i$. Furthermore, due to induction hypothesis, $\eta(x_j)\cap T(\mu_{\tilde{t}}) = \emptyset$, where $T(\mu_{\tilde{t}})$ is the set of TAs constructed in the first $j-1$ iterations of the while loop of Algorithm~\ref{alg:extend-match}, i.e., it contains $\tilde{t}$ and the set of TAs assigned to $x_1,\ldots,x_j$. Thus,  $t'$ is the only neighbor of $x_j$ that is in  $T(\mu_{\tilde{t}})$ at this step. Thus, for all the TAs $t'' \in N(x_j)\setminus \{t'\}$  $\mu(t'')=x_j$. Hence, $\mu(x_j)=\eta(x_j)$.  
\end{proof}

In the reverse direction, if Algorithm~\ref{alg:deg-cap-tree} returns a matching, then clearly ${\cal J}$ is a yes-instance.

\end{proof}

The idea of the next proof is similar to Lemma~\ref{lem:degree-capC2}
\begin{restatable}{lemma}{lem:degree-capC2}
    If $G'$ contains only one cycle, then ${\cal I}$ is a yes-instance of \mefe if and only if Algorithm~\ref{alg:deg-cap-1cycle} returns a matching.
\label{lem:degree-capC2}\end{restatable}
\begin{proof}
    In the forward direction let $\eta$ be a solution to ${\cal I}$. Let $C$ 
    be the unique cycle in $G'$. Let $x_1 \in C$. Let us also say that $x_1$ shares $t_1$ with $x_2 \in C$ and $t_\ell$ with $x_l \in C$. Since $d(x_1)-c_1=1$, either $t_1\in \eta^{-1}(x_1)$ or $t_\ell \in \eta^{-1}(x_1)$. Note that the number of courses in $C$ is equal to the number of TAs. This means that we cannot assign both $t_\ell$ and $t_1$ to $x_1$, otherwise, there would be one course $x_i \in C$ which gets neither of TAs that belong to $C$, and $x_i$ has at most $(d(x_i)-2)$ TAs in $T\setminus C$. Suppose that $t_1\in \eta^{-1}(x_1)$. Then, $\eta(t_\ell) \neq x_1$. In order to satisfy feasibility for $x_1$, $\eta^{-1}(x_1)=N(x_1)\setminus \{t_\ell\}$. If we assign $t_\ell$ to $x_1$ instead of $t_1$, then $\eta^{-1}(x_1)=N(x_1)\setminus {t_1}$. We first argue for the case when $\eta^{-1}(x_1)=N(x_1)\setminus \{t_\ell\}$. The other case can be argued analogously. We consider the case when  $\mu^{-1}(x_1)=N(x_1)\setminus \{t_\ell\}$ in Step~\ref{stp:cycle-c0} of Algorithm~\ref{alg:deg-cap-1cycle}. Thus, $\mu^{-1}(x_1)=\eta^{-1}(x_1)$. 
We argue that the matching $\mu$ returned in Step~\ref{stp:cycle-c1} of Algorithm~\ref{alg:deg-cap-1cycle} is same as $\eta$. 
Note that since the graph is connected, as argued earlier,  Algorithm~\ref{alg:extend-match} terminates when all the courses are matched. We already argued that $\mu^{-1}(x_1) = \eta^{-1}(x_1)$.  For the remaining courses, we prove the result similar to Claim~\ref{clm:tree}. Let $(x_2,\ldots, x_n)$ be the sequence of courses that are considered in Algorithm~\ref{alg:extend-match} when called for $X_\mu=\{x_1\}$ and $T_\mu=\mu^{-1}(x_1)$ (we reused the notation $x_2$ for convenience, it could be different from the one in $C$ that shares the TA with $x_1$). We prove it by strong induction on $i\in [n]$. In particular, we prove the following. 

\begin{restatable}{claim}{clm:1cycle}
    For every $i\in \{2,\ldots,n\}$, $\mu^{-1}(x_i)=\eta^{-1}(x_i)$.
\label{clm:1cycle}\end{restatable}

\begin{proof}
  
    \emph{Base Case.} $i=2$.  Since we first considered $x_2$ in Algorithm~\ref{alg:extend-match}, it has a neighbor in $N(x_1)\setminus \{x_\ell\}$, say $t'$. Since $\eta(x_1)=\mu(x_1)$ and $\eta$ is a solution, $t'$ can be the only TA in $N(x_2)$ that belongs to $T_\mu$, else $x_2$'s feasibility will not be met. Thus, for every $t''\in N(x_2)\setminus \{t'\}$, $\mu(t'')=x_2$. Thus, $\eta(x_2)=\mu(x_2)$.

    \emph{Induction Step.} Suppose the claim is true for all $i\leq j-1$. We next prove it for $i=j$. We considered $x_j$ in the algorithm as $x_j\in N(T(\mu))$. Let $x_jt' \in E(G')$, where $\mu(t')=x_i$, $i<j$. Due to induction hypothesis, $\eta(t')=x_i$. Furthermore, due to induction hypothesis, $\eta(x_j)\cap T(\mu) = \emptyset$, where $T(\mu)$ is the set of TAs constructed in the first $j - 2$ iterations of the while loop of Algorithm~\ref{alg:extend-match} along with the TAs assigned to $\{x_1\}$, i.e., it contains the set of TAs assigned to $x_1,\ldots,x_{j-1}$. Thus, $t'$ is the only neighbor of $x_j$ that is in  $T(\mu)$ at this step. Thus, for all the TAs $t'' \in N(x_j)\setminus \{t'\}$  $\mu(t'')=x_j$. Hence, $\mu(x_j)=\eta(x_j)$. 

    The inductive proof for the case where $t_\ell$ is assigned to $x_1$ initially is analogous to the proof given above.
\end{proof}  
In the reverse direction, if Algorithm~\ref{alg:deg-cap-1cycle} returns a matching, then clearly ${\cal J}$ is a yes-instance.
\end{proof}

\begin{restatable}{lemma}{lem:degree-capC3}
    If $G'$ contains more than one cycle, then ${\cal I}$ is a no-instance of \mefe.
\label{lem:degree-capC3}\end{restatable}
\begin{proof}
    The number of edges in $G'$ is  $|E(G')|=\sum_{i\in [n]}d(x_i)$. We also know that in any $G'$ that is not acyclic, the number of edges is at least the number of vertices, which in this case is $|X| + |T|$. 
    
\begin{restatable}{claim}{clm:mulcycle}
In a connected graph $H$, if $|E(H)|=|V(H)|$, then $H$ contains only one cycle. 
\label{clm:mulcycle}\end{restatable}

\begin{proof}
    Since $H$ is connected and  $|E(H)|=|V(H)|$, it contains at least one cycle, say $C$. Let $e=uv$ be an edge in  $C$. Since the deletion of an edge that belongs to a cycle does not disconnect the graph, $H-e$ is connected. Furthermore, since the number of edges in $H-e$ is $|E(H)|-1$, $H-e$ is a tree. Thus, there is a unique path between $u$ and $v$ in $H-e$. Suppose that there exist two distinct cycles $C,C'$ in $H$ containing $e$. Then, $C-e$ and $C'-e$ are two two distinct paths between $u$ and $v$ in $H-e$, a contradiction.  
\end{proof}

Thus, due to Claim~\ref{clm:mulcycle}, $|E(G')|>|X|+|T|$. Thus,

$\sum_{i\in [n]}d(x_i)>|X|+|T|$. Recall that $d(x_i) - 1 = c_i$. Thus,  
$\sum_{i\in [n]}c_i > |T|$. Since the number of TAs is lesser than the number of TAs needed to meet each course's capacity constraint, it is impossible to generate a feasible matching. Therefore, ${\cal J}$ is a no-instance of \mefe in this case.
\end{proof}
Therefore, we have proved that the above algorithm finds an \mefe, if it exists, for any instance by considering all cases, showing the correctness of the algorithm.
\end{proof}

\polycapone*
\begin{proof}
Toward designing the algorithm, we reduce the problem to \ssmtiffull (\ssmtif), which is defined as follows. Given a set of men, say $M$, and women, say $W$, a preference list of women for each man, a preference list of men for each woman, (preference lists might be incomplete and/or contain ties), and weight for every acceptable pair; the goal is to decide whether there exists a maximum weight strongly stable matching. A pair $(m,w)$ is called \emph{acceptable} if they are in each others' preference lists. A matching $\eta$ is called strongly stable if there does not exist an unmatched pair $(x,y)$ such that (i) either $x$ is unmatched in $\eta$ and $y$ is in the preference list of $x$, or $x$ strictly prefers $y$ over his/her matched partner in $\eta$, and (ii) either $y$ is unmatched in $\eta$ and $x$ is in the preference list of $y$, or $y$ is indifferent between $x$ and his/her matched partner, or $y$  strictly prefers $x$ over his/her matched partner in $\eta$. 
A maximum weight strongly stable matching can be found in polynomial time~\cite{DBLP:conf/isaac/Kunysz18}. 

Given an instance ${\cal I}$ of \mefe satisfying the constraints of Theorem 5, we create an instance ${\cal J}$ of \ssmtif as follows. The set of men $M$ is the same as the set of courses, and the set of women $W$ is the same as the set of TAs. 

If $u_t(x) > 0$, then only man $x$ and woman $t$ are in each others' preference list.
We set the preferences of men according to the grades of TAs. For two TAs $t,t'$ and a course $x$ such that $u_t(x) > 0$ and $u_{t'}(x) > 0$, if $g_t(x)>g_{t'}(x)$, then man $x$ strictly prefers $t$ over $t'$, and if $g_t(x)=g_{t'}(x)$, then $x$ prefers $t$ and $t'$ equally. Next, we define preference lists of women which is on the basis of utilities of TAs that are strict. For a TA $t\in T$ such that $u_t(x) > u_t(y)$, woman $t$ prefers man $x$ over $y$, i.e., $x \succ_t y$. 
For an acceptable pair $(x,t)$ such that  $v_x(t) < k$, the weight of $(x,t)$ is $0$. The weight of all other acceptable pairs is $1$. 

Let ${\cal J}$ be this constructed instance of \ssmtif. Note that the women's preference list do not have ties. We find a solution $\eta$ to ${\cal J}$ using the known polynomial-time algorithm~\cite{DBLP:conf/isaac/Kunysz18}. 
If the algorithm returns a strongly stable matching with weight $n$ (i.e., the number of men), then we return {\sf YES}; otherwise {\sf NO}. 
In particular, we prove the following.

\begin{restatable}{lemma}{polyssmti}
   ${\cal I}$ is a yes-instance of \mefe if and only ${\cal J}$ has a strongly stable matching with weight $n$. 
\label{lem:poly-ssmti}\end{restatable}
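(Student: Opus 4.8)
The plan is to prove the two directions separately, using the matching itself (with only a relabelling of courses as men and TAs as women) as the witness in both directions. First I would prove the forward direction: let $\mu$ be an \mefem matching for $\cal I$ and set $\eta=\mu$, viewed as a matching in $\cal J$. Since $k>0$ every course is saturated in $\mu$, hence every man is matched in $\eta$; moreover each matched pair $(x,t)$ satisfies $v_x(t)\geq k$ by the satisfaction criterion (with capacity one, ${\sf AvgUtil}(x)=v_x(\mu(x))$), so every pair of $\eta$ has weight $1$ and the total weight is $n$. It remains to check strong stability. Take an unmatched pair $(m,w)$ in $\eta$; since all men are matched, $m$ has a partner $\eta(m)=\mu(m)$. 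If $w$ strictly prefers $m$ over $\eta(w)$, that means $u_w(m)>u_w(\mu(w))$; because $\mu$ is merit-based envy-free, TA $w$ does not envy TA $\eta(w)$, so the other disjunct of the envy definition fails, i.e. $g_w(m)<g_{\eta(w)}(m)$ (equality is impossible here — wait, equality $g_w(m)=g_{\eta(w)}(m)$ would still give envy, so strictly $g_w(m)<g_{\eta(w)}(m)$), and therefore man $m$ strictly prefers his own partner $\eta(m)$; but actually what we need is that $m$ does \emph{not} strictly prefer $w$ over $\eta(m)$, OR $w$ does not weakly prefer $m$. Since $g_w(m)<g_{\eta(w)}(m)$ means $m$ strictly prefers $\eta(w)$... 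I need to be careful: the relevant condition for a strong-stability blocking pair is (i) $m$ unmatched or strictly prefers $w$, and (ii) $w$ unmatched or weakly prefers $m$. From envy-freeness, either $u_w(m)\le u_w(\mu(w))$ (then (ii) fails, since women's lists are strict, $w$ strictly prefers her partner) or $g_w(m)<g_{\eta(w)}(m)$. In the latter case $m$ strictly prefers $\eta(w)=\eta(\eta^{-1}(\eta(w)))$ — hmm, this shows $m$'s partner is at least as good only if $\eta(m)$ relates to $\eta(w)$; that is not immediate. The cleaner route: if $(m,w)$ blocks, then in particular $w$ weakly prefers $m$ to $\eta(w)$, and since $w$'s list is strict and $m\ne\eta(w)$, $w$ strictly prefers $m$, i.e. $u_w(m)>u_w(\mu(w))$ — but for envy we need $g_w(m)\ge g_{\eta(w)}(m)$; and (i) says $m$ strictly prefers $w$ to $\eta(m)$ so $m$ is not indifferent; strong stability's man-side only requires strict preference, so this gives nothing about grades directly. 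So I would instead argue: $m=\mu(w')$ for some $w'$ with $w'=\eta(m)$; the man $m$ strictly preferring $w$ to $w'=\eta(m)$ means $g_w(m)>g_{w'}(m)$... no — man $m$'s preferences are over TAs by their grade in course $m$, so $m$ prefers $w$ iff $g_w(m)>g_{\eta(m)}(m)$, with ties allowed. Combined with $w$ weakly preferring $m$, i.e. strictly (strict list): $u_w(m)>u_w(\mu(w))$. Now $g_w(m)>g_{\eta(m)}(m)=g_{\mu^{-1}(m)}(m)$ — wait $\eta(m)$ is the TA matched to $m$, call her $t_0=\mu^{-1}(m)$, so $g_w(m)>g_{t_0}(m)=g_{t_0}(\mu(t_0))$. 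So $w$ envies $t_0$ in $\mu$ (we have $g_w(\mu(t_0))\ge g_{t_0}(\mu(t_0))$ and $u_w(\mu(t_0))=u_w(m)>u_w(\mu(w))$), contradicting that $\mu$ is \mefem. Hence no blocking pair, $\eta$ is strongly stable with weight $n$.

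For the reverse direction I would take a strongly stable matching $\eta$ in $\cal J$ of weight $n$ and set $\mu=\eta$. Weight $n$ with $n$ men forces every man matched and every matched pair of weight $1$, so $v_x(\mu(x))\ge k$ for all $x$, giving feasibility (capacity one, no zero-valued assignment since only acceptable pairs are matched) and satisfaction of every course. For merit-based envy-freeness, suppose $t$ envies $t'$: then $g_t(\mu(t'))\ge g_{t'}(\mu(t'))$ and $u_t(\mu(t'))>u_t(\mu(t))$ (or $t$ unmatched, but all courses are saturated and $|T|\ge\sum c_i=n$; an unmatched $t$ would need $\mu(t')$ to be acceptable to $t$ and $t$ strictly preferred by $\mu(t')$ over $t'$ — handled the same way). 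Set $x=\mu(t')$. Since $u_t(x)>u_t(\mu(t))\ge 0$, the pair $(x,t)$ is acceptable and $t$ (a woman) strictly prefers $x$ to her partner $\mu(t)$. Since $g_t(x)\ge g_{t'}(x)$ and $x=\mu(t')=\eta(t')$, man $x$ weakly prefers $t$ to his partner $t'$. Hence $(x,t)$ is a blocking pair for strong stability, a contradiction. Therefore $\mu$ is \mefem and $\cal I$ is a yes-instance.

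The main obstacle I expect is the bookkeeping around the asymmetry of strong stability (strict preference required on the man/course side, but only weak preference on the woman/TA side) lining up exactly with the asymmetry built into the definition of merit-based envy ($g_i(\mu(t_j))\ge g_j(\mu(t_j))$ but $u_i(\mu(t_j))>u_i(\mu(t_i))$). The two asymmetries are designed to match, and the crucial use of ``each TA has distinct valuations for positively valued courses'' is precisely that it makes the women's preference lists strict, which is what lets a weak improvement on the woman's side translate into the strict inequality $u_i(\mu(t_j))>u_i(\mu(t_i))$ needed for envy. I would state that hypothesis explicitly at the point it is used. Everything else is routine relabelling and the observation that weight $n$ is equivalent to saturating all men with $k$-satisfying partners.
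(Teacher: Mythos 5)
Your proposal is correct and follows essentially the same route as the paper's proof: take $\eta=\mu$ (resp.\ $\mu=\eta$), observe that weight $n$ is equivalent to all men being matched along weight-$1$ (i.e.\ $k$-satisfying) edges, and translate strong-stability blocking pairs into merit-based envy, using the strictness of the women's lists exactly where you identify it. The forward direction's exposition wanders through a few false starts, but the argument you finally settle on — deriving $g_w(m)\ge g_{t_0}(m)$ and $u_w(m)>u_w(\mu(w))$ from a blocking pair and reading off envy of $t_0=\mu^{-1}(m)$ — is precisely the paper's case analysis.
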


\begin{proof}
    In the forward direction, let $\mu$ be a solution to ${\cal I}$. We claim that $\eta=\mu$ is a solution to ${\cal J}$. Since the capacity of every course is at least one, due to the feasibility condition, all the courses are matched in $\mu$. Thus, all the men are matched in $\eta$. Furthermore, since the ${\sf AvgUtil}(x)\geq k$ for every course $x$ with respect to the matching $\mu$, the weight of every edge in $\eta$ is $1$. Thus, the weight of matching $\eta$ is $n$.   We next argue that it is strongly stable.   Consider an unmatched pair $(m,w)$. 
    Consider two TAs $w$ and $\eta(m)$. Since $\mu$ does not have merit-based envy, either $g_w(m)<g_{\mu(m)}(m)$ or $u_w(m)\leq u_{w}(\mu(w))$. In the former case, $m$ strictly prefers $\eta(m)=\mu(m)$ over $w$. In the latter case, since the preference list of $w$ is strict, she strictly prefers $\eta(w)=\mu(w)$ over $m$. 
    Therefore, $\eta$ is strongly stable and its weight is $n$. 

   In the reverse direction, let $\eta$ be a solution to ${\cal J}$ with weight $n$. We claim that $\mu=\eta$ is a solution to ${\cal I}$. Since the weight of $\eta$ is $n$, the maximum weight of any edge is $1$, and the number of men is $n$, $\eta$ saturates all men. Furthermore, $\eta$ does not contain any edge with weight $0$. Since the set of men is corresponding to the courses, all the courses are saturated in $\mu$. Clearly, only one TA is matched to every course in $\mu$. . 
   Since $\eta$ does not contain any edge with weight $0$, due to the construction of the instance ${\cal J}$,
   
  For any course $x$, $u_x^X(\mu^{-1}(x)) \geq k$, i.e., all courses are satisfied with their assignment. We show that $\mu$ is merit-based envy-free with respect to ${\cal I}$. Suppose not, then there exists TAs $t,t'$ such that $g_t(\mu(t')) \geq g_{t'}(\mu(t'))$ and $u_t(\mu(t)) < u_t(\mu(t'))$. 
  This implies that man $\eta(t')$ is either indifferent between $t$ and $t'$ or strictly prefers $t$ over $t'$. Furthermore, woman $t$ is either unmatched or prefers $\eta(t')$ over $\eta(t)$. Hence, due to the pair $(\eta(t'),t)$, $\eta$ is not strongly-stable, a contradiction.  Hence, $\mu$ is a solution to ${\cal I}$.
  
\end{proof}
This completes the proof.
\end{proof}

\polyTAdegone*
\begin{proof}
Let ${\cal I}$ be an instance of \mefe, where 
the degree of each TA is one. 
We create a set of sub-instances $I = \{{\cal I}_1, {\cal I}_2,\ldots, {\cal I}_n\}$ as follows. For every $j\in [n]$, ${\cal I}_j$ contains the course $x_j$ and its neighbors as the set of TAs. The utilities, grades and $k$ are the same as in ${\cal I}$. Let $x_j$ be the unique course in ${\cal I}_j$ and $c_j$ be its capacity. For any $x_j \in X$, if the number of TAs which are neighbours of $x_j$ are less than $c_j$, then clearly, ${\cal I}_j$ is a no-instance. If the number of TAs is $c_j$, then we match all the TAs to $x_j$ and check whether this assignment is a solution to ${\cal I}_j$ (i.e., it satisfies all constraints). If it is, then we have a trivial yes-instance for ${\cal I}_j$, otherwise we have a no-instance. 
Next, we consider the case when the number of TAs is more than $c_j$.
Our algorithm is based on the following idea: if a TA $t$ is unmatched, then he will envy a matched TA $t'$, if the grade of $t$ in $x_j$ is at least the grade of $t'$ in $x_j$. Thus, our algorithm proceeds as follows: sort the TAs based on their grades in $x_j$ and match the first $c_j$ many TAs to $x_j$. Let $\mu_j$ be the resultant matching. For any $x_j \in X$, if $\mu_j$ is not a solution to ${\cal I}_j$ (i.e., it does not satisfy all the constraints), then return {\sf NO}; otherwise, we construct a matching $\mu$ for ${\cal I}$ as follows: for every $j\in [n]$, and TA $t$ in the instance ${\cal I}_j$, $\mu(t)=\mu_j(t)$. If $\mu$ is a solution to ${\cal I}$, then we return $\mu$; otherwise, we return {\sf NO}. Next, we prove the correctness of the algorithm to generate a \mefem matching for ${\cal I}$. In particular, we prove the following two statements. 

\begin{restatable}{lemma}{course1P}
   If ${\cal I}_j$ is a yes-instance of \mefe, then $\mu_j$ is the unique solution to ${\cal I}_j$.  
\label{lem:course1P}\end{restatable}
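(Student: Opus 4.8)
The plan is to show that whenever ${\cal I}_j$ has a solution, every solution $\eta$ of ${\cal I}_j$ assigns to $x_j$ exactly the $c_j$ TAs of highest grade in $x_j$; this set is then forced to be unique, it coincides with $\mu_j^{-1}(x_j)$ by construction of $\mu_j$, and hence $\mu_j$ is the one and only solution. First I would recall the shape of ${\cal I}_j$: it contains the single course $x_j$ with capacity $c_j$ and the TA set $N(x_j)$, every member of which values $x_j$ positively, so any feasible matching of ${\cal I}_j$ sends exactly $c_j$ TAs to $x_j$ and leaves all remaining TAs unassigned.

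Next I would run a short exchange/envy argument. Let $\eta$ be a solution to ${\cal I}_j$ and put $B=\eta^{-1}(x_j)$, so $|B|=c_j$. Suppose $B$ is not the set of the $c_j$ highest grades, i.e.\ there are TAs $t\notin B$ and $t'\in B$ with $g_t(x_j)\ge g_{t'}(x_j)$. Since $x_j$ is the only course, $t$ is unassigned in $\eta$ while $\eta(t')=x_j$; as $t$ positively values $x_j$ it strictly prefers $x_j$ to being unassigned, so $u_t(\eta(t'))=u_t(x_j)>u_t(\eta(t))$, and moreover $g_t(\eta(t'))=g_t(x_j)\ge g_{t'}(x_j)=g_{t'}(\eta(t'))$. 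Thus $t$ merit-envies $t'$ in $\eta$, contradicting merit-based envy-freeness. Hence in $\eta$ every matched TA has grade at least that of every unmatched TA, which pins down $B$ as the top-$c_j$-grade set and shows this set is unique; since $\mu_j$ was built by sorting TAs on their grades in $x_j$ and taking the first $c_j$, we get $\eta=\mu_j$. Uniqueness follows immediately, and because the hypothesis supplies at least one solution, $\mu_j$ itself is that solution.

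The one point needing care — and the only real obstacle — is the case where grade ties straddle position $c_j$ in the sorted order, which would make ``the first $c_j$ TAs'' ambiguous. I would observe that the argument above already excludes this under the hypothesis: if the $c_j$-th and $(c_j+1)$-th highest grades coincide, then any feasible $\eta$ leaves unassigned some TA whose grade equals that of a matched TA, and since equality is permitted in the definition of merit envy, that TA envies the matched one, so ${\cal I}_j$ would be a no-instance, contrary to assumption. Therefore, whenever ${\cal I}_j$ is a yes-instance the top-$c_j$-grade set is unambiguous, $\mu_j$ is well defined, and the lemma holds. This is precisely the behaviour required for the later use of the lemma in Theorem~\ref{thm:poly-degree1}, where the grades of TAs in a course are not assumed distinct.
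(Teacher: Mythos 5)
Your proof is correct and follows essentially the same route as the paper: both argue that any solution $\eta$ must match exactly the top-$c_j$ TAs by grade, since otherwise an unmatched TA with grade at least that of some matched TA would merit-envy it, forcing $\eta=\mu_j$. Your extra remark about grade ties straddling position $c_j$ is a sound (and slightly more careful) observation than the paper makes explicit, but it does not change the argument.
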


\begin{proof}
    Let $\eta_j$ be a solution to ${\cal I}_j$. We claim that $\eta_j=\mu_j$. Suppose not, then there exists a TA $t$ that is matched in $\mu_j$, but not in $\eta_j$, Since $c_j$ TAs are matched in both $\mu_j$ and $\eta_j$, there is also a TA $t'$ that is matched in $\eta_j$, but not in $\mu_j$. Due to the construction of $\mu_j$, $g_t(x_j)\geq g_{t'}(x_j)$. Since $t$ is unmatched in $\eta_j$, $t$ envies $t'$ in $\eta_J$, a contradiction. 
\end{proof}

\begin{restatable}{lemma}{deg1P}
${\cal I}$ is a yes-instance of \mefe if and only if the above algorithm returns a matching.
   
\label{lem:deg1P}\end{restatable}

\begin{proof}
In the forward direction, let $\eta$ be a solution to ${\cal I}$. If we denote the matching corresponding to each course, $x_j \in X$, by $\eta_j$, then $\eta_j(t) = \eta(t)$ for all $t \in N(x_j)$. Clearly, $\eta_j$ is a solution to ${\cal I}_j$. As argued in Lemma~\ref{lem:course1P}, $\eta_j$ is the unique solution to ${\cal I}_j$. The algorithm given above also generates the unique solution $\mu_j=\eta_j$ for each ${\cal I}_j$ such that $j \in [n]$. Thus, due to our construction of $\mu$, it is the same as $\eta$. Hence, we return the matching $\mu$. 

In the reverse direction, if we return a matching $\mu$ which is a solution, then clearly, each sub-instance ${\cal I}_j$ for $j \in [n]$ has a solution $\mu_j$ (else, we directly return {\sf NO}) and is a yes-instance of \mefe. If ${\cal I}_j$ is a yes-instance, then $\mu_j$ must be both feasible and should lead to satisfaction for $x_j$. Therefore, each course should be satisfied with its allocation in $\mu$ as well. There must be no envy among TAs in matching $\mu_j$ for sub-instance ${\cal I}_j$, since $\mu_j$ is a solution to ${\cal I}_j$. Additionally, there can be no envy between TAs with different neighbouring courses. Hence, there is no envy between any TA in $\mu$. Therefore, $\mu$ is a \mefem matching and ${\cal I}$ is a yes-instance of \mefe.

\end{proof}
This completes the proof.
\end{proof}

\polyconstantcoursecap*
\begin{proof}
In this case, we argue that the total number of feasible matchings are at most $m^{\OO(1)}$, where $m$ is the number of TAs. Thus, we can try all possible feasible matchings and return the one that is a solution. If no matching meets our fairness criteria, then we return {\sf NO}. 
Note that the total number of TAs that get assigned in a matching is a constant value $c=\sum_{x_i \in X} c_i$. Given the set of TAs $T$, there are $\binom{m}{c}$ ways of choosing the set of matched TAs. For each of those ways, we need to further choose $c_i$ TAs for each course $x_i$. The number of such choices for each course $x_i$ are bounded by $\binom{c}{c_i}$, which is a constant given that $c$ is a constant. Since the total number of courses is constant, the product of all the possible choices for courses given a constant $c$ will also be a constant. Therefore, the total number of possible assignments is $\binom{m}{c} \cdot \OO(1)$ which is $m^{\OO(1)}$. 
\end{proof}

\begin{proof}[{\bf Proof of Theorem~\ref{thm:2val}}]
    Here, we prove the correctness. In particular, we prove the following result: 

\begin{restatable}{lemma}{lem:twotype}
    ${\cal I}$ is a yes-instance of \mefe if and only if ${\cal J}$ has a women-saturating stable matching. 
\label{lem:twotype}\end{restatable}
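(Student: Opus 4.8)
The plan is to prove Lemma~\ref{lem:twotype} by establishing both directions of the equivalence between an \mefem solution for ${\cal I}$ and a women-saturating stable matching for the constructed \textsc{Stable Matching} instance ${\cal J}$. Recall that in the construction, each course $x_i$ is split into $c_i$ women, partitioned into $W_i$ (the $a_i$ women who only accept $q_i$-valued TAs) and $W_i'$ (the $a_i'$ women who accept any positively-valued TA of $x_i$); women rank men by grade in the corresponding course, and men rank women primarily by utility for the course, then preferring $W_i$ over $W_i'$, then arbitrarily (with the same arbitrary tie-break used by all men). The values $a_i, a_i'$ were chosen so that any feasible satisfying assignment of $x_i$ must use at least $a_i$ many $q_i$-valued TAs.

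\textbf{Forward direction.} Given a solution $\mu$ to ${\cal I}$, I would build $\eta$ course by course: for each $x_i$, sort $\mu^{-1}(x_i)$ by decreasing grade in $x_i$ (well-defined since no two TAs share a grade in a course), and greedily match each TA, in that order, to his most-preferred still-unmatched woman among $W_i \cup W_i'$. Since $|\mu^{-1}(x_i)| = c_i = |W_i| + |W_i'|$, this saturates all women corresponding to $x_i$; doing this for all courses saturates $W$, and since $\mu$ satisfies each course's threshold, at least $a_i$ of the assigned TAs are $q_i$-valued, so every woman in $W_i$ receives an acceptable man. The core claim is that $\eta$ has no blocking pair. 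Take a would-be blocking pair $(m,w)$ with $w$ a woman of course $x_i$; since $W$ is saturated, $w$ is matched. If $m$ prefers $w$ to $\eta(m)$: either $\eta(m)$ is a woman of the same course $x_i$ on the same side as $w$, in which case the greedy-by-grade construction forces $g_{\eta(w)}(x_i) > g_m(x_i)$ so $w$ prefers her partner; or $\eta(m)$ is of a different course $x_j$ (or the other side), in which case by the men's preference order TA $m$ values $x_i$ strictly more than $x_j$, i.e.\ $u_m(x_i) > u_m(x_j) = u_m(\mu(m))$, and if additionally $w$ prefers $m$ to $\eta(w)$ then $g_m(x_i) > g_{\eta(w)}(x_i) = g_{\eta(w)}(\mu(\eta(w)))$, so $m$ merit-envies $\eta(w)$ in $\mu$, contradicting that $\mu$ is a solution. (The case where $m$ is unmatched in $\eta$ — hence unmatched in $\mu$ — is handled identically, yielding envy from the unmatched $m$.)

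\textbf{Reverse direction.} Given a women-saturating stable matching $\eta$ for ${\cal J}$, define $\mu(m) = x_i$ whenever $\eta(m) \in W_i \cup W_i'$. Saturation of $W$ plus the $c_i$-women-per-course construction makes $\mu$ feasible, and the acceptability constraints guarantee no TA is placed on a zero-valued course and each $x_i$ receives exactly $a_i$ many $q_i$-valued TAs and $a_i'$ many $q_i'$-valued TAs, so by Equation~\ref{eq:1}, ${\sf AvgUtil}(x_i) \geq k$. For merit-based envy-freeness, suppose TA $t$ envies $t'$: then $g_t(\mu(t')) > g_{t'}(\mu(t'))$ and $u_t(\mu(t')) > u_t(\mu(t))$. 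Writing $x_i = \mu(t')$, the man $t$ is acceptable to $\eta(t')$ (since $t$ positively values $x_i$ — indeed $u_t(x_i) > u_t(\mu(t)) \geq 0$ — and has grade in $x_i$ exceeding $t'$'s, and if $\eta(t') \in W_i$ one checks $t$ is $q_i$-valued there as well, which follows because $t'$ is and $t$ has a grade comparison forcing the same side), and $\eta(t')$ prefers $t$ to $t'$ by the grade ordering, while $t$ prefers $\eta(t')$ to $\eta(t)$ by the utility ordering of men; hence $(t, \eta(t'))$ blocks $\eta$, a contradiction.

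\textbf{Main obstacle.} The delicate point is the side-membership bookkeeping in the reverse direction: when $t$ merit-envies $t'$ and $\eta(t')$ happens to lie in $W_i$ (the restricted women accepting only $q_i$-valued TAs), I must confirm $t$ is actually acceptable to $\eta(t')$, i.e.\ $v_{x_i}(t) = q_i$. This needs the observation that $\eta$, being women-saturating and stable, must place $q_i$-valued TAs on all of $W_i$ (else a $q_i$-valued TA assigned elsewhere-or-nowhere together with a $W_i$-woman matched to a non-acceptable man, or the way the greedy forces the higher-grade TAs onto $W_i$ first, yields a block) — so combined with the fact that $t$ outgrades $t'$ in $x_i$, $t$ too must be $q_i$-valued for $x_i$. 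Pinning this down cleanly, together with carefully checking that the arbitrary tie-breaking among $W_i$-women (resp.\ $W_i'$-women) is identical across all men so that no spurious blocking pair arises, is where the real work lies; the rest is bookkeeping analogous to the proof of Theorem~\ref{thm:FPTn}.
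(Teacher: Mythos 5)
Your overall route is the same as the paper's: in the forward direction you build $\eta$ by sorting $\mu^{-1}(x_i)$ in decreasing order of grade and greedily assigning each TA his most-preferred free woman of $x_i$, then rule out blocking pairs by converting them into merit-based envy in $\mu$; in the reverse direction you convert merit-based envy into a blocking pair $(t,\eta(t'))$. Your forward direction matches the paper's argument essentially step for step.

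The problem is in the reverse direction, at exactly the spot you flag as the ``main obstacle.'' You claim that because $t$ out-grades $t'$ in $x_i$, TA $t$ must also be $q_i$-valued by $x_i$ whenever $t'$ is, so that $t$ is acceptable to $\eta(t')\in W_i$. That inference is invalid: in this model a course's valuation of a TA and the TA's grade in that course are independent data --- the introduction devotes a paragraph to stressing that the highest grader need not receive the highest course-utility, and Theorem~\ref{thm:2val} imposes no hypothesis linking the two. So $t$ can have $v_{x_i}(t)=q_i'$ while out-grading a $q_i$-valued $t'$; then $t\notin P_{\eta(t')}$ when $\eta(t')\in W_i$, the pair $(t,\eta(t'))$ is not even acceptable, and one cannot in general substitute a woman of $W_i'$ instead, since her partner may out-grade $t$. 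Your parenthetical fallback --- that stability forces all of $W_i$ to be filled by $q_i$-valued TAs --- is true (indeed immediate from acceptability) but beside the point: it constrains the matched TAs, not the envious one. To be fair, the paper's own write-up of Lemma~\ref{lem:twotype} simply asserts ``woman $\eta(t')$ prefers man $t$ more than $t'$'' without verifying that $t$ is in her preference list at all, so you have correctly located a genuine subtlety that the paper glosses over; but the argument you offer to close it does not follow, and as written your reverse direction (like the paper's) is incomplete at this step.
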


\begin{proof}
    In the forward direction, let $\mu$ be a solution to ${\cal I}$. We construct a matching $\eta$ for the instance ${\cal J}$ as follows. Consider the course $x_i\in X$. Let ${\sf sort}(\mu^{-1}(x_i))$ contains the TAs in $\mu^{-1}(x_i)$ in the decreasing order of their grades in $x_i$. Recall that no two TAs have the same grade in any course. Now, we consider men in $\mu^{-1}(x_i)$ in the order they appear in ${\sf sort}(\mu^{-1}(x_i))$ and they are matched to most preferred unmatched woman in $W_i\cup W'_i$. Since we always match an unmatched man to an unmatched woman, it is a matching. Let this matching be $\eta$. Next, we argue that $\eta$ is a stable matching for $\cal{J}$. Consider an unmatched pair of man and woman, say $(m,w)$. 
    Let $w$ be a woman corresponding to the course $x_i$. Due to our construction, corresponding to every course $x_i$, we have $c_i$ women. Since $\mu$ is a solution to ${\cal I}$ and every man in $\mu^{-1}(x_i)$ is matched to a woman corresponding to the course $x_i$, every woman is matched in $\eta$. If $m$ is unmatched in $\eta$, then TA $m$ is unmatched in $\mu$. 
    If $w$ prefers $m$ over $\eta(w)$, then, due to the construction of the instance ${\cal I}$, $g_{\eta(w)}(x_i)<g_{m}(x_i)$. Since $m$ is unmatched in $\mu$, $m$ envies $\eta(w)$, a contradiction. Consider the case when $m$ is matched. Then, 
    suppose that $m$ prefers $w$ over $\eta(m)$. Suppose that $w, \eta(m) \in W_i$. Due to the construction of $\eta$, $\eta(w)$ is before $m$ in ${\sf sort}(\mu^{-1}(x_i))$. Thus, $g_{\eta(w)}(x_i)>g_{m}(x_i)$. Thus, $w$ prefers $\eta(w)$ more than $m$. Hence, $mw$ is not a blocking pair. Similarly, we can argue when $w, \eta(m) \in W'_i$. Suppose that $\eta(m)$ is a woman corresponding to $x_j$. Then, $\eta(m)\in W_j \cup W'_j$. Since $m$ prefers $w$ over $\eta(m)$, TA $m$ prefers course $x_i$ over $x_j$. Then, if $w$ prefers $m$ over $\eta(w)$, $g_m(x_i)>g_{\eta(w)}(x_i)$. Hence, TA $m$ envies TA $\eta(w)$, a contradiction to the fact that $\mu$ is a solution to ${\cal I}$.

    In the reverse direction, let $\eta$ be a stable matching in ${\cal J}$ that saturates all women. We construct a matching $\mu$ for ${\cal I}$ as follows. If a woman $w$ corresponding to a course $x_i$ is matched to a man $m$, then for TA $m$, $\mu(m)=x_i$. Since $\eta$ saturates all women, and corresponding to every course $x_i$, we have $c_i$ women, $\mu$ is a feasible matching. Furthermore, due to the construction of preference lists no course/TA is matched to zero-valued TA/course. Next, we argue that for every course $x_i$, ${\sf AvgUtil}(x_i)\geq k$. Recall that due to the construction, $a_i$ many $q_i$ valued TAs are matched to $x_i$ and $a'_i$ many $q'_i$ valued TAs are matched to $x_i$. Thus, due to Equation~\ref{eq:1}, ${\sf AvgUtil}(x_i)\geq k$. Next, we argue that there is no envy between any pair of  TAs. Towards the contradiction, suppose that $t$ envies $t'$. Then, $g_t(\mu(t'))>g_{t'}(\mu(t'))$ and $u_t(\mu(t'))>u_t(\mu(t))$. Recall that we have a man $t$ corresponding to every TA $t$. Then, due to the construction, man $t$ prefers woman $\eta(t')$ more than $\eta(t)$ and woman $\eta(t')$ prefers man $t$ more than $t'$. Thus, $(t,\eta(t'))$ is a blocking pair, a contradiction.
\end{proof}
This completes the proof.
\end{proof}

\section{Missing proofs of Parameterized Algorithms}

\FPTm*
\begin{proof}
Let ${\cal I}$ be a given instance of \mefe.
Note that every TA has $n+1$ options for the allocation, as it can also remain unassigned. Thus, the total number of possible matching is $(n+1)^m$. For every possible matching, we can check in polynomial time if it is a solution to ${\cal I}$. Since, for a yes-instance, the number of courses is at least the number of TAs, this algorithm runs in $\OO(m^m(n+m)^{\OO(1)})$ time. 

\end{proof}

\FPTn*

\begin{proof}
Let ${\cal I}$ be an instance of \mefe that satisfies the constraints in the theorem statement. Let $v_i^X\colon T \rightarrow \{q_{1i},\ldots,q_{r_ii}\}$, where $r_i$ is a constant. Let $a_{ji}$ be the number of TAs of value $q_{ji}$ assigned to the course $x_i$ in a solution, where $i\in [n], j\in [r_i]$. We guess the values of $a_{ji}$, for all  $i\in [n], j\in [r_i]$, that satisfies the following two constraints.

\begin{align}
\sum^{r_i}_{j=1}a_{ji}q_{ji} \geq kc_i \label{eq:5} \\
\sum^{r_i}_{j=1}a_{ji} = c_i \label{eq:6}
\end{align}

Note that we have $(c_i+1)^{r_i}$ choices  for vector $(a_{1i}, \ldots, a_{r_ii})$ for every course $x_i$. A vector is called \emph{valid}, if it satisfies Equation~\ref{eq:5} and \ref{eq:6}. Since the capacity of each course and $r_i$ is constant, we have constant choices for every course.  Hence, we have $c^n$ total choices, where $c$ is a constant.

For each combination of $n$ valid vectors (one for each course) $(a_{1i}, \ldots a_{r_ii},\ldots,a_{1n}, \ldots a_{r_nn})$, we create an instance of the {\sc Stable Matching} problem ${\cal J}$ as follows. The set of men is $M=T$, i.e., corresponding to every TA in $T$, we have a man in $M$. Corresponding to every course $x_i\in X$, we have a set of women $W_{ji}$ that contains $a_{ji}$ many women. If any of these sets are empty we ignore those sets. Let $W_i$ be set of all women corresponding to course $x_i$. Due to the validity of the vector, corresponding to every course $x_i$, we have $c_i$ women. 
Next, we define the preference list of every woman $w$, say $P_w$, as follows. For every $w\in W_{ji}$, man $t$ is in $P_w$ if the valuation of course $x_i$ for the TA $t$ is $q_{ji}$. Recall that in an instance the {\sc Stable Matching problem} a man $m$ is in the preference list of a woman $w$ if and only if $w$ is in the preference list of $m$. Thus, woman $w$ is in $P_m$, the preference list of $m$, if and only if $m$ is in $P_w$.   
Next, we define the ordering of men in the preference list of every woman, which is based on the grades. Let $w$ be a women corresponding to the course $x_i$. Since no two TAs have the same grade for a course, for men $t,t'$ in $P_w$, $w$ prefers $t$ more  than $t'$ if and only if $g_t(x_i)>g_{t'}(x_i)$. Next, we define the ordering of women in the preference list of every man. 
Consider a man $t$. If $u_t(x_i) > u_t(x_j)$, where $x_i,x_j \in X$, then man $t$ prefers women in $w\in W_i$ more than woman in $\hat{w} \in W_j$, where $w, \hat{w} \in P_t$. If $w,w' \in P_t \cap W_i$, then we first note that there exists unique $j\in [r_i]$ such that $w,w' \in W_{ji}$, due to the construction. In this case, man $t$ orders them arbitrarily. 
This completes the construction of ${\cal J}$. If any of the constructed instance of {\sc Stable Matching} returns a women-saturating stable matching, then we return ``yes'', otherwise ``no''.

Next, we prove the correctness, which is similar to the proof of Lemma~\ref{lem:twotype}. In particular, we prove the following result:

\begin{restatable}{lemma}{lem:correctness-FPT}
 ${\cal I}$ is a yes-instance of \mefe if and only there exists an instance ${\cal J}$ that has a women-saturating stable matching, where ${\cal J}$ is one of the instances constructed above.
\label{lem:correctness-FPT}\end{restatable}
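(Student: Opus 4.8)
The plan is to prove Lemma~\ref{lem:correctness-FPT} in the same two-directional style as Lemma~\ref{lem:twotype}, but accounting for the fact that here we have a constant number $r_i$ of distinct positive valuations per course rather than just two, and we enumerate over \emph{all} valid combinations of vectors $(a_{1i},\ldots,a_{r_ii})$ over all courses. The key point is that the enumeration is over a polynomial-in-input, $\fpt(n)$-bounded set (at most $c^n$ choices where $c$ is a constant), so it suffices to show: (a) if ${\cal I}$ has an \mefem matching $\mu$, then \emph{some} valid combination gives rise to an instance ${\cal J}$ with a women-saturating stable matching; and (b) if some constructed ${\cal J}$ has a women-saturating stable matching $\eta$, then ${\cal I}$ is a yes-instance.

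For the forward direction, given a solution $\mu$ to ${\cal I}$, for each course $x_i$ let $a_{ji}$ be the actual number of TAs assigned to $x_i$ by $\mu$ whose valuation $v_i(t)=q_{ji}$. Feasibility of $\mu$ gives Equation~\ref{eq:6} ($\sum_j a_{ji}=c_i$), and the satisfaction condition ${\sf AvgUtil}(x_i)\geq k$ gives Equation~\ref{eq:5} ($\sum_j a_{ji}q_{ji}\geq kc_i$); hence this combination of vectors is valid and is one of those enumerated. For the corresponding instance ${\cal J}$, I construct $\eta$ exactly as in Lemma~\ref{lem:twotype}: for each course $x_i$, list the TAs in $\mu^{-1}(x_i)$ in decreasing order of grade in $x_i$ (well-defined since no two TAs share a grade in a course), and greedily assign each such man in that order to his most preferred unmatched woman among the women $W_i$ of course $x_i$. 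Since $\mu$ puts exactly $a_{ji}$ value-$q_{ji}$ TAs into $x_i$ and $W_{ji}$ has exactly $a_{ji}$ women whose preference list consists precisely of the value-$q_{ji}$ men acceptable to $x_i$, every woman in $W_i$ gets matched, so $\eta$ is women-saturating. Stability follows the same case analysis: an unmatched blocking pair $(m,w)$ with $w$ a woman of course $x_i$ would force either (i) $m$ unmatched in $\mu$ while $w$ prefers $m$ to $\eta(w)$, giving $g_m(x_i)>g_{\eta(w)}(x_i)$ and hence $m$ envies $\eta(w)$ in $\mu$; or (ii) $m$ matched to a woman of some $x_{i'}$ with $m$ preferring $x_i$ to $x_{i'}$ (so $u_m(x_i)>u_m(x_{i'})$) and $w$ preferring $m$ to $\eta(w)$, again giving an envying pair in $\mu$ — except the sub-case where $\eta(m)\in W_i$ too, which is excluded because then $\eta(w)$ appears before $m$ in the grade-sorted order so $w$ prefers $\eta(w)$. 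All contradict $\mu$ being \mefem.

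For the reverse direction, suppose for some enumerated valid combination the instance ${\cal J}$ has a women-saturating stable matching $\eta$. Define $\mu$ by: if woman $w\in W_i$ is matched to man $m$, set $\mu(m)=x_i$. Since $\eta$ saturates all women and there are exactly $c_i$ women per course $x_i$, $\mu$ is feasible; since men only appear in preference lists of women of courses they positively value (and symmetrically), no TA is assigned a zero-valued course. Because exactly $a_{ji}$ TAs of value $q_{ji}$ get assigned to $x_i$ and the combination is valid, ${\sf AvgUtil}(x_i)=\frac{1}{c_i}\sum_j a_{ji}q_{ji}\geq k$ by Equation~\ref{eq:5}. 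Finally, if some TA $t$ envied $t'$ in $\mu$, i.e.\ $g_t(\mu(t'))>g_{t'}(\mu(t'))$ and $u_t(\mu(t'))>u_t(\mu(t))$, then man $t$ would prefer woman $\eta(t')$ to $\eta(t)$ (by the utility-based man ordering) and woman $\eta(t')$ would prefer $t$ to $t'$ (by the grade-based woman ordering), so $(t,\eta(t'))$ blocks $\eta$ — a contradiction. Hence $\mu$ is \mefem. Combining with the $\fpt(n)$ bound $\OO(c^n(n+m)^{\OO(1)})$ on the number of instances ${\cal J}$, each solvable in polynomial time via Gale--Shapley, completes the proof of Theorem~\ref{thm:FPTn}.

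The main obstacle I anticipate is purely bookkeeping: carefully verifying that, in the forward direction, the greedy grade-sorted assignment of each $\mu^{-1}(x_i)$ into $W_i$ actually saturates $W_i$ and respects the acceptability structure (each woman in $W_{ji}$ only accepts value-$q_{ji}$ men), and that the arbitrary tie-breaking among women in the same $W_{ji}$ in a man's list is chosen consistently (the same order for all men) so that no spurious blocking pair arises between a man and a same-course woman he was not matched to. This is exactly the subtlety flagged in the commented-out remarks of Theorem~\ref{thm:2val}, and the fix is the same: fix one global ordering of the women of each course used by every man's preference list.
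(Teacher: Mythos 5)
Your proposal matches the paper's proof essentially step for step: the forward direction extracts the actual counts $a_{ji}$ from a solution $\mu$, notes they form a valid (hence enumerated) vector, and builds a women-saturating stable matching by grade-sorted greedy assignment into $W_i$; the reverse direction converts a women-saturating stable matching back into a feasible, satisfying, merit-based envy-free matching via the same blocking-pair contradiction. Your closing remark about fixing one global ordering of same-$W_{ji}$ women across all men's lists is a reasonable precaution the paper itself flags only informally, but it does not change the argument.
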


\begin{proof}
    In the forward direction, let $\mu$ be a solution to ${\cal I}$. Let $a^\star_{ji}$ be the number of TAs matched to $x_i$ in $\mu$ who are valued $q_{ji}$ by $x_i$. Clearly, $(a^\star_{1i},\ldots, a^\star_{r_ii})$, for each $i\in [n], j\in [r_i]$, is a valid vector. Since we try all possible options for $a_{ji}$, we also tried $a^\star_{ji}$.   Consider the instance ${\cal J}$ which is constructed for the values $a^\star_{ji}$, where $i\in [n], j\in [r_i]$.   We construct a matching $\eta$ for the instance ${\cal J}$ as follows. Consider the course $x_i\in X$. Let ${\sf sort}(\mu^{-1}(x_i))$ contains the TAs in $\mu^{-1}(x_i)$ in the decreasing order of their grades in $x_i$. Recall that no two TAs have the same grade in any course. Now, we consider men in $\mu^{-1}(x_i)$ in the order they appear in ${\sf sort}(\mu^{-1}(x_i))$ and they are matched to their most preferred unmatched woman in $W_i$. Since we always match an unmatched man to an unmatched woman, it is a matching.  Let this matching be $\eta$. 
    Next, we argue that $\eta$ is a stable matching for $\cal{J}$ that saturates all women. Consider an unmatched pair of man and woman, say $(m,w)$. 
    Let $w$ be a woman corresponding to the course $x_i$. Due to our construction, corresponding to every course $x_i$, we have $c_i$ women. Since $\mu$ is a solution to ${\cal I}$ and every man in $\mu^{-1}(x_i)$ is matched to a woman corresponding to the course $x_i$, every woman is matched in $\eta$. If $m$ is unmatched in $\eta$, then TA $m$ is unmatched in $\mu$. 
    If $w$ prefers $m$ over $\eta(w)$, then, due to the construction of the instance ${\cal I}$, $g_{\eta(w)}(x_i)<g_{m}(x_i)$. Since $m$ is unmatched in $\mu$, $m$ envies $\eta(w)$, a contradiction. Consider the case when $m$ is matched. Then, 
    suppose that $m$ prefers $w$ over $\eta(m)$. Suppose that $w, \eta(m) \in W_i \cap P_m$. Due to the construction of $\eta$, $\eta(w)$ is before $m$ in ${\sf sort}(\mu^{-1}(x_i))$. Thus, $g_{\eta(w)}(x_i)>g_{m}(x_i)$. Thus, $w$ prefers $\eta(w)$ more than $m$. Hence, $(m,w)$ is not a blocking pair. Then, $\eta(m)\in W_j \cap P_w$. Since $m$ prefers $w$ over $\eta(m)$, TA $m$ prefers course $x_i$ over $x_j$. Then, if $w$ prefers $m$ over $\eta(w)$, $g_m(x_i)>g_{\eta(w)}(x_i)$. Hence, TA $m$ envies TA $\eta(w)$, a contradiction to the fact that $\mu$ is a solution to ${\cal I}$.

    In the reverse direction, let  $\eta$ be a stable matching in ${\cal J}$ that saturates all women. We construct a matching $\mu$ for ${\cal I}$ as follows. If a woman $w$ corresponding to a course $x_i$ is matched to a man $m$, then for TA $m$, $\mu(m)=x_i$. Since $\eta$ saturates all women, and corresponding to every course $x_i$, we have $c_i$ women, $\mu$ is a feasible matching. Furthermore, due to the construction of preference lists no course/TA is matched to zero-valued TA/course, and due to the construction and validity of the vector, the satisfaction criteria of all the courses are met. Next, we argue that there is no envy between any pair of  TAs. Towards the contradiction, suppose that $t$ envies $t'$. Then, $g_t(\mu(t'))>g_{t'}(\mu(t'))$ and $u_t(\mu(t'))>u_t(\mu(t))$. Recall that we have a man $t$ corresponding to every TA $t$. Then, due to the construction, man $t$ prefers woman $\eta(t')$ more than $\eta(t)$ and woman $\eta(t')$ prefers man $t$ more than $t'$. Thus, $(t,\eta(t'))$ is a blocking pair, a contradiction.
\end{proof}

This completes the proof of theorem.
\end{proof} 

\thmfptapxn*

\begin{proof}
Let ${\cal I}$ 
be an instance of \mefe that satisfies the constraint in the theorem statement. Let $V_i=\max_{t\in T}v_i(t)$, i.e., maximum value a course assigns to a TA. Our basic idea is that for appropriately chosen $\epsilon'$, we guess the number of TAs assigned to course $x_i$ that have valuations in the range $[(1+\epsilon')^{j-1},(1+\epsilon')^j)$, for every course $x_i\in X$. 
Then, we create the number of copies of a course accordingly and reduce to the {\sc Stable Matching} problem  as in Theorem~\ref{thm:FPTn}. Note that for every course $x_i$, the number of guesses is $\OO(c_i^{\log_{1+\epsilon'}V_i})$. Thus, the total number of guesses for all the courses is at most $\OO({\sf cap}^{n\log_{1+\epsilon'}{\sf max_{val}}})$. Since the {\sc Stable Matching} problem can be solved in the polynomial time, the running time follows. The approximation guarantee is due to the fact that in a solution instead of assigning a TA of value $(1+\epsilon')^j$, the algorithm may assign a TA of valuation $(1+\epsilon')^{j-1}$, which incurs a loss of $\frac{1}{1+\epsilon'}$. We choose $\epsilon'=\frac{1}{1-\epsilon}-1$\footnote{In the short version, we wrote $\epsilon'=\frac{c}{1-\epsilon}-1$, however, this value also works well.}.

Next, we present the algorithm formally and argue the approximation factor. 
    
    For each course $x_i\in X$, we guess the number of TAs, say  $c_i^j$, that are assigned to a course $x_i$, in a solution, and $x_i$ has valuation at least $(1+\epsilon')^{j-1}$ and at most $(1+\epsilon')^j-1$ for these TAs, where $j\in[\log_{1+\epsilon'}V_i]$. A guess is call \emph{valid} if $c_i^1+\ldots+c_i^{\log_{1+\epsilon'}V_i}=c_i$. Note that we have at most $\OO({\sf cap}^{n\log_{1+\epsilon'}{\sf max_{val}}})$ valid guesses. Now, for every valid guess, $G=\{c_i^j \colon i\in [n], j\in [\log_{1+\epsilon'}V_i]\}$, we create an instance ${\cal I}_G$ of the {\sc Stable Matching} problem as follows. The set of men $M=T$. Next, we define the set of women. Corresponding to every course $x_i\in X$, we have a set of women $W_{ji}$ that contains $c_i^j$ many women. If any of these sets are empty, we ignore those sets. Let $W_i$ be set of all women corresponding to course $x_i$. 
    
    Next, we define the set of men who are in the preference list $P_w$ of a woman $w$. If the course $x_i$ has a valuation in the range  $[(1+\epsilon')^{j-1},(1+\epsilon')^j)$ for the TA $t$, then the man $t$ is in the preference list of woman $w\in W_{ji}$, otherwise not. The ordering of men and women in each other's preference list is defined in the same way as we did in Theorem~\ref{thm:FPTn}. We write here again for completeness. 
    The ordering of men in the preference list of every woman is based on the grades. Since no two TAs have the same grade for a course, for men $t,t'\in P_w$, $t$ is more preferred than $t'$ if and only if $g_t(w)>g_{t'}(w)$. Next, we define the ordering of women in the preference list of every man. Let $P_t$ be the set of women in the preference list of the man $t$. Consider a man $t$.  If $w,w' \in P_t \cap W_i$, then we first note that there exists unique $j\in [\log_{1+\epsilon'}V_i]$ such that $w,w' \in W_{ji}$, due to the construction. In this case, man $t$ orders them arbitrarily. 
 
This completes the construction of ${\cal I}_G$. For every ${\cal I}_G$, we find a stable matching $\mu_G$. If $\mu_G$ does not saturate all women, then we return ``no''. Otherwise, corresponding to every $\mu_G$, we construct a matching $\eta_G$ as follows: if a man $t$ is matched to a woman $w$ corresponding to the course $x_i$, then $\eta_G(t)=x_i$. If we obtain a matching $\eta_G$ that is feasible, merit-based envy-free, and for every course $x$, ${\sf Avg Util}(x)\geq (1-\epsilon)k$, we return it; otherwise we return ``no''. Next, we prove the correctness.  

\begin{lemma}\label{clm1:fptapx}
    Suppose that ${\cal I}$ is a yes-instance of \mefe. Then, there exists a guess $G$ such that ${\cal I}_G$ ha a women-saturating stable matching $\mu_G$. Let $\eta_G$ be the matching obtained as described above. Then, $\eta_G$ is a   feasible, merit-based envy-free, and for every course $x$, ${\sf Avg Util}(x)\geq (1-\epsilon)k$. 
\end{lemma}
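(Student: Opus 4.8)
The plan is to exhibit one good guess --- the one read off an actual solution --- and then to reuse, almost verbatim, the equivalence used in the proof of Theorem~\ref{thm:FPTn} between women-saturating stable matchings of the constructed stable-matching instance and solutions of the original instance. So assume ${\cal I}$ is a yes-instance and fix a solution $\mu$. For each course $x_i$ and each $j\in[\log_{1+\epsilon'}V_i]$, let $c_i^{j,\star}$ be the number of TAs $t\in\mu^{-1}(x_i)$ with $v_i(t)\in[(1+\epsilon')^{j-1},(1+\epsilon')^{j})$. Since valuations are positive integers and every matched TA has positive valuation for the course it is assigned to (feasibility), each $t\in\mu^{-1}(x_i)$ lies in exactly one bucket, so $\sum_j c_i^{j,\star}=c_i$ and $G^\star=\{c_i^{j,\star}: i\in[n],\, j\}$ is a valid guess; in particular it is one of the guesses enumerated by the algorithm.

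First I would argue that ${\cal I}_{G^\star}$ admits a women-saturating stable matching and that the matching $\eta_{G^\star}$ derived from \emph{any} such stable matching is feasible and merit-based envy-free. For the existence of a women-saturating stable matching, I would run the construction from the forward direction of Lemma~\ref{lem:correctness-FPT}: for each course $x_i$, sort $\mu^{-1}(x_i)$ by decreasing grade in $x_i$ and, in that order, assign its members to their most preferred still-unmatched women in $W_i$. By the definition of $G^\star$ exactly $c_i^{j,\star}$ of these TAs have valuation in the $j$-th bucket and $|W_{ji}|=c_i^{j,\star}$, so the resulting matching saturates all women; the same case analysis on a putative blocking pair $(m,w)$ as in Lemma~\ref{lem:correctness-FPT} --- using that $\mu$ has no merit-based envy and that grades (resp.\ TA-utilities) are distinct within a course (resp.\ for a TA) --- shows it is stable. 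Since ${\cal I}_{G^\star}$ is an instance of stable matching with incomplete lists, every stable matching matches the same set of women (rural hospitals property), so the stable matching actually computed by the algorithm is also women-saturating. Feasibility of $\eta_{G^\star}$ is then immediate: exactly $c_i$ TAs are mapped to $x_i$, and the preference lists of ${\cal I}_{G^\star}$ only pair positively-valued TA/course pairs. Merit-based envy-freeness is exactly the reverse direction of Lemma~\ref{lem:correctness-FPT}: an envious pair $t,t'$ in $\eta_{G^\star}$ translates into a blocking pair $(t,\mu_{G^\star}(t'))$ of $\mu_{G^\star}$. None of this part uses any property of $G^\star$ beyond validity.

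The only point that genuinely exploits $G^\star$ being read off a solution is the approximate-satisfaction bound. By construction, each TA that $\eta_{G^\star}$ assigns to $x_i$ in bucket $j$ has $v_i$-value at least $(1+\epsilon')^{j-1}$, whence ${\sf AvgUtil}(x_i)\ge\frac1{c_i}\sum_j c_i^{j,\star}(1+\epsilon')^{j-1}$ under $\eta_{G^\star}$. On the other hand, each TA that $\mu$ assigns to $x_i$ in bucket $j$ has $v_i$-value strictly less than $(1+\epsilon')^{j}$, so $kc_i\le\sum_{t\in\mu^{-1}(x_i)}v_i(t)<\sum_j c_i^{j,\star}(1+\epsilon')^{j}=(1+\epsilon')\sum_j c_i^{j,\star}(1+\epsilon')^{j-1}$. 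Dividing by $c_i$ and combining, ${\sf AvgUtil}(x_i)>k/(1+\epsilon')$ under $\eta_{G^\star}$; and since $\epsilon'=\frac1{1-\epsilon}-1$ gives $1+\epsilon'=\frac1{1-\epsilon}$, this equals $(1-\epsilon)k$, so ${\sf AvgUtil}(x_i)\ge(1-\epsilon)k$ for every course $x_i$. Hence $\eta_{G^\star}$ passes all three checks and the algorithm outputs it, which is what the lemma asserts.

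I expect the main difficulty to be bookkeeping rather than a new idea: one must set up the preference lists of ${\cal I}_G$ so that the bucketed women $W_{ji}$ accept precisely the TAs whose valuation for $x_i$ lies in the $j$-th interval and rank them by grade, while every TA ranks all women of a higher-$u_t$ course above all women of a lower-$u_t$ course, so that the blocking-pair and envy arguments imported from Lemma~\ref{lem:correctness-FPT} apply unchanged; and one must be careful that the appeal to ``every stable matching saturates the same women'' is legitimate, since the algorithm does not get to choose which stable matching the subroutine returns.
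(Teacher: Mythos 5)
Your proposal is correct and follows essentially the same route as the paper's proof: read the bucket counts off a solution to obtain a valid guess, build a women-saturating stable matching of ${\cal I}_{G}$ by sorting each $\mu^{-1}(x_i)$ by grade, import the blocking-pair/envy equivalence from Lemma~\ref{lem:correctness-FPT} for feasibility and merit-based envy-freeness, and derive the $(1-\epsilon)$ guarantee from the fact that each bucket loses at most a factor $1+\epsilon'=\frac{1}{1-\epsilon}$. Your explicit appeal to the rural-hospitals property (all stable matchings of ${\cal I}_G$ saturate the same women, so the particular stable matching the subroutine returns is also women-saturating) is a point the paper leaves implicit, and your inclusion of the $\frac{1}{c_i}$ normalization in the averaging step is cleaner than the paper's write-up.
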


\begin{proof}
 Except for the argument for satisfaction, the rest of the arguments are similar to the one in the proof of Lemma~\ref{lem:correctness-FPT}.  Let $\eta$ be a solution to ${\cal I}$. Let $\tilde{c}_{i}^j$ be the number of TAs matched to $x_i$ in $\eta$ who are valued in the range $[(1+\epsilon')^{j-1},(1+\epsilon')^j)$, where $j\in [\log_{1+\epsilon'}V_i]$, by $x_i$. Since we tried all possible options for $c_i^j$, we also tried $\tilde{c}_i^j$. Consider $G=\{\tilde{c}_i^j \colon i\in [n], j\in [\log_{1+\epsilon'}v_i]\}$.  Consider the instance ${\cal I}_G$.  We construct a matching $\mu_G$ for the instance ${\cal I}_G$ as follows. Consider the course $x_i\in X$. Let ${\sf sort}(\eta^{-1}(x_i))$ contains the TAs in $\eta^{-1}(x_i)$ in the decreasing order of their grades in $x_i$. Recall that no two TAs have the same grade in any course. Now, we consider men in $\eta^{-1}(x_i)$ in the order they appear in ${\sf sort}(\eta^{-1}(x_i))$ and they are matched to their most preferred unmatched woman in $W_i$. Since we always match an unmatched man to an unmatched woman, it is a matching.  Let this matching be $\mu_G$. 
    Next, we argue that $\mu_G$ is a stable matching for ${\cal I}_G$ that saturates all women. Consider an unmatched pair of man and woman, say $(m,w)$. 
    Let $w$ be a woman corresponding to the course $x_i$. Due to our construction, corresponding to every course $x_i$, we have $c_i$ women. Since $\eta$ is a solution to ${\cal I}$ and every man in $\mu^{-1}(x_i)$ is matched to a woman corresponding to the course $x_i$, every woman is matched in $\mu_G$. If $m$ is unmatched in $\mu_G$, then TA $m$ is unmatched in $\eta$.  
    If $w$ prefers $m$ over $\mu_G(w)$, then, due to the construction of the instance ${\cal I}$, $g_{\mu_G(w)}(x_i)<g_{m}(x_i)$. Since $m$ is unmatched in $\mu_G$, $m$ envies $\mu_G(w)$, a contradiction. Consider the case when $m$ is matched. Then, 
    suppose that $m$ prefers $w$ over $\mu_G(m)$. Suppose that $w, \mu_G(m) \in W_i \cap P_m$. Due to the construction of $\mu_G$, $\mu_G(w)$ is before $m$ in ${\sf sort}(\eta^{-1}(x_i))$. Thus, $g_{\mu_G(w)}(x_i)>g_{m}(x_i)$. Thus, $w$ prefers $\mu_G(w)$ more than $m$. Hence, $(m,w)$ is not a blocking pair. Then, $\mu_G(m)\in W_j \cap P_w$. Since $m$ prefers $w$ over $\mu_G(m)$, TA $m$ prefers course $x_i$ over $x_j$. Then, if $w$ prefers $m$ over $\mu_G(w)$, $g_m(x_i)>g_{\mu_G(w)}(x_i)$. Hence, TA $m$ envies TA $\mu_G(w)$, a contradiction to the fact that $\eta$ is a solution to ${\cal I}$.

    Let $\mu$ be a women-saturating stable matching of ${\cal I}_G$. Let $\eta_G$ be the matching constructed using $\mu$ as described above. Since $\mu$ saturates all women, and corresponding to every course $x_i$, we have $c_i$ women, $\eta_G$ is a feasible matching, because due to the construction of preference lists no course/TA is matched to zero-valued TA/course. Next, we argue that $\eta_G$ is merit-based envy-free.  Towards the contradiction, suppose that $t$ envies $t'$. Then, $g_t(\eta_G(t'))>g_{t'}(\eta_G(t'))$ and $u_t(\eta_G(t'))>u_t(\eta_G(t))$. Recall that we have a man $t$ corresponding to every TA $t$. Then, due to the construction, man $t$ prefers woman $\mu(t')$ more than $\mu(t)$ and woman $\eta(t')$ prefers man $t$ more than $t'$. Thus, $(t,\mu(t'))$ is a blocking pair, a contradiction. Next, we argue the satisfaction of every course. Due to the construction of ${\cal I}_G$ and $\eta_G$, for every $x_i \in X$, $\tilde{c}_i^j$ TAs are matched to $x_i$ whose valuations are in the range $[(1+\epsilon')^{j-1},(1+\epsilon')^j)$. Thus, for every course $x_i$, ${\sf AvgUtil(x_i)}\geq \sum_{j\in [\log_{1+\epsilon'}V_i]}\tilde{c}_i^j(1+\epsilon')^{j-1}$ with respect to matching $\eta_G$ and at most  $\sum_{j\in [\log_{1+\epsilon'}V_i]}\tilde{c}_i^j(1+\epsilon')^{j}$ with respect to $\eta$. Since ${\sf AvgUtil}(x_i)\geq k$ for every course $x_i\in X$, $k\leq \sum_{j\in [\log_{1+\epsilon'}V_i]}\tilde{c}_i^j(1+\epsilon')^{j}$, for every $i\in [n]$. Thus, ${\sf AvgUtil}(x_i)\geq \frac{k}{1+\epsilon'}$ for every course $x_i\in X$ with respect to the matching $\eta'$. Since $\epsilon'=\frac{1}{1-\epsilon}-1$, we obtained that ${\sf AvgUtil}(x_i)\geq (1-\epsilon)k$ or every course $x_i\in X$ with respect to the matching $\eta'$. 
\end{proof}
This completes the proof.

\end{proof}

\section{Missing proofs of Existence Section}

\binvalTAs*
\begin{proof}

We begin by constructing a bipartite graph $G'_{{\cal I}} = (Y, T)$, where $Y$ contains $c_i$ copies corresponding to every $x_i\in X$, i.e., $Y=\{x_i^1,\ldots,x_i^{c_i}:x_i\in X\}$. If an edge $x_it\in E(G_{\cal I})$, then $x_i^jt\in E(G'_{\cal I})$, for all $j\in [c_i]$. We find a maximum sized matching $\eta$ in  $G'_{{\cal I}}$ using the algorithm given in~\cite{MaxMatching73}. Accordingly, we create a matching $\mu$ for ${G_{\cal I}}$, by matching TA $t$ assigned to $x_i^j$ for $j \in [c_i]$ in $\eta$ to $x_i$ in $\mu$.  
\begin{lemma}
    $\mu$ is a feasible matching to ${\cal I}$.
\end{lemma}

\begin{proof}

Due to the construction of $G_{\cal I}$ and $G'_{\cal I}$, $t$ is not matched to any $0$-valued course. Next, we argue that $|\mu^{-1}(x_i)|=c_i$, for all $i\in [n]$. Towards this, it is sufficient to argue that $\eta$ saturates $Y$. We will prove it using Hall's Theorem. That is, we argue that for every $Y' \subseteq Y$, $|N(Y')|\geq |Y'|$. 
   
   Note that we can express $Y'$ as $\bigcup_{x_i \in S} \{ x_i^j \mid j \in J_i \}$, for some $J_i \subseteq [c_i]$ and $S \subseteq X$. Thus, $|Y'| \leq \sum_{x_i \in S} c_i$. 
   If $x_i \in S$, then $N(x_i) \subseteq N(Y')$. Thus, $N(Y') = \bigcup_{x_i \in S} N(x_i)=N(S)$. By constraint (iii), 
   $|N(S)| \geq \sum_{x_i \in S} c_i$ as $S\subseteq X$. Thus, $N(Y')=N(S)\geq \sum_{x_i \in S} c_i \geq |Y'|$. Hence, the maximum matching $\eta$ saturates $Y$.    
\end{proof}

Note that $\mu$ might not be \mefem. We modify $\mu$ using \texttt{Exchange Matching} (Algorithm~\ref{alg:existence-binval}) to obtain an \mefem matching.
The intuition of the algorithm is as follows. The algorithm first finds whether there is any pair of TAs $t_i, t_j$ s.t. $t_i$ has merit-based envy towards $t_j$ in a feasible matching $\mu'$, where $t_j$ is assigned to course $x_j$. The algorithm then finds the TA with the lowest grade matched to $x_j$ in $\mu'$, denoted by $t$, and exchanges the allocations of $t$ and $t_i$. The algorithm repeats this procedure until there is no such pair $t_i$ and $t_j$. 

\begin{algorithm}[]
\caption{\texttt{Exchange Matching}}\label{alg:existence-binval}
\textbf{Input:} an input instance ${\cal I}$ and a feasible matching $\mu : T \rightarrow X \cup \phi$ \\
\textbf{Output:} a \mefe $\mu'$.

\begin{algorithmic}[1]
\State $\mu' = \mu$
\State $T_M = \{t \in T : \mu'(t) \neq \phi \}$ \Comment{Set of Matched TAs}
\State $T_N = T \setminus T_M$ \Comment{Set of Unmatched TAs}
\While{there exists a pair $t_i, t_j$ such that $t_i \in T_N, t_j \in T_M$, $g_{i}(\mu'(t_j)) > g_{j}(\mu'(t_j))$ and $u_i(\mu'(t_j)) = a$} \label{step:loopt} 
    \State Chose such a pair $t_i, t_j$ arbitrarily
    \State $x_j = \mu'(t_j)$
    \State Find $t \in \mu'^{-1}(x_j) \text{ such that } g_t(x_j) \leq g_{t'}(x_j) \forall t' \in \mu'^{-1}(x_j)$ 
    \State $\mu'(t) = \phi$ and $\mu'(t_i) = x_j$ 
    \State $T_N = (T_N \setminus t_i) \cup \{t\}$
    \State $T_M = T_M \setminus t \cup \{t_i\}$
\EndWhile

\Return {$\mu'$} 
\end{algorithmic}
\end{algorithm}

It can be observed that if the algorithm terminates, then the resulting matching $\mu$ is merit-based envy-free since no unmatched TA envies any matched TA at that point (else, the while loop does not terminate) and no TA matched to some positively valued course will envy any other TA.

We now prove that the algorithm {\texttt{Exchange Matching}} necessarily leads to satisfaction of courses, and subsequently, that it terminates in polynomial time.

\begin{restatable}{lemma}{lem:satisfaction}
    {\texttt{Exchange Matching}} leads to satisfaction of courses.   
\label{lem:satisfaction}\end{restatable}

\begin{proof}
We recall the definition of $k_j = g_{t}(x_j)$, where $t = r_j^{-1}(c)$, with $r_j : T \rightarrow [m]$ (rank function), $c = c_1 + \ldots + c_n$ and $g_t(x_j) = v_j(t)$, the utility of the $x_j$ for $t$. Note that if every course $x_i$ is matched to only TAs in the set $\{r_i^{-1},\ldots,r_i^{-1}(c)\}$, then ${\sf AvgUtil}(x_i)\geq k_i \geq k$. We prove this using contradiction. Let us suppose that there is a TA $t$ assigned to a course $x_j$ in the matching output by the algorithm, $\mu'$, such that $r_j(t) > c$. Since $\mu'$ is feasible, the total number of TAs assigned to all courses is equal to $c$. Therefore, by the pigeonhole principle, there exists a TA $t'$ such that $r_j(t') \leq c$, which is not assigned to any course, since $t$ is assigned to $x_j$ while having a rank larger in number than $c$. But then, the pair $t, t'$ must satisfy the condition in the while loop of algorithm, which is a contradiction, since the algorithm terminated. 
\end{proof}

\begin{restatable}{lemma}{lem:termination}
    {\texttt{Exchange Matching}} terminates in polynomial time.   
\label{lem:termination}\end{restatable}
\begin{proof}
To prove that the algorithm terminates, we define two functions. For the set of all matchings, $\mathcal{M} : T \rightarrow X \cup \{\emptyset\}$, we first define a function $\sigma_j : \mathcal{M} \rightarrow Z_{\geq 0}$ corresponding to every course $x_j \in X$. For a matching $\mu$ of the  instance ${\cal I}$, let  $t\in \mu^{-1}(x_j)$ be a TA of lowest grade among all the TAs in $\mu^{-1}(x_j)$. 
Let 
    $E_{j, \mu} = \{t' : t' \in T \setminus \mu^{-1}(x_j) \text{ and } g_{t'}(x_j) > g_{t}(x_j)\}$
  and   
    $\sigma_j(\mu) = |E_{j, \mu}|$.

Observe that $\sigma_j(\mu)$ can never be negative, since it's value is governed by the size of a set.
We also define a potential function, $\psi : \mathcal{M} \rightarrow Z_{\geq 0}$, where 
    $\psi(\mu) = \sum_{i=1}^{n} \sigma_i(M)$

We observe that if $\psi(\mu)$ is 0 (i.e, $\sigma_j(\mu) = 0, \forall x_j \in X$), there cannot be any merit-based envious pair in $\mu$, since for every course $x \in X$, there is no TA which positively values $x$ and is not matched to it, while having a higher grade than any TA matched to $x$. We also observe that $\sigma_j(M)$ is bounded by the number of TAs, $|T|$.

It is now sufficient to prove that the exchange of TAs in each iteration of the while loop in {\texttt{Exchange Matching}} always leads to a strict decrease in the value of the potential function to show that the algorithm terminates. Consider an iteration of the while loop. Suppose that in this iteration 
TAs $t_i$ and $t$ are matched to and unmatched from course $x_j$, respectively. The initial matching before the exchange is denoted by $\eta$ and the subsequent matching is denoted by $\eta'$. 
Observe that in such a case, $\sigma_i(\eta) = \sigma_i(\eta')$ for $x_i \neq x_j$, since $\eta^{-1}(x_i) = \eta'^{-1}(x_i)$. We observe that the lowest grade of a TA matched to $x_j$ strictly increases, since we remove the TA $t$, which had the lowest grade, $g_{t}(x_j)$, among TAs matched to $x_j$ in $\eta$, and replace it with a TA $t_i$, with a strictly higher grade $g_{t_i}(x_i)$. TA $t_i$ would have been a part of $E_{j, \eta}$, since $g_{t_i}(x_j) > g_{t}(x_j)$, but $t$ is not a part of $E_{j, \eta'}$, since it has a lowest grade than all TAs matched to $x_j$ in $\eta'$. And there will be no addition of any other TAs in $E_{j, \eta'}$ since the lowest grade has increased. Therefore, $\sigma_j(\eta') < \sigma_j(\eta)$ and given that $\sigma_i(\eta) = \sigma_i(\eta')$ for $x_i \neq x_j$, $\psi(\eta') < \psi(\eta)$. Hence, the potential function always decreases with the exchange in each iteration of the while loop. Since the potential function must be non-negative, {\texttt{Exchange Matching}} must terminate.
\end{proof}

Therefore, {\texttt{Exchange Matching}} gives a \mefem matching for ${\cal I}$.
\end{proof}

Our next positive existence result is due to the fact that an instance of \mefe that satisfies constraints in Theorem~\ref{thm:thmexist} can be reduced to \hrpfull (\hrp).   

An instance I of \hrpfull involves a set of residents $R=\{ r_1,...,r_{n} \}$  and a set  of hospitals $H= \{h_1,...,h_m \}$. Each hospital $h_j \in H$ has a positive integral capacity $c_j$. Each hospital $h_j \in H$ has a preference list in which it ranks $R$ in strict order. Similarly, each resident $r_i \in R$ has a preference list in which they rank $H$ in strict order. A hospital is under-subscribed in a matching $\eta$ if $| \eta(h_j) | < c_j$. A matching $\eta$ is valid if $\eta(h_j) \leq c_j$ for each $h_j \in H$ and $| \eta(r_i) | \leq 1$ for each $r_i \in R$. 
 A pair $(r_i,h_j)$ blocks a matching $\eta$ if $r_i$ is unassigned or prefers $h_j$ over $\eta(r_i)$ and $h_j$ is under-subscribed or prefers $r_i$ over at least one resident in $\eta(h_i)$.
 A matching is said to be stable if it does not contain any blocking pair. In \hrp we are required to find a matching that is valid and stable. It is known that a valid stable matching in \hrp always exists as  mentioned in Theorem 3.2 of ~\cite{David2013}. The theorem proof is for the more general case of \hrp with ties and applicable to our case where we do not have ties.

\thmexist*

\begin{proof}

We create an instance ${\cal J}$ of \hrp  as follows. The set of residents is $R=T$, i.e., corresponding to every TA in $t_i\in T$, we have a resident $r_i\in R$ and the set of hospital is $H=X$, i.e., corresponding to every course in $x_i\in X$, we have a hospital in $h_i\in H$. The capacity of $h_i$ is equal to capacity of $x_i$, for all $i\in [n]$. The preference of resident is according to the preference of TAs, i.e., if $u_t(x_i) > u_t(x_j)$, then resident $r_t$ will prefer $h_i$ over $h_j$, and the preference of hospital is according to the grades of TA that is if $g_p(x_i) > g_q(x_i)$, then the hospital $h_i$ prefer $r_p$ over $r_q$. Since every TA gives distinct valuation to courses and every course gives distinct grades/marks to TAs all the preferences are complete and strict. Next, we prove the equivalence of two instances.

\begin{restatable}{lemma}{lem:lemexist}

${\cal I}$ is a yes-instance of \mefe if and only if ${\cal J}$ is yes instance of \hrp.
    
\label{lem:lemexist}\end{restatable}

\begin{proof}

In the forward direction, let $\mu$ be a solution to ${\cal I}$. We claim that $\eta = \mu$ is a solution to ${\cal J}$. Since the capacity of each course is met in $\mu$ and capacity of courses and hospitals are equal, $\eta$ satisfies the capacity constraints of ${\cal J}$. For contradiction let us assume that there is a blocking pair in $\eta$, i.e., there exist a pair $h_i$ and $r_j$ such that $h_i$ prefer $r_j$ over at least one resident in the set $\eta(h_i)$ and $r_j$ is either
unassigned or prefers $h_i$ over $\eta(r_i)$ in that case due to our reduction we have $t_p$ such that $t_p \in \mu^{-1} (x_i)$ , $g_{j}(x_i) > g_{p}(x_i)$ and $u_{j} (x_i) > u_{j} (\mu(t_j))$ which means $t_j$ envies $t_p$, which contradicts the fact that $\mu$ is a solution to ${\cal I}$.

In the reverse direction, let $\eta$ be a solution to ${\cal J}$. We claim that $\mu = \eta$ is a solution to ${\cal I}$.

Recall that throughout the paper, we are assuming that the number of TAs are greater than or equal to sum of capacity of courses. Thus, the number of resident is greater than or equal to the sum of capacity of hospital. Since, the preferences are complete, we know that all the hospital are saturated, else we have a blocking pair an under subscribed hospital and an unassigned resident. Hence, the capacity constraint of all the course are met. Since all the courses and TAs value each other positively,
${\sf AvgUtil}(x_i)\geq 1$, for all $i\in [n]$. 
Next, we argue that $\mu$ is merit-based envy-free. Towards the contradiction, suppose that $t_j$ envies $t_p \in \mu^{-1} (x_i)$. Then, $g_{j}(x_i) > g_{p}(x_i)$ and $u_{j} (x_i) > u_{j} (\mu(t_j))$. This implies that $h_i$ is matched to $r_p$ in $\eta$, 
and $r_j$ is either unassigned or prefers $h_i$ over $\eta(r_i)$ which contradicts the fact that $\eta$ is a stable matching for ${\cal J}$.
\end{proof}

As demonstrated in \cite{David2013} Theorem 3.2, the \hrp always has a solution hence \mefe with condition mentioned in Theorem~\ref{thm:thmexist} will always have a solution.
\end{proof}

\end{document}